\documentclass[10pt,a4paper]{article}

\usepackage[utf8]{inputenc}
\usepackage[english]{babel}
\usepackage{hyperref}
\usepackage{amsmath,amsthm,enumerate}
\usepackage{amssymb}
\usepackage{graphicx}
\usepackage{tikz,pgf}
\usepackage[hmargin=2.3cm,vmargin=2.8cm]{geometry}
\setlength{\parskip}{0.3em}
\usepackage{subfig}
\usetikzlibrary{positioning}

\newtheorem{theorem}{Theorem}

\newtheorem{proposition}[theorem]{Proposition}
\newtheorem{lemma}[theorem]{Lemma}
\newtheorem{observation}[theorem]{Observation}

\newtheorem{corollary}[theorem]{Corollary}

\newtheorem{definition}[theorem]{Definition}

\newtheorem{conjecture}[theorem]{Conjecture}

\newtheorem{claim}{Claim}[theorem]

\newcommand{\smallqed}{{\tiny $\left(\Box\right)$}}
\newcommand{\claimproof}{\noindent\emph{Proof of claim.} }

\newcommand{\sharptwo}{^{[\sharp 2]}}
\newcommand{\Omegasharp}{\omega\sharptwo}
\newcommand{\Chisharp}{\chi\sharptwo}

\usetikzlibrary{shapes.geometric}

\begin{document}

		\title{Exact square coloring of subcubic planar graphs}

		\author{Florent Foucaud\footnote{Univ. Bordeaux, CNRS, Bordeaux INP, LaBRI, UMR5800, F-33400 Talence, France.}~\footnote{Univ. Orléans, INSA Centre Val de Loire, LIFO EA 4022, F-45067 Orléans Cedex 2, France.} \and Hervé Hocquard\footnotemark[1] \and Suchismita Mishra\footnote{Department of Mathematics, IIT Madras, Chennai 600036, India.} \and Narayanan Narayanan\footnotemark[3] \and Reza Naserasr\footnote{Université de Paris, CNRS, IRIF, F-75006, Paris, France.} \and Éric Sopena\footnotemark[1] \and Petru Valicov\footnote{Aix-Marseille Univ, Université de Toulon, CNRS, LIS, Marseille, France.}~~\footnote{LIRMM, CNRS, Université de Montpellier, France.}}

\maketitle

\begin{abstract}
  We study the exact square chromatic number of subcubic planar graphs. An exact square coloring of a graph $G$ is a vertex-coloring in which any two vertices at distance exactly~$2$ receive distinct colors. The smallest number of colors used in such a coloring of $G$ is its exact square chromatic number, denoted $\chi\sharptwo(G)$. This notion is related to other types of distance-based colorings, as well as to injective coloring. Indeed, for triangle-free graphs, exact square coloring and injective coloring coincide. We prove tight bounds on special subclasses of planar graphs: subcubic bipartite planar graphs and subcubic $K_4$-minor-free graphs have exact square chromatic number at most~$4$. We then turn our attention to the class of fullerene graphs, which are cubic planar graphs with face sizes~$5$ and~$6$. We characterize fullerene graphs with exact square chromatic number~$3$. Furthermore, supporting a conjecture of  Chen, Hahn, Raspaud and Wang (that all subcubic planar graphs are injectively $5$-colorable) we prove that any induced subgraph of a fullerene graph has exact square chromatic number at most 5. This is done by first proving that a minimum counterexample has to be on at most 80 vertices and then computationally verifying the claim for all such graphs.
\end{abstract}

\section{Introduction}

In this paper, we study exact distance coloring problems for graphs. The celebrated Hadwiger-Nelson problem asking for the ``chromatic number of the plane'' falls into this category of problems: there, one wishes to assign a color to each point of the Euclidean plane, such that two points at distance exactly~$1$ receive distinct colors. A recent breakthrough result on this problem, showing that at least five colors are necessary, appeared in~\cite{deGrey} (it is long known that seven colors suffice~\cite{hadwiger}). Similar problems are studied for other metric spaces, see~\cite{K15}. In the graph setting, for a positive integer~$p$, an \emph{exact $p$-distance coloring} of a graph $G$ is an assignment of colors to the vertices of $G$, such that two vertices at distance exactly~$p$ receive distinct colors~\cite[Section 11.9]{sparsity-book}. The \emph{exact $p$-distance chromatic number} of $G$, denoted $\chi^{[\sharp p]}(G)$, is the smallest number of colors in an exact $p$-distance coloring of $G$ and the study of this parameter is gaining growing attention, see~\cite{BEHJ19,HKQ16,K15,Q17}. Denoted $G^{[\sharp p]}$, the exact distance-$p$-power of $G$, is a graph obtained by taking vertices of $G$ and adding an edge between any two distinct vertices at distance exactly~$p$ in $G$. We have $\chi^{[\sharp p]}(G)=\chi(G^{[\sharp p]})$. Thus, for $p=1$ this notion coincides with the usual chromatic number.
Similarly denote by $\omega^{[\sharp p]}(G)$ and $\alpha^{[\sharp p]}(G)$, respectively, the clique number and the independence number of $G^{[\sharp p]}$, that is $\omega^{[\sharp p]}(G)=\omega(G^{[\sharp p]})$ and $\alpha^{[\sharp p]}(G) = \alpha(G^{[\sharp p]})$.

Exact distance $p$-powers of graphs were first studied by Simi\'c~\cite{simic}, see also~\cite{azimi,products} for more recent works. Exact $p$-distance colorings have first been studied for graphs of bounded expansion. For a fixed graph class $\mathcal C$ of bounded expansion (for example, the class of planar graphs), the exact $p$-distance chromatic number is bounded by an absolute constant for graphs in $\mathcal C$ when $p$ is odd~\cite{HKQ16,sparsity-book}, where the constant is determined by the class $\mathcal{C}$ and $p$, and by a linear function of the maximum degree when $p$ is even~\cite{HKQ16} where the coefficients of the linear function are determined by the class $\mathcal{C}$ and $p$. Exact $p$-distance colorings have been studied in specific graph classes: trees~\cite{BEHJ19}, graphs of bounded tree-width~\cite{HKQ16}, chordal graphs~\cite{Q17}, graphs of bounded genus~\cite{HKQ16,Q17}.

Distance-based colorings of graphs have been extensively studied since the first papers on the subject were published in the 1960s by Kramer and Kramer~\cite{KK69A,KK69B}. In their setting, for a positive integer $p$, a \emph{$p$-distance coloring} of a graph $G$ is an assignment of colors to the vertices of $G$, such that two vertices at distance \emph{at most}~$p$ receive distinct colors. The smallest possible number of colors used in such a coloring is denoted by $\chi^p(G)$ (see~\cite{AM02,HHMR07,T18,W77} for some important results). The problem of $2$-distance coloring subcubic and cubic planar graphs is already far from trivial, and the focus of many research works: see~\cite{BI12,FHS18,H09,HHMR07,T18}. Notably, a special case of a conjecture by Wegner~\cite{W77}, recently solved in~\cite{T18}, states that $\chi^2(G)\leq 7$ for every subcubic planar graph $G$.

The goal of the present paper is to study \emph{exact} $p$-distance colorings with a focus on the special case $p=2$ and for subclasses of planar graphs. Given a graph $G$, the graph $G\sharptwo$ is called the \emph{exact square} of $G$. Similarly, an exact $2$-distance coloring of $G$ is called an \emph{exact square coloring} of $G$, and $\chi\sharptwo(G)$ is the \emph{exact square chromatic number} of $G$.  

What makes the study of exact distance coloring rather more difficult, is that $\chi^{[\sharp p]}$ is not necessarily monotone with respect to taking subgraphs. Indeed $K_n$ is exact $p$-distance 1-colorable for every $p\geq 2$. However, if restricted to exact square coloring ($p=2$), then $\Chisharp$ is monotone with respect to taking induced subgraphs. On the other hand, the parameter $\chi\sharptwo$ is unbounded even for trees, indeed for the star $K_{1,t}$ with $t$ leaves, $(K_{1,t})\sharptwo$ is isomorphic to the disjoint union of the complete graphs $K_t$ and $K_1$, and hence $\chi\sharptwo(K_{1,t})=t$. Thus, it is natural to restrict the study of exact square colorings to graphs with no induced $K_{1,t}$. For triangle-free graphs, this turns out to be the same as bounding the maximum degree. Thus, the class of subcubic graphs will be a focus of this study.

A related notion is the one of \emph{injective coloring}, introduced in~\cite{HKSS02} and well-studied since then, see for example~\cite{BELLMW2017,CHRW2012,LST09}. An injective coloring of a graph $G$ is a vertex-coloring where any two vertices that are joined by a path of length~$2$ receive distinct colors. The smallest number of colors used in an injective coloring of $G$ is its \emph{injective chromatic number}, denoted $\chi_i(G)$.

From these definitions, we have the following inequalities for any graph $G$:
$$\chi\sharptwo(G)\leq \chi_i(G)\leq \chi^2(G).$$

Moreover, whenever $G$ is triangle-free, we have $\chi\sharptwo(G)=\chi_i(G)$, since in $G$, vertices joined by a path of length~$2$ must also be at distance~$2$. Since many of our results are for triangle-free graphs, they can be re-interepreted as results on injective colorings. For a subcubic graph $G$, the maximum degree of $G\sharptwo$ is at most~$6$ and thus $\chi\sharptwo(G)\leq 7$. In fact, the bound $\chi_i(G)\leq 7$ is also true~\cite{HKSS02}. Equality in the latter is shown to hold if and only if $G$ is the Heawood graph~\cite{CHRW2012,HKSS02}, and using the arguments of~\cite{CHRW2012,HKSS02}, the same holds for the exact square chromatic number. Thus, for every subcubic planar graph $G$, we have $\chi_i(G)\leq 6$. A stronger bound was conjectured as follows.

\begin{conjecture}[\cite{CHRW2012}]\label{conj}
If $G$ is a subcubic planar graph, then $\chi_i(G)\leq 5$.  
\end{conjecture}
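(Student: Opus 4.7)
The plan is to attack Conjecture~\ref{conj} via the discharging method applied to a vertex-minimum counterexample $G$. Since injective coloring is monotone under subgraphs (any injective coloring of $G$ restricts to one of any $H\subseteq G$), every proper subgraph of such a minimum $G$ admits an injective $5$-coloring, and our task is to show no such $G$ exists. First I would establish basic structural properties: $G$ is connected; and by simple greedy extension, $G$ has minimum degree~$3$ (a degree-$1$ vertex has at most $2$ vertices joined to it by a path of length $2$, a degree-$2$ vertex at most $4$, both strictly less than $5$). I would then try to rule out certain short-cycle or adjacent-small-face configurations by direct partial-coloring analysis, exploiting that any reducible configuration can be removed, colored injectively with $5$ colors by minimality, and then the coloring extended.

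Next I would set up discharging. Give each vertex the initial charge $\mu(v)=d(v)-4$ and each face $\mu(f)=\ell(f)-4$; by Euler's formula the total charge is $-8$. Since $G$ is cubic, one may equivalently work with face-charges only: $\sum_f (\ell(f)-6) = -12$, so only $3$-, $4$- and $5$-faces carry negative charge. I would design rules sending charge from large faces (and possibly from $6$-faces) toward the deficient small ones in such a way that every face and vertex ends non-negative, contradicting the total $-12$. The role of the reducible configurations identified in the first step is precisely to constrain how clusters of small faces can sit in $G$, giving enough ``room'' for the discharging to balance.

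The main obstacle, and the reason Conjecture~\ref{conj} has been open since~\cite{CHRW2012}, is that no sufficiently rich library of reducible configurations is currently known to push such an argument through in the general planar subcubic setting. The extremal examples for injective $5$-coloring already involve $5$-faces, and the fullerene result of the present paper should be read as handling precisely the most structured sub-case, where only faces of size $5$ and $6$ occur. A full proof almost certainly demands reducibility statements for tightly-packed configurations of $3$-, $4$- and $5$-faces, and this is where I expect the combinatorics to be hardest; a plausible route is to mimic the ``bounded-order minimum counterexample~$+$ computer verification'' paradigm that the authors deploy for the fullerene case, upgrading the structural reductions until the remaining minimum counterexamples have bounded size and can be checked exhaustively.
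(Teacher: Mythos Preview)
This statement is a \emph{conjecture}: the paper does not prove it, and records it as open. The paper's contribution towards it is confined to the fullerene case (Section~\ref{sec:fullerenes-5col}). Your write-up is accordingly not a proof but a research plan, and you say so yourself when you note that ``no sufficiently rich library of reducible configurations is currently known''. There is thus no proof in the paper to compare against.

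Two remarks on the plan itself. First, your degree-$2$ reducibility step has a gap: after deleting a $2$-vertex $v$ and injectively $5$-colouring $G-v$, the two former neighbours $u,w$ of $v$ may receive the \emph{same} colour, since the $2$-path $uvw$ no longer exists in $G-v$; that partial colouring is then invalid for $G$, and colouring $v$ alone does not repair it. This is exactly why the paper, in its fullerene argument, removes several vertices at once (Claim~\ref{claim:distance4_2v}) and works in the hereditary class $\mathcal C$ of induced subgraphs of fullerenes rather than with fullerenes directly. Second, the paper's treatment of the fullerene case follows precisely the ``bounded-order minimum counterexample plus computer verification'' paradigm you describe at the end: reducibility together with the face-charge identity $\sum_f(\ell(f)-6)=-12$ forces a minimum counterexample in $\mathcal C$ to have at most $80$ vertices, after which all such fullerenes are machine-checked. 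Extending this to arbitrary subcubic planar graphs would require reducible configurations controlling $3$-, $4$- and $5$-faces simultaneously, and neither you nor the paper indicates how to obtain them.
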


Note that there is a subcubic planar graph $G$ (with triangles) satisfying $\chi_i(G)=5$~\cite{CHRW2012}, so if true, the conjectured bound would be tight. However, we do not know of any subcubic planar graph with exact square chromatic number~$5$. Conjecture~\ref{conj} was generalized to arbitrary values of the maximum degree~\cite{CHRW2012,LS15}, in the spirit of a well-studied conjecture by Wegner~\cite{W77} on the (non-exact) square chromatic number. Conjecture~\ref{conj} was proved for $K_4$-minor-free graphs~\cite{CHRW2012} (and it follows from~\cite{BKTL04} that $\chi\sharptwo(G)\leq 5$ for any $K_4$-minor-free graph $G$). Furthermore, the bound was improved to four colors for outerplanar subcubic graphs~\cite{MO18}. If $G$ is a subcubic graph with girth at least~$19$ (resp. $10$) then $\chi_i(G)\leq 3$ (resp. $\chi_i(G)\leq 4$)~\cite{LST09}. If $G$ has girth at least~$6$, then $\chi_i(G)\leq 5$~\cite{BELLMW2017}. Similar (but larger) bounds are known for $\chi^2(G)$, see~\cite{BI12,FHS18}.

Exact square colorings also appear in another, more general, context: the one of $L(p,q)$-labelings. Given two non-negative integers $p$ and $q$, an $L(p,q)$-labeling of a graph $G$ is an assignment $\ell$ of non-negative integers to the vertices of $G$, such that for any two vertices $u$ and $v$, we have  $|\ell(u)-\ell(v)|\geq p$ if $u$ and $v$ are adjacent, and $|\ell(u)-\ell(v)|\geq q$ if they are at distance~$2$. The first (and most) studied case is when $p=2$ and $q=1$~\cite{GY92}; see the survey~\cite{Lpqsurvey}. Thus, for any graph $G$, there is a $1$-to-$1$ correspondence between: (a) $L(1,0)$-labelings and classic vertex-colorings of $G$, (b) $L(1,1)$-labelings and (non-exact) square colorings of $G$, and (c) $L(0,1)$-labelings and exact square colorings of $G$. However, it seems that $L(0,1)$-labelings are rarely studied, see~\cite{BKTL04,Lpqsurvey} for a few references.

We first prove some results for specific classes of subcubic planar graphs in Section~\ref{sec:general}, filling some gaps from the literature. Let $G$ be a subcubic graph. We show that $\chi\sharptwo(G)\leq 4$ if $G$ is $K_4$-minor-free. We also show that if $G$ is planar and bipartite, then its exact square is planar, thus $\chi_i(G)=\chi\sharptwo(G)\leq 4$. Moreover these bounds are tight: there exist bipartite subcubic $K_4$-minor-free graphs with exact square chromatic number~$4$. In passing we also show that the exact square of every subcubic bipartite outerplanar graph $G$ is outerplanar, thus $\chi_i(G)=\chi\sharptwo(G)\leq 3$. This is tight since for every tree $T$ with a vertex of degree~$3$, $\chi\sharptwo(T)\geq 3$.

The main focus of our paper (Section~\ref{sec:fullerenes}) is on fullerene graphs, which are cubic planar graphs where every face has length~$5$ or~$6$. They form an important and interesting class of cubic graphs,
whose definition arises from chemistry: indeed they correspond to the structure of fullerene molecules. Their graph-theoretic properties are well-studied, in particular, in relation with colorings. See the survey~\cite{fullerenes-survey}. For instance since it is known that fullerene graphs have girth~$5$, any exact square coloring of a fullerene graph is also an injective coloring, and so our results for this class apply to both settings. We first characterize in Section~\ref{sec:fullerenes-3col} those fullerene graphs with exact square chromatic number~$3$. It turns out that these fullerene graphs are a special class of so-called \emph{$(6,0)$-nanotubes}~\cite{nanotubes,KS08} which we call \emph{drums}. Then, in Section~\ref{sec:fullerenes-5col}, we prove Conjecture~\ref{conj} for fullerene graphs. The proof is computer-assisted: we first consider a potential minimum counterexample, and prove that it cannot have more than 80 vertices. We then use a computer program to check the list of fullerene graphs of order up to 80, which is available online and certified complete~\cite{BD97,BGM12,GM15}.

We conclude in Section~\ref{sec:conclu}.

\section{Generalities}\label{sec:general}

We now recall some facts from the literature and prove a few new results.

\subsection{Preliminaries}

Given a vertex $v$, we denote by $d(v)$ the degree of $v$ and we say that $v$ is a $k$-vertex if $d(v)=k$. For a planar graph with a given planar embedding, we call a $k$-face a face of length $k$.

A \emph{thread} in a graph $G$ is a path all of whose internal vertices are 2-vertices.
We will use the following lemma.

\begin{lemma}[\cite{NRS97}]\label{lem:general_girth_threads}\label{lemm:threads}
Any planar graph $G$ of girth at least $5d+1$ ($d\geq 1$) contains a vertex of degree~$1$ or a thread with at least $d$ internal vertices.
\end{lemma}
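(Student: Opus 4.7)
The plan is to argue by contradiction and reduce the statement to the classical fact that a simple planar graph with minimum degree at least $3$ has girth at most $5$. Assume for contradiction that $G$ is planar of girth at least $5d+1$, contains no vertex of degree~$1$, and every thread has at most $d-1$ internal vertices. Without loss of generality I take $G$ to be connected with at least one edge (isolated vertices can simply be ignored). If $G$ happens to be $2$-regular, that is a single cycle, then its length is at least $5d+1 \geq d+2$, so any subpath with $d$ internal vertices is a thread of the prohibited length, a contradiction; hence I may assume $G$ has a vertex of degree at least $3$.

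The key step is to construct the \emph{suppressed} graph $G'$ obtained from $G$ by replacing each maximal thread (whose endpoints have degree at least $3$ in $G$) by a single edge. Since every such thread has at most $d-1$ internal vertices, its length is at most $d$, so each edge of $G'$ corresponds to a path of length at most $d$ in $G$. In particular, any cycle of length $k$ in $G'$ lifts to a cycle of length at most $kd$ in $G$, and the girth assumption forces $kd \geq 5d+1$, hence $k \geq 5 + 1/d$, and since $k$ is an integer this gives $k \geq 6$. The same lifting argument rules out loops and multi-edges in $G'$: a loop would give a cycle of $G$ of length at most $d$, and a pair of parallel edges would give one of length at most $2d$, both strictly less than $5d+1$ for $d\geq 1$.

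Therefore $G'$ is a simple planar graph of minimum degree at least $3$ and girth at least $6$. Applying the standard Euler-formula bound $|E| \leq \tfrac{g}{g-2}(|V|-2)$ with $g=6$ to $G'$ gives $|E(G')| \leq \tfrac{3}{2}(|V(G')|-2)$, while the minimum-degree condition yields $|E(G')| \geq \tfrac{3}{2}|V(G')|$; comparing these two inequalities produces the desired contradiction. I expect the only delicate point to be verifying that the suppression $G'$ is genuinely simple and has girth at least $6$; this is precisely where the ``$+1$'' in the hypothesis $5d+1$ (rather than $5d$) is used, to push the girth of $G'$ strictly past $5$ and to rule out short multi-cycles coming from parallel threads.
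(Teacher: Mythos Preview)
The paper does not give its own proof of this lemma; it is quoted from \cite{NRS97} and stated without argument. So there is nothing in the present paper to compare against.

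Your argument is correct and is essentially the standard proof of this fact. The suppression of degree-$2$ vertices is the natural move: since every maximal thread has at most $d-1$ internal vertices, each edge of $G'$ represents a path of length at most $d$ in $G$, and your lifting of cycles in $G'$ to cycles in $G$ is clean because the threads are internally disjoint (their internal vertices have degree~$2$ in $G$ and therefore lie in exactly one maximal thread). The resulting simple planar graph $G'$ with $\delta(G')\ge 3$ and girth at least $6$ is then ruled out by the Euler inequality exactly as you write. Your handling of the boundary cases (isolated vertices, the $2$-regular case, loops and parallel edges in $G'$) is also fine; in particular, the observation that the ``$+1$'' in $5d+1$ is what pushes the girth of $G'$ from $5$ to $6$ is exactly the point of the hypothesis.

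The last paragraph of your write-up reads more like a proof sketch than a finished proof (``I expect the only delicate point to be\dots''), but the actual verification of simplicity and girth~$\ge 6$ for $G'$ is already carried out in the preceding paragraph, so nothing is missing.
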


\begin{observation}
  For any graph $G$, $G\sharptwo$ is a subgraph of $\overline{G}$ (the complement of $G$). If $G$ has diameter~$2$, then $G\sharptwo$ is isomorphic to $\overline{G}$.
\end{observation}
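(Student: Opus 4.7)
The plan is to prove both parts directly from the definitions; this observation is essentially a translation of what ``distance exactly~$2$'' versus ``non-adjacent'' means, so I expect no real obstacle, and the main task is just to be careful about degenerate cases (isolated vertices, graphs on one vertex).

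For the first claim, I would argue as follows. Both $G\sharptwo$ and $\overline{G}$ have the same vertex set as $G$, so it suffices to show the edge-set inclusion. Take any edge $uv$ of $G\sharptwo$. By definition of the exact square, $u$ and $v$ are distinct vertices with $d_G(u,v)=2$; in particular they are not adjacent in $G$. Hence $uv$ is an edge of $\overline{G}$, proving $G\sharptwo\subseteq\overline{G}$.

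For the second claim, assume additionally that $G$ has diameter~$2$; in particular $G$ is connected and has at least two vertices (the one-vertex case is vacuous). I would prove the reverse inclusion $\overline{G}\subseteq G\sharptwo$. Let $uv$ be any edge of $\overline{G}$, so that $u\neq v$ and $u,v$ are non-adjacent in $G$, which gives $d_G(u,v)\geq 2$. Since $G$ is connected of diameter~$2$, also $d_G(u,v)\leq 2$, hence $d_G(u,v)=2$, so $uv$ is an edge of $G\sharptwo$. Combined with the first claim this gives $G\sharptwo=\overline{G}$.

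The only subtlety worth mentioning in the writeup is that the diameter hypothesis implicitly rules out disconnectedness: if one only assumed ``every pair of non-adjacent vertices lies within distance~$2$ whenever they are in the same component'', the conclusion could fail because vertices in distinct components would contribute edges to $\overline{G}$ but not to $G\sharptwo$. Since ``diameter $2$'' is standardly taken to forbid infinite distances, this is not an actual issue, and the two-line argument above is the complete proof.
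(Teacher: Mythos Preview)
Your proof is correct and is exactly the natural argument; the paper itself states this as an observation without proof, so there is nothing to compare against beyond noting that your direct verification from the definitions is precisely what the authors implicitly have in mind.
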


The two following results were formulated in the context of injective coloring~\cite{CHRW2012,HKSS02}, but the same arguments hold for exact squares.

\begin{theorem}[\cite{CHRW2012,HKSS02}]\label{thm:projective}
If $G$ is a connected graph of maximum degree $\Delta=k+1$, then $\Omegasharp(G)\leq k^2+k+1$. Moreover, equality may only happen if $G$ is the incidence graph of a projective geometry of order $k$, in which case $G\sharptwo$ is isomorphic to two copies of $K_{k^2+k+1}$.
\end{theorem}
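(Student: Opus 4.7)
The plan is to bound $\Omegasharp(G)$ by a local count of the distance-$2$ neighborhood of any single vertex of a maximum clique, and then to turn the tightness conditions into the rigid structure of a bipartite incidence graph. Let $S$ be a maximum clique in $G\sharptwo$ and fix any $u \in S$. Every vertex of $S \setminus \{u\}$ lies in the set $N_2(u)$ of vertices at distance exactly $2$ from $u$. Counting pairs $(w,v)$ with $w \in N(u)$ and $v \in N(w) \setminus \{u\}$ yields
$$|S|-1 \leq |N_2(u)| \leq \sum_{w \in N(u)}(d(w)-1) \leq (k+1)k,$$
so $\Omegasharp(G) \leq k^2+k+1$.

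For the equality case, assume $|S| = k^2+k+1$. Each inequality above must then be tight for every choice of $u \in S$, which forces: $u$ and every $w \in N(u)$ have degree exactly $k+1$; the sets $N(w) \setminus \{u\}$ for $w \in N(u)$ partition $S \setminus \{u\} = N_2(u)$; and $u$ lies in no triangle. In particular, any two vertices of $S$ share exactly one common neighbor. Set $T := \bigcup_{u \in S} N(u)$. Any two vertices of $S$ are at distance $2$ and hence non-adjacent, so $S$ is independent; moreover the above partition forces $N(w) \subseteq S$ for every $w \in T$, so $T$ is independent as well. Thus $G[S \cup T]$ is bipartite with parts $S,T$ and $(k+1)$-regular; counting edges gives $|T| = k^2+k+1$, and connectedness of $G$ forces $V(G) = S \cup T$.

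The key step is to transfer the ``unique common neighbor'' property from $S$ to $T$. Double-counting gives $\sum_{v \in S}\binom{k+1}{2} = \binom{|T|}{2}$, so the average number of common neighbors of a pair in $T$ is exactly $1$; it suffices to show every pair has \emph{at least} one. Suppose $w,w' \in T$ share no neighbor and write $N(w) = \{v_1,\ldots,v_{k+1}\}$, $N(w') = \{v'_1,\ldots,v'_{k+1}\}$. For each pair $(v_i,v'_j)$ let $w_{ij} \in T$ be its unique common neighbor; by assumption $w_{ij} \notin \{w,w'\}$. Fixing $i$, the vertices $w_{i1},\ldots,w_{i,k+1}$ are neighbors of $v_i$ distinct from $w$, so they lie in a set of size $k$, whence by pigeonhole $w_{ij_1} = w_{ij_2}$ for some $j_1 \neq j_2$. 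This produces a second common neighbor (besides $w'$) of $v'_{j_1}$ and $v'_{j_2}$, contradicting uniqueness on the $S$-side. Therefore every pair in $T$ has exactly one common neighbor in $S$, so $(S,T)$ with the edges of $G$ as incidences is a projective plane of order $k$ and $G$ is its incidence graph.

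Finally, in such an incidence graph any two vertices on the same side share a unique common neighbor and hence are at distance exactly $2$, while a non-incident point--line pair is at distance $3$ (walk along the common line of the given point and any point on the given line). No edge of $G\sharptwo$ thus crosses the bipartition, so $G\sharptwo$ is the disjoint union of two copies of $K_{k^2+k+1}$, as claimed. I expect the pigeonhole argument recovering the dual ``unique common neighbor'' property on $T$ to be the main obstacle, since all the information we have extracted from a maximum clique $S$ is initially one-sided and the bipartite structure is not obviously self-dual.
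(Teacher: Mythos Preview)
The paper does not prove this theorem; it is quoted as a known result from \cite{CHRW2012,HKSS02} (reformulated for exact squares), so there is no ``paper's proof'' to compare against. Your argument is self-contained and correct: the counting bound, the extraction of the bipartite $(k{+}1)$-regular structure on $S\cup T$, and the conclusion that $V(G)=S\cup T$ all go through as written. Your pigeonhole argument for the dual unique-common-neighbor property on $T$ is clean and, combined with the double count showing the average equals~$1$, it yields exactly one common neighbor for every pair in $T$; this recovers the projective plane axioms. The final claim that no edge of $G\sharptwo$ crosses the bipartition follows immediately from the fact that $G$ is bipartite (vertices in different parts are at odd distance), so you need not even exhibit a path of length~$3$.

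One small remark: in the last paragraph you write that a non-incident point--line pair is at distance~$3$ and sketch a path. This is fine but unnecessary for the statement as given: to conclude that $G\sharptwo$ is two disjoint copies of $K_{k^2+k+1}$ you only need that opposite-side pairs are never at distance~$2$ (bipartiteness) and that same-side pairs always are (unique common neighbor), both of which you already have.
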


Using Brook's theorem, the following can be deduced for the special case $k=2$.

\begin{corollary}[\cite{CHRW2012}]\label{prop:6-col}
  Let $G$ be a connected subcubic graph. Then, $\Chisharp(G)\leq 6$, unless $G$ is the Heawood graph (in which case $G\sharptwo$ is isomorphic to two disjoint copies of $K_7$).
\end{corollary}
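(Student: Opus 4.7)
The plan is to combine the degree bound on $G\sharptwo$ with Brooks' theorem and then invoke Theorem~\ref{thm:projective} to identify the extremal case.

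First I would observe that for any vertex $v$ of a subcubic graph $G$, the neighbors at distance exactly $2$ lie in $\bigcup_{u \in N(v)} (N(u)\setminus\{v\})$, and each $u \in N(v)$ contributes at most $d(u)-1 \leq 2$ such vertices. Hence $\Delta(G\sharptwo) \leq 3 \cdot 2 = 6$. Applying Brooks' theorem to each connected component $H$ of $G\sharptwo$ (which is itself a connected graph of maximum degree at most $6$), we get $\chi(H) \leq 6$ unless $H$ is either an odd cycle or a complete graph $K_{\Delta(H)+1}$. Odd cycles have chromatic number $3$, so the only obstruction to $\chi\sharptwo(G) \leq 6$ is that some component of $G\sharptwo$ is a copy of $K_7$.

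Next I would argue that such a $K_7$ component forces $G$ to be the Heawood graph. Indeed, a $K_7$ in $G\sharptwo$ gives $\omega\sharptwo(G) \geq 7$, while Theorem~\ref{thm:projective} applied with $k = 2$ yields $\omega\sharptwo(G) \leq k^2+k+1 = 7$. So equality holds, which by the same theorem forces $G$ to be the incidence graph of the projective plane of order~$2$ (the Fano plane), i.e., the Heawood graph, and moreover $G\sharptwo$ is exactly two disjoint copies of $K_7$.

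Putting these two steps together gives the dichotomy stated in the corollary. The only nontrivial ingredient is Brooks' theorem together with the fact, already given by Theorem~\ref{thm:projective}, that the clique-number obstruction for $k=2$ is realized only by the Heawood graph; given this, there is no real obstacle, and the argument is essentially a two-line application of Brooks once one records $\Delta(G\sharptwo) \leq 6$.
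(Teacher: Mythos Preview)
Your argument is correct and is precisely the approach the paper indicates: bound $\Delta(G\sharptwo)\le 6$, apply Brooks' theorem so that the only obstruction is a $K_7$ component, and then use Theorem~\ref{thm:projective} with $k=2$ to identify $G$ as the Heawood graph. This matches the paper's one-line justification ``Using Brooks' theorem, the following can be deduced for the special case $k=2$'' with all details filled in.
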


It is not difficult to find general subcubic graphs whose exact square contains a $5$-clique (e.g. the Petersen graph\footnote{If $G$ has diameter $2$, then $G\sharptwo$ is isomorphic to $\overline{G}$ and if $G$ has girth $5$, then $\alpha(G\sharptwo)=2$.}) or a $6$-clique (e.g. the triplex graph, a cubic graph of girth 5 and order 12, depicted in Figure~\ref{subfig:triplex}).

Another interesting graph $G$ with $\Chisharp(G)=6$ is the subcubic bipartite graph of order $22$ built in a similar fashion as the Heawood graph, see Figure~\ref{subfig:bip_22_heawood_style}. This graph is also the incidence graph of the $11_3$ geometric configuration number $31$ described in~\cite{PD84}. Each connected componenent of its exact square is isomorphic to the (non-exact) $3$-distance power of the 11-cycle $C_{11}$. It has maximum degree~$6$, clique number~$4$, chromatic number~$6$ and is thus an extremal example for Reed's conjecture stating that $\chi(G)\leq \left\lceil\frac{\omega(G)+\Delta(G)+1}{2}\right\rceil$, for every graph $G$~\cite{reed}.

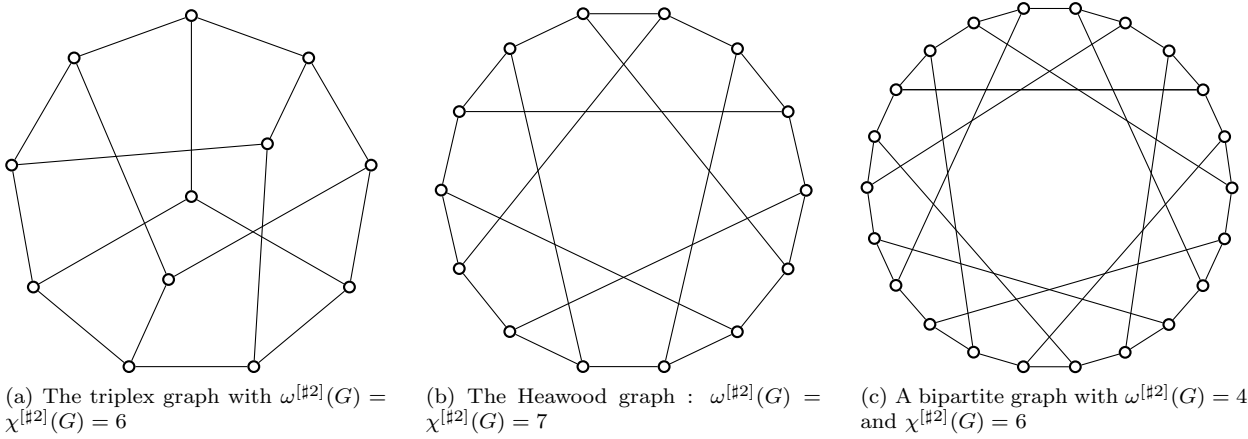
\begin{figure}[!ht]
\centering
\subfloat[The triplex graph with $\Omegasharp(G)=\Chisharp(G)=6$]{\label{subfig:triplex}

\begin{tikzpicture}[scale=1]
\tikzstyle{vertex}=[circle,draw,thick,fill=white,inner sep=0.5mm]

	\node[vertex] (o) at (0,0) {};
	\node[vertex] (o') at (1,0.7) {};
	\node[vertex] (o'') at (-0.3,-1.1) {};

	\foreach \i in {1,...,9}{
	    \node[vertex] (u\i) at (40*\i+10:2.4) {};
	    	}
	 
	\draw (u1) -- (u2) -- (u3) -- (u4) -- (u5) -- (u6) -- (u7) -- (u8) -- (u9) -- (u1);   	
	
	\draw (u2) -- (o) -- (u5);
	\draw (o) -- (u8);
	
	\draw (u1) -- (o') -- (u4);
	\draw (o') -- (u7);
	
	\draw (u3) -- (o'') -- (u6);
	\draw (o'') -- (u9);
	    	
\end{tikzpicture}
}\hspace*{0.4cm}
\subfloat[The Heawood graph : $\Omegasharp(G)=\Chisharp(G)=7$]{\label{subfig:heawood}
\begin{tikzpicture}[scale=1]
\tikzstyle{vertex}=[circle,draw,thick,fill=white,inner sep=0.5mm]

	\foreach \i in {1,...,14}{
	    \node[vertex] (u\i) at (25.71*\i:2.4) {};
	}
	
	\draw (u1) -- (u2) -- (u3) -- (u4) -- (u5) -- (u6) -- (u7) -- (u8) -- (u9) -- (u10) -- (u11) -- (u12) -- (u13) -- (u14) -- (u1);	
	 
	\draw (u1) -- (u6);
	\draw (u2) -- (u11);
	\draw (u3) -- (u8);
	\draw (u4) -- (u13);
	\draw (u5) -- (u10);
	\draw (u7) -- (u12);
	\draw (u9) -- (u14);
			    	
\end{tikzpicture}
}\hspace*{0.4cm}
\subfloat[A bipartite graph with $\Omegasharp(G)=4$ and $\Chisharp(G)=6$]{\label{subfig:bip_22_heawood_style}

\begin{tikzpicture}[scale=1]
\tikzstyle{vertex}=[circle,draw,thick,fill=white,inner sep=0.5mm]

	\foreach \i in {1,...,22}{
	    \node[vertex] (u\i) at (16.36*\i:2.4) {};
	}
	
	\draw (u1) -- (u2) -- (u3) -- (u4) -- (u5) -- (u6) -- (u7) -- (u8) -- (u9) -- (u10) -- (u11) -- (u12) -- (u13) -- (u14) -- (u15) -- (u16) -- (u17) -- (u18) -- (u19) -- (u20) -- (u21) -- (u22) -- (u1);	
	 
	\draw (u1) -- (u16);
	\draw (u2) -- (u9);
	\draw (u3) -- (u18);
	\draw (u4) -- (u11);
	\draw (u5) -- (u20);
	\draw (u6) -- (u13);
	\draw (u7) -- (u22);
	\draw (u8) -- (u15);
	\draw (u9) -- (u2);
	\draw (u10) -- (u17);
	\draw (u12) -- (u19);
	\draw (u14) -- (u21);

\end{tikzpicture}
}
\caption{Some extremal examples of cubic graphs having high exact square chromatic number.}
\label{fig:tightcubicgraphs}
\end{figure}

The following lemma is interesting to observe.

\begin{lemma}\label{lemm:IS}
Let $G$ be a subcubic planar graph and let $I$ be an independent set of $G$. Then, $G\sharptwo[I]$ is planar.
\end{lemma}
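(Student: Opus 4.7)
The plan is to exhibit a planar drawing of $G\sharptwo[I]$ by perturbing a planar embedding of $G$. The key observation is that since $I$ is independent, every edge $uw$ of $G\sharptwo[I]$ arises from a common neighbor $v\in V(G)\setminus I$, and the set of edges contributed through a fixed $v$ forms a clique on $N(v)\cap I$. Because $G$ is subcubic, this clique has at most three vertices; in particular, it is at most a triangle $K_3$, which is planar. The heart of the proof is to realise these triangles locally around each $v\notin I$ without creating crossings.

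First I would fix a planar embedding of $G$ and, for every vertex $w$, pick an open disk $D_w$ around $w$ so small that the $D_w$'s are pairwise disjoint and each edge of $G$ crosses each boundary $\partial D_w$ in at most one point. For each $v\in V(G)\setminus I$, let $N(v)\cap I=\{u_{1},\dots,u_{m}\}$ with $m\le 3$, and let $p_j$ be the point where $vu_j$ crosses $\partial D_v$. Inside $D_v$ I would erase $v$ together with all of its incident edge-stubs and draw the $\binom{m}{2}\le 3$ chords between the $p_j$'s, yielding a planar drawing of $K_m$ inside $D_v$. Each chord $p_jp_k$ is then prolonged to a curve from $u_j$ to $u_k$ by running it along (a parallel copy of) the former half-edges of $vu_j$ and $vu_k$ outside $D_v$.

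The main obstacle is the bookkeeping of these prolongations: each half-edge of a deleted $vu_j$ may be traversed by up to $m-1\le 2$ new curves (one for each other vertex of $N(v)\cap I$), so I must thicken each half-edge outside $D_v$ into a bundle of at most two pairwise non-crossing parallel curves and check that these bundles can be assembled consistently at the endpoint $u_j\in I$. Around $u\in I$, the bundles approach from at most three directions (one per $G$-neighbor of $u$), each contributing at most two curves; since these directions sit in disjoint angular sectors of the original embedding at $u$, the at most six incoming curves fan into $u$ without crossings. Inside each $D_v$ with $v\in V(G)\setminus I$ the chords form a triangle (or smaller) and are internally disjoint. Hence no crossings are introduced, and the resulting drawing has vertex set $I$ with edge set exactly the pairs $\{u,w\}\subseteq I$ sharing a common neighbor in $V(G)\setminus I$, i.e.\ $E(G\sharptwo[I])$, establishing planarity.
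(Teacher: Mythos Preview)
Your proposal is correct and follows essentially the same approach as the paper: both argue that, starting from a plane embedding of $G$, one can locally replace each vertex $v\notin I$ by the clique on $N(v)\cap I$ (a $Y$--$\Delta$ or simpler move, since $|N(v)\cap I|\le 3$), and that this preserves planarity. The paper states this in one line (``delete $x$ and join all its neighbors in $I$; as $x$ has at most three neighbors, the new graph is planar''), whereas you spell out the topological surgery with disks and bundles; the only detail you omit is that two vertices of $I$ may have several common neighbors outside $I$, so the construction may yield parallel edges, but removing duplicates of course preserves planarity.
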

\begin{proof}
Let $G'$ be the graph built on $I$ as follows: for each vertex $x \in V(G)\setminus I$, delete $x$ and join all its neighbors in $I$. Observe that, as $x$ has at most three neighbors in $G$, and by considering a plane embedding of $G$, the new graph $G'$ is planar. However this graph may have multiedges (when two vertices of $I$ are in a 4-cycle). By removing all but one edge from any group of parallel edges we get $G\sharptwo[I]$.
\end{proof}

We deduce the following from Lemma~\ref{lemm:IS}.

\begin{proposition}\label{prop:OmegasharpPlanar}
For any subcubic planar graph $G$, we have $\Omegasharp(G)\leq 4$.
\end{proposition}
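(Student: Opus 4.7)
The plan is to reduce the statement directly to Lemma~\ref{lemm:IS} together with Kuratowski-type planarity. The first step is to observe that any clique of $G\sharptwo$ is an independent set of $G$: if $K$ is such a clique then any two vertices of $K$ are at distance exactly $2$ in $G$, and in particular they are non-adjacent in $G$.

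Given this observation, I can apply Lemma~\ref{lemm:IS} to $I:=K$, which tells me that $G\sharptwo[K]$ is planar. But since $K$ is a clique in $G\sharptwo$, the induced subgraph $G\sharptwo[K]$ is isomorphic to the complete graph $K_{|K|}$. Because $K_5$ is not planar, this forces $|K|\leq 4$, which is the desired bound $\Omegasharp(G)\leq 4$.

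There is essentially no serious obstacle here: the work was already done in Lemma~\ref{lemm:IS}, and the only new ingredient is the elementary remark that cliques in the exact square are independent in the base graph (which fails for the ordinary square, explaining why the same argument does not collapse the distance-$\leq 2$ clique number). The fact that subcubic planar graphs can actually realize $\Omegasharp(G)=4$ is not asserted here, but one can note in passing that tightness will be relevant later when comparing with the $\chi\sharptwo\le 4$ bounds announced for bipartite planar and $K_4$-minor-free subcubic graphs.
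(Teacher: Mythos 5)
Your proof is correct and follows exactly the paper's argument: a clique of $G\sharptwo$ is independent in $G$, so Lemma~\ref{lemm:IS} makes $G\sharptwo[K]$ a planar complete graph, forcing $|K|\leq 4$. No differences worth noting.
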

\begin{proof}
  Let $K$ be a clique in $G\sharptwo$. In $G$, $K$ is an independent set. Thus, by Lemma~\ref{lemm:IS}, $G\sharptwo[K]$ is a planar complete graph, which implies that $|K|\leq 4$, as desired.
\end{proof}

The bound of Proposition~\ref{prop:OmegasharpPlanar} is sharp, as we see next.

\begin{proposition}\label{prop:OmegasharpSP}
  There exist bipartite subcubic $K_4$-minor-free graphs $G$ of girth $6$ with $\Omegasharp(G)=4$.
\end{proposition}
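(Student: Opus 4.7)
The plan is to exhibit an explicit eight-vertex construction. Let $G$ be obtained by taking two vertices $a$ and $b$ and joining them by three internally disjoint paths of length~$3$; label the internal vertices of the $i$-th such path as $v_i$ and $c_i$ in order from $a$, so the paths are $a$--$v_i$--$c_i$--$b$ for $i\in\{1,2,3\}$. Setting $v_4:=b$, I claim that $\{v_1,v_2,v_3,v_4\}$ is a $4$-clique in $G\sharptwo$ and that $G$ has all the required structural properties.

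The structural checks are quick. $G$ is planar (draw the three paths as nested arcs), subcubic (only $a$ and $v_4=b$ have degree~$3$, the other six vertices being $2$-vertices), bipartite with parts $\{a,c_1,c_2,c_3\}$ and $\{b,v_1,v_2,v_3\}$, and $K_4$-minor-free because it is the parallel composition of three series-composed paths, hence series-parallel. Since every cycle must traverse at least two of the three $a$--$b$ paths, and any two of them glued together form a $6$-cycle, the girth of $G$ is exactly~$6$.

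For the clique, note that for any distinct $i,j\in\{1,2,3\}$ the vertices $v_i$ and $v_j$ share the common neighbor $a$, and for each $i\in\{1,2,3\}$ the vertices $v_i$ and $v_4$ share the common neighbor $c_i$. As the four witness vertices lie in the same part of the bipartition, no two are adjacent, so every pair is at distance exactly~$2$ in $G$. Combining with Proposition~\ref{prop:OmegasharpPlanar} gives $\Omegasharp(G)=4$.

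The delicate point---and the reason a naive construction does not work---is the tension between $K_4$-minor-freeness and having a $4$-clique in $G\sharptwo$. If each of the $\binom{4}{2}=6$ pairs of witness vertices required a \emph{private} common neighbor, the resulting graph would be a subdivision of $K_4$, which has a $K_4$-minor. The trick is therefore to let the three pairs inside $\{v_1,v_2,v_3\}$ share a single common neighbor $a$ (whose degree then saturates at~$3$), while the three pairs involving $v_4$ still need private connectors $c_1,c_2,c_3$, because duplicating any connector would create a $4$-cycle and destroy the girth-$6$ hypothesis. This forced analysis also shows that $8$ vertices is optimal.
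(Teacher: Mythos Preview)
Your proof is correct and uses precisely the same graph as the paper: two vertices joined by three internally disjoint paths of length~$3$. The paper's proof is a two-line sketch that simply states this graph and observes that $G\sharptwo$ contains two disjoint copies of $K_4$; you have supplied the verifications (bipartite, subcubic, series-parallel, girth~$6$, and the explicit clique) that the paper leaves to the reader, and you close with the upper bound from Proposition~\ref{prop:OmegasharpPlanar} to pin down equality.
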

\begin{proof}
Consider the graph $G$ consisting of two vertices connected with three vertex-disjoint paths of length 3. Then $G\sharptwo$ contains two disjoint copies of $K_4$.
\end{proof}

\subsection{Bipartite planar graphs}

We deduce the following from Lemma~\ref{lemm:IS}.

\begin{theorem}\label{thm:bipartite}
Let $G$ be a connected bipartite planar subcubic graph. Then, $G\sharptwo$ consists of two planar connected components, each induced by one part of the bipartition of $G$. Moreover, if $G$ is outerplanar, then these two components are outerplanar.
\end{theorem}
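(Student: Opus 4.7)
The plan is to derive all three claims from Lemma~\ref{lemm:IS} combined with the bipartite structure of $G$. First I will observe that because $G$ is bipartite with parts $A$ and $B$, paths of length $2$ preserve the partition, so any two vertices at distance exactly~$2$ in $G$ must lie in the same part. This immediately implies that $G\sharptwo$ has no edge between $A$ and $B$, so it decomposes as the disjoint union $G\sharptwo[A]\cup G\sharptwo[B]$; planarity of each piece will then be immediate from Lemma~\ref{lemm:IS} applied with $I=A$ and $I=B$.

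Next I will check that each piece is connected. Given any two vertices $u,v$ in the same part, say $A$, I will pick a shortest $u$--$v$ path in $G$, necessarily of even length $2m$, writing it as $u=u_0,u_1,\dots,u_{2m}=v$. The even-indexed vertices $u_0,u_2,\dots,u_{2m}$ all lie in $A$, and for each consecutive pair $u_{2i},u_{2i+2}$ the vertex $u_{2i+1}$ is a common neighbor in $G$; these two vertices are distinct by minimality of the path, hence at distance exactly~$2$ in $G$, which yields a walk in $G\sharptwo[A]$ from $u$ to $v$.

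For the outerplanar refinement my strategy is to execute the construction from the proof of Lemma~\ref{lemm:IS} in an embedded, outer-face-preserving way. Starting from an outerplanar embedding of $G$, I will process each $x\in B$ as follows: its (at most three) incident edges divide a small disc around $x$ into sectors, and since $x$ lies on the outer face at least one of these sectors is contained in the outer face. I will delete $x$ and, for every pair of $A$-neighbors $y_i,y_j$ of $x$, draw the new edge $y_iy_j$ inside the sector bounded by the former edges $xy_i$ and $xy_j$. Distinct new edges will then use distinct sectors, so they will not cross one another or any other edge of the embedding; moreover the $A$-neighbors of $x$ remain incident to the outer face because the new edge drawn in the outer sector simply replaces the piece of outer boundary that previously passed through $x$. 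Iterating this over all $x\in B$ and deleting any parallel duplicates (which arise exactly when two vertices of $A$ lie in a common $4$-cycle of $G$) will produce an outerplanar embedding of $G\sharptwo[A]$; the same argument with the roles of $A$ and $B$ swapped handles $G\sharptwo[B]$.

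The step I expect to require the most care is the outerplanar one, and more specifically the case where $x\in B$ is a cut vertex, so that several sectors at $x$ border the outer face and the outer-boundary walk visits $x$ several times. I expect the verification to remain routine because each local surgery is confined to a small disc around a single vertex of $B$, and these discs can be chosen pairwise disjoint so that the operations do not interfere; but this is the place where the argument will need to be spelled out carefully.
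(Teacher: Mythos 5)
Your proof is correct, and while the planarity and connectivity parts coincide with the paper's (both rest on Lemma~\ref{lemm:IS} and on walking along the even-indexed vertices of a path), your treatment of the outerplanar case is organized quite differently. The paper first reduces to biconnected blocks (handling bridges separately), observes that a bridgeless connected subcubic graph is $2$-connected, and then exhibits an explicit Hamiltonian cycle $C_X=v_1v_3\ldots v_{2k-1}$ of the component $G\sharptwo[X]$ coming from the facial cycle of the outerplanar embedding; outerplanarity is then certified by routing each remaining edge of $G\sharptwo[X]$ along a chord, with $C_X$ staying on the outer face. You instead perform the surgery of Lemma~\ref{lemm:IS} on the whole graph at once, in an embedded way, and must then verify that every $A$-vertex retains outer-face incidence after all the local modifications, including at cut vertices of $G$. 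Both routes share the same geometric core (the new edge $y_iy_j$ is drawn alongside the path $y_ixy_j$, and subcubicity is what guarantees one new edge per sector and hence no crossings); the paper's Hamiltonian-cycle certificate buys a cleaner, essentially vertex-by-vertex-free verification of outerplanarity, while yours avoids the block decomposition entirely and is closer in spirit to the proof of Lemma~\ref{lemm:IS}. One small imprecision to fix when you write this up: the surgeries are not ``confined to a small disc around a single vertex of $B$'' and the discs cannot be chosen pairwise disjoint, since each new edge must reach its two $A$-endpoints; the correct statement is that each surgery is confined to a thin neighborhood of the star at $x$ (the union of $x$ with its incident edges), and two such neighborhoods meet only at shared $A$-vertices, where the new edges hug distinct old edges and therefore do not interfere. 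With that adjustment the argument, including the cut-vertex case, goes through: every new edge incident to a vertex $v\in A$ approaches $v$ hugging one of the old edges at $v$, so it cannot separate $v$ from the sector of the outer face it previously touched.
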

\begin{proof}

From Lemma~\ref{lemm:IS} it follows that each part of $G$ induces a planar graph in $G\sharptwo$. That each part induces a connected subgraph of $G\sharptwo$ is a consequence of $G$ being connected. Indeed if $x$ and $y$ are in the same part of $G$, an $x-y$ path $P$ in $G$ will induce an $x-y$ path in $G\sharptwo$ based on the vertices of $P$ that are in the same part as $x$ and $y$. 
  
For the second part, suppose that $G$ is outerplanar with bipartition $(X,Y)$. Recall that $G\sharptwo$ consists of two connected components: the one induced by $X$ and the one induced by $Y$. It suffices to prove that each biconnected component of $G\sharptwo$ is outerplanar. 

If $G$ has a bridge, then the two vertices of this bridge are cut-vertices in $G\sharptwo$. Since they belong to two different connected components of $G\sharptwo$, each component contains a cut-vertex. Since we consider biconnected components of $G\sharptwo$, we may assume in the following that $G$ has no bridge.

 Since $G$ is cubic, we conclude that $G$ is biconnected. Consider an outerplanar embedding of $G$ and let $C=v_1v_2v_3\ldots v_{2k}$ be the facial cycle of this embedding. Then $X=\{v_1,v_3, \ldots, v_{2k-1}\}$ and $Y=\{v_2,v_4, \ldots, v_{2k}\}$ is the bipartition of $G$. As $G$ is bipartite, $v_i$ is not adjacent to $v_{i+2}$ (where the sum in the indices is taken $\pmod 2$). Therefore, $C_X=v_1v_3 \ldots v_{2k-1}$ is a Hamiltonian cycle of the component of $G\sharptwo$ induced by $X$ and $C_Y=v_2v_4 \ldots v_{2k}$ is a Hamiltonian cycle of the component of $G\sharptwo$ induced by $Y$. We show that $C_X$ can be the outer cycle of a planar embedding of the two components of $G\sharptwo$ induced by $X$ and $Y$. (The claim for $C_Y$ is analogous.) The edges of $C_X$ can be presented on the outer face of $G$ following the cyclic order of $C$. If a vertex $v_i\in X$ is adjacent to a vertex $v_j$, $j\not\in \{i-1, i+1\}$, (i.e., $v_iv_j$ is a chord of $C$), then we draw two edges along the paths $v_iv_jv_{j-1}$ and $v_iv_jv_{j+1}$. We observe that, since $G$ is bipartite, $v_j\in Y$ and this operation only applies to $v_i$. Furthermore, since $G$ is subcubic, these newly drawn edges would not cross. This embedding then is clearly an outerplanar embedding, with $C_X$ being its outer face. It remains to show that it has all the edges of $G\sharptwo$ induced by the $X$ part of $G$. Let $v_k$ and $v_l$ be two vertices of $X$ with a common neighbor $v_p$ (thus $v_p\in Y$). Since $G$ is cubic, $v_p$ is either a neighbor of $v_k$ on $C$ or a neighbor of $v_l$ on $C$ (otherwise $v_p$ is of degree at least $4$). If it is a common neighbor of both of them, then $v_k$ and $v_l$ are consecutive vertices of $C_X$ and adjacent in our embedding. Otherwise, without loss of generality, we may assume that $v_p$ is next to $v_k$ in $C$ and that $v_pv_l$ is a chord of $C$. Then we have drawn an edge along the path $v_lv_pv_k$.  
  \end{proof}

Theorem~\ref{thm:bipartite}, together with the Four-Color Theorem, implies that the exact square of a bipartite subcubic planar graph $G$ is 4-colorable. If $G$ is, furthermore, assumed to be outerplanar, then its exact square, being outerplanar, is 3-colorable.

\subsection{$K_4$-minor-free graphs}\label{sec:SP}
A \emph{nested ear decomposition} of a graph $G$ is a partition of the edges of $G$ into $E_1,\ldots,E_k$ (the ears), such that the following conditions hold.
\begin{itemize}
	\item[(i)] For every ear $E_i$, only the two end-vertices might be the same (thus $E_i$ induces a path or a cycle).
	\item[(ii)] For every ear $E_i$ with $i>1$, there is an ear $E_j$ with $j<i$ such that the two endpoints of $E_i$ belong to $E_j$ (we say that $E_i$ is \emph{nested} in $E_j$, and the sub-path of $E_j$ between the two endpoints of $E_i$ is the \emph{nest interval} of $E_i$).
	\item[(iii)] Apart from the endpoints of $E_i$, for every $j<i$, no other vertex of $E_i$ belongs to $E_j$.
	\item[(iv)] If two ears $E_i$ and $E_{i'}$ are both nested in $E_j$, then their nest intervals are either disjoint or one is contained in the other.
\end{itemize}

This concept was defined in~\cite{eardecSP}. It is known that in any $K_4$-minor-free graph, every biconnected component has a nested ear decomposition. Indeed, a graph is $K_4$-minor-free if and only if every biconnected component is two-terminal series-parallel~\cite{bod}, and every biconnected two-terminal series-parallel graph has a nested ear decomposition~\cite{eardecSP}.

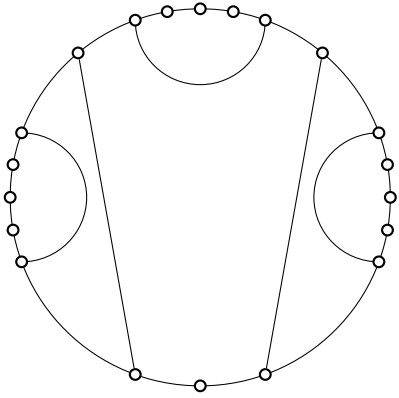
\begin{figure}[!ht]
\centering
	\begin{tikzpicture}[scale=0.25]
	\tikzstyle{vertex}=[circle,draw,thick,fill=white,inner sep=0.5mm]
	
	\draw (0,0) circle(10);
	\draw (-3.42,9.397) arc(-180:0:3.42);
	\draw (-9.397,-3.42) arc(-90:90:3.42);
	\draw (9.397,-3.42) arc(-90:-270:3.42);
	
	\foreach \i in {1,...,5}{
	    \node[vertex] (u\i) at (10*\i+60:10) {};
	}
	
	\node[vertex] (u0) at (50:10) {};
	\node[vertex] (u6) at (130:10) {};

	\foreach \i in {7,...,11}{
	    \node[vertex] (u\i) at (10*\i+90:10) {};
	}

	\node[vertex] (u12) at (250:10) {};
	\node[vertex] (u13) at (270:10) {};
	\node[vertex] (u14) at (290:10) {};
	
	\draw (u6) -- (u12);
	\draw (u0) -- (u14);
	
	\foreach \i in {15,...,19}{
	    \node[vertex] (u\i) at (10*\i+190:10) {};
	}
	
	\end{tikzpicture}
\caption{An outerplanar graph with $\Delta=3$ and $\Chisharp(G)=4$~\cite{MO18}.}\label{fig:outerplanar}
\end{figure}

\begin{theorem}\label{thm:SP-4-col}
If $G$ is a subcubic $K_4$-minor-free graph, then $\Chisharp(G)\leq 4$. Furthermore, this bound is tight on
subcubic bipartite $K_4$-minor free graphs of girth at least~6, and on outerplanar graphs.
\end{theorem}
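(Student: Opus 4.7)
Both lower bounds are already available from earlier material. Proposition~\ref{prop:OmegasharpSP} provides a bipartite, subcubic, $K_4$-minor-free graph of girth $6$ (two vertices joined by three internally disjoint paths of length $3$) whose exact square contains $K_4$, so $\chi\sharptwo\geq 4$ there. For outerplanar graphs, the graph drawn in Figure~\ref{fig:outerplanar}, borrowed from~\cite{MO18}, satisfies $\chi\sharptwo=4$.

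\textbf{Upper-bound strategy.} I plan to prove $\chi\sharptwo(G)\leq 4$ by induction on $|V(G)|$ for $G$ subcubic and $K_4$-minor-free. The disconnected and cut-vertex cases reduce to the $2$-connected case by standard block-by-block arguments: the $4$-colourings of the blocks given by induction can be glued across each cut vertex $v$ by permuting colours, the key point being that $d_G(v)\leq 3$ confines the constraints through $v$ to a bounded local neighbourhood that always fits in $4$ colours. When $G$ is $2$-connected, I fix a nested ear decomposition $E_1,\ldots,E_k$ of $G$ (which exists by~\cite{eardecSP}) and consider the last ear $E_k$, with endpoints $x,y$ on some earlier ear $E_j$. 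By maximality of $k$, no ear is nested in $E_k$ and no later ear has an endpoint interior to $E_k$, so every internal vertex of $E_k$ has degree exactly $2$ in $G$. I form $G'$ by deleting the internal vertices of $E_k$ when the length $\ell$ of $E_k$ is at least $2$, and by deleting the edge $xy$ when $\ell=1$. In both cases $G'$ is strictly smaller, still subcubic, $K_4$-minor-free, and $2$-connected (the $x$--$y$ sub-path in the nest interval of $E_k$ inside $E_j$ survives the deletion), so induction yields a valid $4$-colouring $c$ of $G'\sharptwo$.

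\textbf{Extension and expected obstacle.} The remaining work is to extend $c$ to a valid $4$-colouring of $G\sharptwo$. Two kinds of new constraints arise: the internal vertices of $E_k$ must be coloured (each has at most $2\cdot 2=4$ distance-$2$ neighbours in $G$, since its two neighbours are themselves of degree at most $3$), and new distance-$2$ pairs among already coloured vertices are created by restoring $E_k$ (the pair $(x,y)$ appears when $\ell=2$, and a $G'$-neighbour of $x$ or $y$ becomes at distance $2$ from the first/last internal vertex of $E_k$). When $\ell\geq 4$ a greedy sweep along $E_k$, starting near one endpoint, succeeds directly. The main obstacle, which I expect to require a careful case analysis, is the short-ear case $\ell\in\{1,2,3\}$: the nest interval in $E_j$ together with $E_k$ then forms a small theta subgraph around $x,y$, and the already-assigned colours of the $G'$-neighbours of $x$ and $y$ can conceivably use all four colours. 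In each such sub-case I would argue directly by local analysis, possibly invoking a local recolouring of $c$ (a Kempe swap between two colour classes in a component of $G'\sharptwo$) to free a colour at $x$ or $y$ before extension. The subcubic hypothesis is essential here: it caps at $3$ the number of used colours at any critical configuration of neighbourhoods, leaving at least one free colour to extend with, and thereby closes the induction.
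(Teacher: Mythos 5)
There is a genuine gap in the upper-bound argument, and it sits exactly where you flag it: the ``short-ear'' extension step is not a routine verification but the crux of the proof, and your reduction does not set it up correctly. Two concrete problems. First, deleting the internal vertices of an ear (or the edge $xy$ when $\ell=1$) does not preserve exact distance~$2$ among the surviving vertices: a pair at distance exactly~$2$ in $G$ via a path through the deleted ear may be at distance~$3$ or more in $G'$, so the inductive colouring of $G'\sharptwo$ need not even be a valid partial exact square colouring of $G\sharptwo$ restricted to $V(G')$. (For instance, with $\ell=2$ and a nest interval of length at least~$3$, the endpoints $x,y$ are at distance~$2$ in $G$ but not in $G'$, and may receive the same colour.) Second, choosing ``the last ear'' gives you no control over its nest interval: that interval can be long and carry vertices of degree~$3$ where other ears attach, so the local configuration around $x$ and $y$ is not bounded, and the proposed Kempe swaps in $G'\sharptwo$ (a graph of maximum degree up to~$6$, where two-colour Kempe chains give no usable guarantee) are speculative rather than an argument.

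The paper closes exactly this gap by a different selection inside the nested ear decomposition. It takes a minimal counterexample, first proves as reducible configurations that there is no $2$-vertex on a $3$- or $4$-cycle and no pair of adjacent $2$-vertices --- these deletions are chosen precisely so that distances between surviving vertices are unchanged, which is what makes the induction legitimate --- and then iteratively picks, starting from $E_1$, an ear nested on the current one with \emph{smallest} nest interval. The terminal ear of this subsequence has all internal vertices of degree~$2$ \emph{and} all internal vertices of its nest of degree~$2$, so the two reducible configurations force the ear plus its nest to be a cycle of length at most~$4$ carrying a $2$-vertex, a contradiction. If you want to salvage your route, you should replace ``last ear'' by this minimal-nest selection and replace ``delete the whole ear and extend'' by distance-preserving local reductions; as written, the induction does not close. (Your tightness paragraph is fine and matches the paper.)
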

\begin{proof}
	
Let $G$ be a minimum counterexample, that is, $G$ is subcubic and $K_4$-minor-free, $\chi\sharptwo(G)>4$ and the exact square of every smaller subcubic $K_4$-minor-free graph is $4$-colorable.

\begin{claim}\label{clm:2-connected-K4minorfree}
$G$ is $2$-connected.
\end{claim}
\claimproof
Suppose that $G$ has a cut-vertex $v$. Since $G$ is subcubic, then it has a bridge $xy$. The case where one of the two vertices (say $x$) has degree $1$ is easy, since any exact square 4-coloring of $G-x$ would extend to $G$.

So assume that both $x$ and $y$ are vertices of degree at least 2. We remove the edge $xy$ from $G$: this creates two components. We add a degree 1 neighbor $y'$ to $x$ and a degree 1 neighbor $x'$ to $y$. Let $G_x$ and $G_y$ be the components containing $x$ and $y$, respectively. Now, let $\phi$ be a $k$-coloring of $G_x\sharptwo$ and $G_y\sharptwo$.

  Suppose first that we have $\phi(x)=\phi(y')$ and $\phi(x')=\phi(y)$. Then, we
  interchange the colors $\phi(x)$ and $\phi(y)$ in $G_y$.
  This new coloring induces a valid $k$-coloring of $G\sharptwo$.

  Similarly, if $\phi(x)\neq \phi(y')$ and $\phi(x')\neq \phi(y)$, we permute the colors of vertices of $G_y$ so that $\phi(x)=\phi(x')$ and  $\phi(y)=\phi(y')$. Again this yields a valid $k$-coloring of $G\sharptwo$.

  Finally, suppose that $\phi(x)=\phi(y')$ but $\phi(x')\neq \phi(y)$ (the symmetric case is handled similarly). Since $k\geq 4$, there is one free color $c_f$ among the neighbors of $x$ (in $G$). We now permute the colors in $G_y$ such that $\phi(x')=\phi(x)$ and $\phi(y)=c_f$. Again this yields a valid $k$-coloring of $G\sharptwo$.
\hfill\smallqed\medskip

\begin{claim}\label{clm:2-vertex-triangle-4cycle}
$G$ contains no $2$-vertex lying on a $k$-cycle, with $k\in\{3,4\}$.
\end{claim}
\claimproof
Suppose $v$ is a 2-vertex in $G$ lying on a $k$-cycle ($k\in\{3,4\}$) and consider $G'=G-v$. Then $\Chisharp(G')\leq 4$ by minimality of $G$. Now, since $v$ is part of a $k$-cycle in $G$, it is easy to see that the distance in $G'$ between any pair of vertices $x,y$ is the same as their distance in $G$. Thus an exact square coloring of $G'$ is valid in $G$. Since $G$ is subcubic and $v$ is lying on $k$-cycle, we conclude that $v$ has degree at most 3 in $G\sharptwo$. Therefore $v$ can be colored and we are done.
\hfill\smallqed\medskip

\begin{claim}\label{clm:2-thread}
$G$ contains no pair of adjacent $2$-vertices.
\end{claim}
\claimproof
The proof is similar to the one of the previous claim. Suppose $u,v$ are two adjacent 2-vertices in $G$ and consider $G'=G-\{u,v\}$. Then $\Chisharp(G')\leq 4$ by minimality of $G$. It is easy to see that if the exact distance between two vertices of $V(G')$ in $G$ is 2, then this distance is preserved in $G'$. Thus an exact square coloring of $G'$ is valid in $G$ and since $u,v$ have degree at most 3 in $G\sharptwo$, they can be colored and we are done.
\hfill\smallqed\medskip

By Claim~\ref{clm:2-connected-K4minorfree}, $G$ is $2$-connected. Since $G$ is $K_4$-minor free, it has a nested ear decomposition $E_1,\ldots,E_k$ where $E_1$ is a cycle. Moreover, by Claim~\ref{clm:2-thread}, $G$ has no pair of adjacent $2$-vertices. Thus, there must be at least two ears in the decomposition. We consider a subsequence of $E_1,\ldots,E_k$ defined as follows: $E_{i_1}=E_1$; given $E_{i_j}$, the next ear $E_{i_{j+1}}$ is chosen (freely) among the ears nested on $E_{i_j}$ whose length of nest is as small as possible. Since our sequence is finite, this subsequence ends in an ear, say, $E_l$. By the minimality of the length of the nest of $E_l$, the vertices on its nest are all of degree~$2$. Since $l$ was the last element of the sequence, there is no ear nested on it, and all its internal vertices are of degree~2 (in $G$). Since $G$ has no pair of adjacent 2-vertices, $E_l$ and its nest each has at most one internal vertex. Thus $E_l$ together with its nest induce a cycle of length at most~4. However, if the length is~3 or~4, then there must be a degree 2-vertex on this cycle, contradicting Claim~\ref{clm:2-connected-K4minorfree}. Otherwise $E_j$ is nothing but a parallel edge, which is not possible since our graph is assumed to be simple.

For the tightness of the bound, in Proposition~\ref{prop:OmegasharpSP} we have presented a bipartite $K_4$-minor-free graph of girth~6 whose exact square is the disjoint union of two $K_4$'s. For the class of outerplanar graphs we have the example of Figure~\ref{fig:outerplanar}, given in~\cite{MO18} in the context of injective coloring of outerplanar graphs. 
\end{proof}

\section{Fullerene graphs}\label{sec:fullerenes}

We now turn our attention to a special class of cubic planar graphs, namely \emph{fullerene graphs}. They are the skeletons of cubic 3-dimensional convex polyhedra, each of whose faces is either a pentagon or a hexagon. As the skeleton of a convex polyhedron, a fullerene graph must be 3-connected and most authors consider this condition as part of the definition. However, we rather define a fullerene graph as a ``cubic plane graph each of whose faces is of size either 5 or 6". The following can then be proved as an exercise.

\begin{proposition}[Folklore]
\label{prop:fullerene_folklore}
Every fullerene graph has girth 5 and is 3-connected.
\end{proposition}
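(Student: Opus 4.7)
The plan is to prove both conclusions — girth exactly~$5$ and $3$-connectedness — by combining a quick Euler-formula count with successive connectivity arguments on the planar embedding, then deducing the girth bound from $3$-connectedness.

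First I would apply Euler's formula $n-m+f=2$ together with the cubic identity $2m=3n$ and the face-sum identity $5f_5+6f_6=2m$ (where $f_i$ counts the $i$-faces). Eliminating yields the classical identity $f_5=12$, so every fullerene contains a pentagonal face; in particular $\mathrm{girth}(G)\leq 5$ already.

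Next I would establish $2$-connectedness by ruling out bridges and cut vertices. A bridge $e=uv$ would lie on a single face $F$ whose boundary walk traverses $e$ twice, and on each side of $e$ would need a closed non-trivial sub-walk of length at least~$3$ (since in a simple cubic graph the shortest face-walk loop at any vertex has length at least~$3$). This gives $|F|\geq 2+3+3=8>6$, a contradiction. Cut vertices are ruled out similarly: any cut vertex in a connected cubic graph produces a bridge in one of the resulting blocks (because the three edges at the cut vertex must be distributed among the components, leaving some component attached by a single edge).

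For $3$-connectedness, I would suppose for contradiction that $\{u,v\}$ is a $2$-vertex cut. Since $G$ is already $2$-connected, each component $C_i$ of $G-\{u,v\}$ is connected to both $u$ and $v$, so the three edges at $u$ and the three edges at $v$ distribute nontrivially among the components. A face-walk analysis around the corners at $u$ and $v$ then leads to a contradiction: the face ``wrapping around'' a component $C_i$ has size approximately $3+\ell_i$, where $\ell_i$ is the length of an internal path through $C_i$, so the lower bound $|F|\geq 5$ forces $\ell_i\geq 2$; meanwhile, a remaining face using the leftover corners at $u$ and $v$ is forced to traverse both components consecutively and has size approximately $4+\ell_1+\ell_2$, so $|F|\leq 6$ forces $\ell_1+\ell_2\leq 2$. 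These bounds are incompatible. A small case analysis on whether $uv\in E(G)$ and on the cyclic order of the three edges at $u$ and at $v$ completes the argument; this is the step I expect to be the main obstacle.

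Finally, $\mathrm{girth}(G)\geq 5$ follows cleanly from $3$-connectedness. Let $C$ be a shortest cycle. If $C$ bounds a face then $|C|\in\{5,6\}$. Otherwise $C$ separates the plane into two non-trivial regions, and the subset $S\subseteq V(C)$ of vertices whose third edge enters one of these regions is a vertex cut of $G$ separating that region's interior from the rest; hence $|S|\geq 3$, and symmetrically $|V(C)\setminus S|\geq 3$, giving $|C|\geq 6$. Either way $|C|\geq 5$, and together with the upper bound from the counting step the girth equals exactly~$5$.
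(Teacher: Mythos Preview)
The paper does not give a proof of this proposition: it is labelled folklore and explicitly left ``as an exercise''. So there is no reference argument to compare against, and your outline must be judged on its own.

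Most of your outline is correct. The Euler count yielding $f_5=12$ is standard; the bridge argument for $2$-connectedness is right (the closed sub-walk on each side of the bridge uses both remaining edges at that endpoint, which go to distinct vertices by simplicity, so each sub-walk has length $\geq 3$ and the face has size $\geq 8$); the reduction of cut vertices to bridges in a cubic graph is fine; and the deduction of $\mathrm{girth}\geq 5$ from $3$-connectedness is clean and works just as you say, since a shortest cycle is induced and hence each of its two Jordan regions either is a face or contains interior vertices, whence the partition of $V(C)$ by the side of the third edge gives two cuts of size $\geq 3$.

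The genuine gap is the $3$-connectedness step, which you yourself flag as the crux. As written, the face-size estimates ``approximately $3+\ell_i$'' and ``approximately $4+\ell_1+\ell_2$'' are not well-defined: which face ``wraps around'' a component, and whether any face ``traverses both components consecutively'', depends on the cyclic order of the three edges at $u$ and at $v$, on whether $uv\in E(G)$, and on how the three edges at each of $u,v$ are distributed among the (two or three) components of $G-\{u,v\}$. The faces straddling the cut can be arranged in several non-isomorphic ways, and the numerical inequalities you state do not hold uniformly across them, so the contradiction does not yet follow. The case analysis you anticipate is genuinely required here, and until it is carried out the argument is incomplete. One tidy way to organise it: in a $2$-connected plane graph every face boundary is a cycle, and a $2$-cut $\{u,v\}$ forces at least two distinct face boundaries to contain both $u$ and $v$; each such face then decomposes into two $u$--$v$ paths of total length at most $6$, and a short enumeration of the possible length profiles (using that faces have length $\geq 5$, the graph is simple, and $u,v$ have degree $3$) finishes the job.
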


As any 3-connected graph admits a unique plane embedding, we may refer to a fullerene graph as a planar graph rather than a plane graph. Further results on their structure were proved, and we give here a few of them which we will use in the proofs of this section. Recall that a graph $G$ is \emph{cyclically $k$-edge-connected} if any edge-cut separating two cycles of $G$ has at least $k$ edges. Do\v{s}li\'c proved the following.

\begin{theorem}[\cite{D03}]\label{thm:Cyclic5-Connected}
Every fullerene graph is cyclically 5-edge-connected.
\end{theorem}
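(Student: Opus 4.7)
The plan is to argue by contradiction. Suppose $G$ admits a cyclic $k$-edge cut $S$ with $k \leq 4$. By Proposition~\ref{prop:fullerene_folklore}, $G$ is 3-connected and cubic, hence 3-edge-connected, so $k \geq 3$. Choose $S$ to be a minimum cyclic cut; using 3-connectedness one may assume that $S$ is a minimal edge cut and that both sides $A, B$ are connected. Under planar duality, $S$ then corresponds to a simple $k$-cycle in the planar dual $G^*$, i.e., the $k$ cut edges traverse a cyclic sequence of $k$ distinct faces $f_1,\ldots,f_k$ of $G$.

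For each cut face $f_i$, the two incident cut edges split its boundary into an ``$A$-path'' of length $a_i$ and a ``$B$-path'' of length $b_i$, with $a_i + b_i = |f_i| - 2$. Concatenating the $A$-paths yields a cycle $C_A$ of length $\ell_A := \sum_i a_i$ in $G$---the boundary of the outer face of the embedding of $G[A]$ inherited from $G$---and similarly a cycle $C_B$ of length $\ell_B$ in $G[B]$. By the girth 5 of $G$, $\ell_A, \ell_B \geq 5$. Applying Euler's formula to the plane graph $G_A^+$ obtained from $G$ by contracting the $B$-side to a single new vertex $u_A$ of degree $k$ (its remaining vertices being cubic), and noting that the internal faces of $G_A^+$ are exactly the pentagons and hexagons of $G$ lying strictly inside $A$, a short calculation (using $|V(G_A^+)|=n_A+1$, $|E(G_A^+)|=(3n_A+k)/2$, and $\sum_{f}|f|=2|E(G_A^+)|$) yields the identity
$$\ell_A \;=\; p_A + 2k - 6,$$
where $p_A$ is the number of pentagons of $G$ strictly inside $A$; symmetrically $\ell_B = p_B + 2k - 6$. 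Combined with $\ell_A \geq 5$, this gives $p_A \geq 11 - 2k$, and likewise for $p_B$.

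For $k = 3$, the bound $p_A, p_B \geq 5$ rules out the configuration where all three cut faces are pentagons, as this would force $\ell_A + \ell_B = 15 - 6 = 9 < 10$. The remaining cases, in which at least one cut face is a hexagon, are handled by combining the girth-5 planar inequality $e_A \leq \frac{5}{3}(n_A-2)$ with the cubic edge count $e_A = (3n_A-3)/2$, which forces $n_A \geq 11$ with $n_A$ odd, and then performing a finite case analysis on the internal structure of $G[A]$ (iteratively reapplying the same Euler/girth argument to a ``peeled'' subgraph obtained by stripping the outer cycle $C_A$). The case $k = 4$ follows the same template with $p_A, p_B \geq 3$, $n_A \geq 8$ and $n_A$ even. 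The main obstacle lies here: for $k = 4$ there are substantially more admissible distributions of cut-face sizes and boundary degree sequences, so the case analysis is considerably longer and requires iterating the peeling argument through several layers before reaching a contradiction in each branch.
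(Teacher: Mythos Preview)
The paper does not prove this theorem; it is quoted from Do\v{s}li\'c~\cite{D03} and used as a black box. So there is no ``paper's proof'' to compare against, and your task was really to reproduce Do\v{s}li\'c's argument.

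Your framework is the right one and matches the standard proof. The duality step (a minimal edge cut in a $3$-connected plane graph corresponds to a simple cycle in the dual, so the cut crosses $k$ distinct faces in cyclic order) is correct, and your key identity $\ell_A = p_A + 2k - 6$ is derived correctly from Euler's formula applied to the contracted plane graph. Combined with $p_A+p_B+(\text{pentagonal cut faces})=12$ and the girth bound $\ell_A,\ell_B\ge 5$, this is exactly the engine Do\v{s}li\'c uses.

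However, your write-up stops precisely where the actual work begins. For $k=3$ the identities you state do \emph{not} by themselves yield a contradiction: after excluding the case of three pentagonal cut faces, the profiles $(\ell_A,\ell_B)\in\{(5,5),(5,6),(6,6),(5,7)\}$ (with the corresponding cut-face size patterns) are all numerically consistent with every inequality you wrote down. Ruling each of these out requires genuinely analysing the possible plane graphs $G[A]$ on $n_A\ge 11$ vertices with the prescribed outer cycle and pentagon count, and your ``peeling'' argument is only named, never executed. For $k=4$ you explicitly acknowledge that the case analysis is ``considerably longer'' and then omit it entirely. Note also that you cannot shortcut this via Lemma~\ref{lem:small_cycle_fullerene} (which would immediately kill $\ell_A\le 8$), since in the paper that lemma is \emph{derived from} the present theorem and invoking it here would be circular.

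In short: the setup and the Euler identity are correct and are the heart of the approach, but what you have is a proof outline, not a proof. The substantive content --- the finite case analysis for $k\in\{3,4\}$ --- is missing.
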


Using this theorem, one can easily classify all possible small cycles of fullerene graphs. More precisley, we have the following (perhaps folklore) fact; we refer to \cite[Lemma 4.1]{KKM19} for a proof.

\begin{lemma}\label{lem:small_cycle_fullerene}
Given a fullerene graph $G$, every non-facial cycle of $G$ is of length at least 9. Moreover, the only cycles of length 9, if any, are the cycles around a vertex incident with 5-faces only.
\end{lemma}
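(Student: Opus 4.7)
The plan is to combine Euler's formula, applied to the interior of $C$, with the cyclic $5$-edge-connectivity of fullerenes (Theorem~\ref{thm:Cyclic5-Connected}). Let $C$ be an induced non-facial cycle of length $k$; by the Jordan curve theorem, $V(G)\setminus V(C)$ partitions into the set $A$ of vertices inside $C$ and the set $B$ of vertices outside $C$. Let $a$ (resp.~$b$) denote the number of edges from $V(C)$ to $A$ (resp.~$B$); since $G$ is cubic and $C$ has no chord, $a+b=k$. Since $C$ is non-facial, neither side can be filled by a single face, forcing $n_A:=|A|\ge 1$ and $n_B:=|B|\ge 1$.

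Treating the closed interior of $C$ as a plane subgraph with $C$ as its outer face, and using the fact that every inner face is a pentagon or a hexagon, Euler's formula together with the degree relation $e_{AA}=(3n_A-a)/2$ yields
\[
p \;=\; 6+2a-k, \qquad q \;=\; k-5+\tfrac{n_A-3a}{2},
\]
where $p$ and $q$ count the pentagons and hexagons strictly inside $C$; symmetric formulas hold for the exterior. I then split on whether $G[A]$ contains a cycle.

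\emph{Case A: $G[A]$ is a forest.} If $G[A]$ has $c\ge 1$ components, summing degrees in $A$ forces $a=n_A+2c$, and the formula for $q$ collapses to $q=k-5-n_A-3c$. Non-negativity $q\ge 0$ combined with $n_A\ge 1$ and $c\ge 1$ yields $k\ge 5+n_A+3c\ge 9$, with equality iff $n_A=c=1$ and $q=0$; in that extremal case the unique interior vertex sends three arms to $V(C)$ that split the interior into three pentagons, so $C$ is precisely the cycle around this vertex and the vertex is incident only to $5$-faces. \emph{Case B: $G[A]$ contains a cycle.} Then the $a$ arms form an edge cut of $G$ separating the cyclic subgraph $G[A]$ from $G[V(C)\cup B]$, which itself contains the cycle $C$. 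Theorem~\ref{thm:Cyclic5-Connected} then forces $a\ge 5$. Running the same dichotomy on the exterior, either $b\ge 5$ and hence $k\ge 10$, or $G[B]$ is a forest and Case~A applied to the outer side yields $k\ge 9$.

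The main obstacle I expect is the Euler bookkeeping—in particular, translating the forest hypothesis into the identity $a=n_A+2c$ and substituting it cleanly into the formula for $q$. Once that is done, Case~B is routine provided one verifies that the $a$-arm cut really separates two cyclic sides so that Theorem~\ref{thm:Cyclic5-Connected} applies, and the equality analysis in Case~A immediately delivers the characterization of length-$9$ non-facial cycles as those surrounding a vertex incident only to pentagons.
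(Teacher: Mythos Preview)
The paper does not supply its own proof of this lemma; it simply cites \cite[Lemma~4.1]{KKM19}. So there is nothing in the paper to compare against, and your argument stands on its own.

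Your proof is correct. The Euler bookkeeping checks out: with $e_{AA}=(3n_A-a)/2$, the face–edge and vertex–edge–face identities give exactly $p=6+2a-k$ and $q=k-5+(n_A-3a)/2$. In Case~A, the forest identity $e_{AA}=n_A-c$ yields $a=n_A+2c$, and substituting gives $q=k-5-n_A-3c$, so $q\ge 0$ forces $k\ge 9$ with equality precisely when $n_A=c=1$ and $q=0$; then $a=3$, $p=3$, and the single interior vertex is incident to three pentagons, as required. In Case~B, the $a$ edges from $V(C)$ to $A$ are indeed the full edge cut between $A$ and $V(C)\cup B$ (planarity forbids $A$--$B$ edges), both sides are cyclic, and Theorem~\ref{thm:Cyclic5-Connected} gives $a\ge 5$; the symmetric dichotomy on the $B$-side then finishes both the bound and the equality analysis (if $G[B]$ is a forest and $k=9$, the same argument places the distinguished vertex on the outside).

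One point you should make explicit rather than leave implicit: your restriction to \emph{induced} cycles is not merely convenient but necessary. As literally stated, the lemma fails for non-induced cycles---two adjacent pentagons in any fullerene (e.g.\ the dodecahedron) produce a non-facial $8$-cycle with the shared edge as a chord, and an adjacent pentagon--hexagon pair produces a non-facial $9$-cycle that is not of the claimed form. Since $a+b=k$ relies on $C$ being chordless, you should either state up front that the lemma is to be read for induced cycles, or add a sentence explaining why the applications later in the paper only need the induced version (they do: every $5$- or $6$-cycle in a girth-$5$ graph is automatically induced, and the $9$-face argument in Claim~\ref{clm:9faces} concerns a facial, hence induced, $9$-cycle of $H$).
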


Note that $\Chisharp(G)\geq 3$ for any fullerene graph $G$.
We first characterize those fullerene graphs whose exact square is $3$-colorable. Then, we prove $\Chisharp(G)\leq 5$ for any fullerene graph $G$, thus proving Conjecture~\ref{conj} for this class of graphs.

\subsection{Characterizing exact square $3$-colorable fullerene graphs}\label{sec:fullerenes-3col}

We first need to define a special class of fullerene graphs, that we call \emph{drums}.

\begin{definition}
A drum is a fullerene graph with two specific 6-faces $F$ and $F'$, each of which is a neighbor with six 5-faces such that all these twelve 5-faces are distinct. A drum where $F$ and $F'$ are at facial distance $k+1$ is called a $k$-drum.
\end{definition}

For an example, the $1$-drum and $3$-drum are depicted in Figure~\ref{fig:example-drums}. In the literature, drums are known as a specific type of so-called \emph{nanotubes}, more precisely, following the terminology from~\cite{nanotubes,KS08}, they are exactly one of the five types of \emph{$(6,0)$-nanotubes}.

\begin{figure}
\subfloat[The 1-drum: the unique fullerene graph on $24$ vertices]{\label{subfig:24Fullerene}
	\begin{tikzpicture}[scale=0.55]	
	\tikzstyle{vertex}=[circle, draw, inner sep=0pt, minimum size=13pt,font=\footnotesize]
	
	\foreach \i in {1,...,6}{
	    \node[vertex] (c\i) at (60*\i-60:2) {$v_{\i}$};
	}
	\draw (c6) -- (c1) -- (c2) -- (c3) -- (c4) -- (c5) -- (c6);
	
	 \node[vertex] (v1) at (0:4) {$v'_{1}$};
	 \node[vertex] (v2) at (60:4) {$v'_{2}$};
	 \node[vertex] (v3) at (120:4) {$v'_{3}$};
	 \node[vertex] (v4) at (180:4) {$v'_{4}$};
	 \node[vertex] (v5) at (240:4) {$v'_{5}$};
	 \node[vertex] (v6) at (300:4) {$v'_{6}$};
	
	\foreach \i in {1,...,6}{
	    \draw (c\i) -- (v\i);
	}
	
	\node[vertex] (z1) at (30:5) {$u'_{1}$};
	\node[vertex] (z2) at (90:5) {$u'_{2}$};
	\node[vertex] (z3) at (150:5) {$u'_{3}$};
	\node[vertex] (z4) at (210:5) {$u'_{4}$};
	\node[vertex] (z5) at (270:5) {$u'_{5}$};
	\node[vertex] (z6) at (330:5) {$u'_{6}$};

	\foreach \i in {1,...,6}{
	    \draw (v\i) -- (z\i);
	}
	\draw (z6) -- (v1);
	\draw (z1) -- (v2);
	\draw (z2) -- (v3);
	\draw (z3) -- (v4);
	\draw (z4) -- (v5);
	\draw (z5) -- (v6);

	\foreach \i in {1,...,6}{
		\node[vertex] (u\i) at (-30+60*\i:7) {$u_{\i}$};
		\draw (u\i) -- (z\i);
	}
	
	\draw (u6) -- (u1) -- (u2) -- (u3) -- (u4) -- (u5) -- (u6);
	\end{tikzpicture}
}\hspace*{1cm}
\subfloat[The 3-drum $D_3$]{\label{subfig:$3$-drum}
	 \begin{tikzpicture}[scale=0.27]
	 \tikzstyle{vertex}=[circle, draw, inner sep=0pt, minimum size=6pt]
	
	 \foreach \i in {1,...,6}{
	 \node[vertex] (w\i) at (60*\i-60:2) {};
	 }
	 \draw (w1) -- (w2) -- (w3) -- (w4) -- (w5) -- (w6) -- (w1);
	
	 \foreach \i in {1,...,6}{
	 \node[vertex] (v\i) at (60*\i-60:4) {};
	 \draw (w\i) -- (v\i);
	 }
	
	 \foreach \i in {1,...,6}{
	 \node[vertex] (z\i) at (-30+60*\i:5) {};
	 \draw (v\i) -- (z\i);
	 }
	 \draw (z6) -- (v1);
	 \draw (z1) -- (v2);
	 \draw (z2) -- (v3);
	 \draw (z3) -- (v4);
	 \draw (z4) -- (v5);
	 \draw (z5) -- (v6);

	 \foreach \i in {1,...,6}{
	 \node[vertex] (u\i) at (-30+60*\i:7) {};
	 \draw (u\i) -- (z\i);
	 }

	 \foreach \i in {1,...,6}{
	 \node[vertex] (t\i) at (60*\i-60:9) {};
	 }
	
	 \draw (u1) -- (t2) -- (u2) -- (t3) -- (u3) -- (t4) -- (u4) -- (t5) --  (u5) -- (t6) -- (u6) -- (t1) -- (u1);
	
	 \foreach \i in {1,...,6}{
	 \node[vertex] (s\i) at (60*\i-60:11) {};
	 \draw (s\i) -- (t\i);
	 }
	
	 \foreach \i in {1,...,6}{
	 \node[vertex] (r\i) at (-30+60*\i:12) {};
	 \draw (r\i) -- (s\i);
	 }
	
	 \draw (r1) -- (s2);
	 \draw (r2) -- (s3);
	 \draw (r3) -- (s4);
	 \draw (r4) -- (s5);
	 \draw (r5) -- (s6);
	 \draw (r6) -- (s1);
	
	 \foreach \i in {1,...,6}{
	 \node[vertex] (q\i) at (-30+60*\i:15) {};
	 \draw (r\i) -- (q\i);
	 }
	
	 \draw (q1) -- (q2) -- (q3) -- (q4) -- (q5) -- (q6) -- (q1);
	 \end{tikzpicture}
}
\caption{Examples of drums.}
\label{fig:example-drums}
\end{figure}
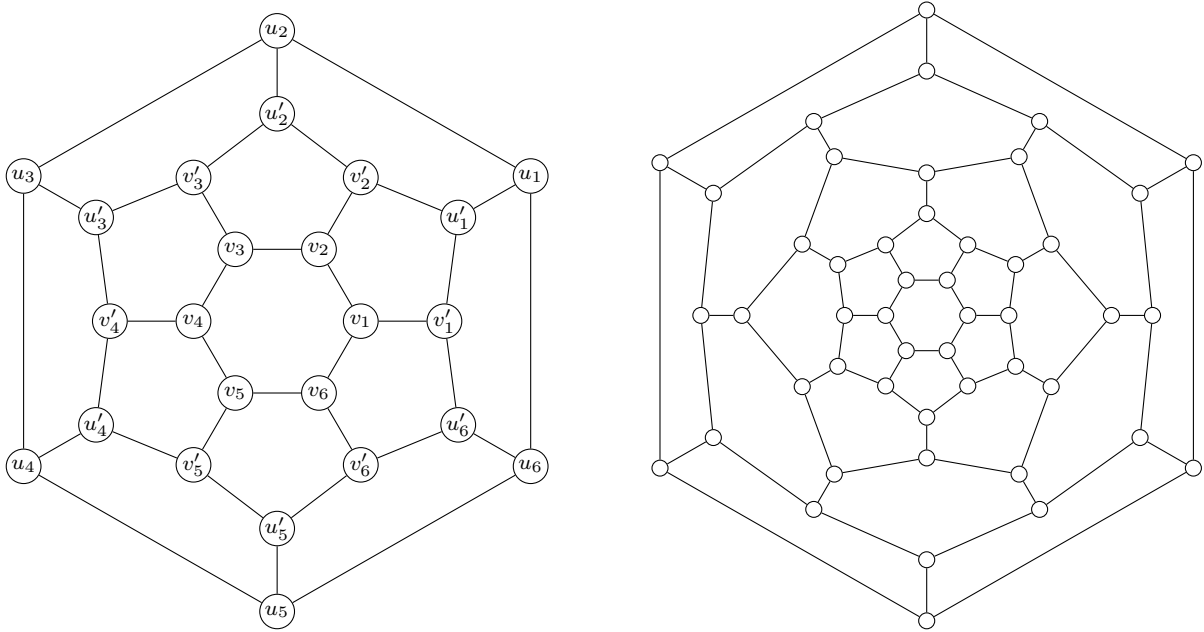

We will show next that there exists a unique $k$-drum up to isomorphism.

\begin{proposition}\label{prop:drumlayers}
  Given a $k$-drum $G$, all faces at distance $\ell \leq k$ from $F$ are of a same length.
\end{proposition}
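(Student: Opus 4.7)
The plan is to prove the proposition by a short, direct argument combining Euler's formula for fullerene graphs with the triangle inequality in the dual graph.

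By Euler's formula, every fullerene graph has exactly twelve pentagonal faces. In a $k$-drum, by the defining property, the six pentagons adjacent to $F$ and the six pentagons adjacent to $F'$ are all distinct, and so together they account for all twelve pentagons of $G$. Consequently, every face of $G$ that is adjacent to neither $F$ nor $F'$ must be a hexagon.

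I would treat the base cases $\ell=0$ and $\ell=1$ directly: the only face at distance $0$ from $F$ is $F$ itself, which is a hexagon, and the faces at distance $1$ from $F$ are exactly the six pentagons adjacent to $F$, by the definition of a drum. For $2\le\ell\le k$, take a face $X$ at distance $\ell$ from $F$. Since $\ell\ge 2$, $X$ is not adjacent to $F$. Moreover, since $F$ and $F'$ are at facial distance $k+1$---so that $F'$ lies at dual-graph distance $k+2$ from $F$ in the convention of the drum definition, with the six pentagons adjacent to $F'$ forming the intermediate ring at dual distance $k+1$---the triangle inequality in the dual graph yields that $X$ is at distance at least $(k+2)-\ell\ge 2$ from $F'$, so $X$ is also not adjacent to $F'$. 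By the pentagon count above, $X$ is not among the twelve pentagons of $G$, so it must be a hexagon. Hence the layer of faces at distance $\ell$ from $F$ consists entirely of hexagons, all of the same length.

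The main subtlety is matching the convention of ``facial distance'' used in the definition of the $k$-drum to the dual-graph distance used in the triangle inequality. Once this is pinned down (for instance by the observation that the $1$-drum displayed in Figure~\ref{subfig:24Fullerene} is the unique fullerene on $24$ vertices, in which the two hexagonal faces are at dual distance $3$), the bound $(k+2)-\ell\ge 2$ above holds for every $\ell\le k$, so the argument covers the entire required range uniformly. As an alternative route, avoiding the convention issue entirely, one could proceed by induction on $\ell$, showing inductively that each layer is a ring of exactly six faces with a regular boundary structure dictated by the $6$-fold rotational symmetry of the $(6,0)$-nanotube structure of a drum; this would be the main obstacle if a fully self-contained argument is desired.
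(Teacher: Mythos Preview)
Your argument is correct and takes a genuinely different route from the paper. The paper proceeds by induction on $k$: after handling $\ell\le 2$ directly, it exhibits an explicit ``layer-contraction'' operation that turns a $(k+1)$-drum into a $(k-1)$-drum while shifting face-distances by one, and then invokes the inductive hypothesis. Your argument instead observes globally that Euler's formula forces exactly twelve pentagons, the drum definition pins them all down as the faces adjacent to $F$ or $F'$, and then a single triangle-inequality step in the dual graph shows that no face at distance $\ell\in\{2,\dots,k\}$ from $F$ can be adjacent to either cap. This is shorter and more conceptual.

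Two remarks. First, the convention issue you flag is real: the paper never defines ``facial distance'', and the labelling of the $1$-drum shows that it is \emph{not} the dual-graph distance (the two hexagonal caps there are at dual distance~$3$, not $2$). Your reading $d_{\text{dual}}(F,F')=k+2$ is the right one and is consistent with all the examples; a cleaner way to justify it inside the proof would be to note that the twelve pentagons split into two rings of six, with the ring around $F'$ lying one dual step inside $F'$, so that the minimum dual distance from $F$ to $F'$ passes through both rings and is at least $k+2$. Second, the paper's inductive argument does a bit more work than the bare proposition: the contraction step implicitly exhibits the layered ``ring of six'' structure and underpins the stated uniqueness of the $k$-drum. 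Your counting argument proves the proposition as stated but does not by itself yield that each layer is a ring of exactly six faces; if uniqueness is needed later, that would have to be supplied separately (your suggested inductive alternative would do this).
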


\begin{proof}
We consider a planar embedding of $G$. Let $v_1,v_2, \ldots, v_6$ be the six vertices of $F$ in the cyclic order of vertices, and let $u_1,u_2, \ldots, u_6$ be the six vertices of $F'$ in the cyclic order of vertices, see Figure~\ref{subfig:24Fullerene} for the labeling of the 1-drum.

Observe that, since $G$ is cubic, each vertex $v_i$ has a unique neighbor in $G\setminus F$, we call it $v'_i$. Similarly, each neighbor of $u_i$ not in $F'$ is called $u'_i$. Observe that each pair of edges $v_iv'_i$ and $v_{i+1}v'_{i+1}$ (addition in indices here and in the rest of the proof are taken modulo $6$) is a pair of parallel edges of a $5$-cycle. 

To prove the claim of the proposition, for $\ell=1$, by the definition, all faces at distance 1 from $F$ are 5-faces. For $\ell=2$ we show that if one of the faces is a 5-face, then they are all 5-faces. So assume a face $f$ at distance 2 from $F$ is a 5-face. Then, as there are only twelve 5-faces, $f$ must be incident to $F'$ and since the 5-faces incident to $F'$ are distinct from those of $F$, the vertices of $f$ furthest away from $F$ form an edge of $F'$.  Thus, we may label the vertices of $f$, without loss of generality, $u_1u'_1v'_2u_2'u_2$. But then, the two faces incident to $u_1u'_1$ and $u_2u'_2$ are 5-faces next to $F'$; by continuing this process, we conclude that all faces at distance 2 from $F$ are the 5-faces incident to $F'$.

This completes the proof for $\ell=1,2$ with any value of $k$, which is exhaustive for $k\leq2$. For the remaining cases, we apply induction on $k$. 
 Assume that the claim is true for $k$ and all values of $\ell$, $\ell \leq k$. Consider a $(k+1)$-drum ($k\geq 2$). Thus, all faces at distance~2 from $F$ are 6-faces. In each of these 6-faces, one of the vertices is already labeled $v'_i$, noting that different faces correspond to different $v'_i$'s. 
Label $x_i$ the common neighbor of $v'_i$ and $v'_{i+1}$ (see Figure~\ref{fig:NeighbourhoodF}). Thus, $x_{i-1}v'_ix_{i}$ form part of a 6-face. On this face, label the neighbor of $x_i$ by $y_i$. Finally, label the common neighbor of $y_i$ and $y_{i-1}$ by $z_i$. Let $G'$ be the graph obtained from $G$ by deleting all the edges $y_iz_{i+1}$ and then contracting edges $z_iy_i$ and $y_ix_i$. We claim that $G'$ is a $(k-1)$-drum where faces at distance $\ell$ from $F$ in $G$ are at distance $\ell-1$ from $F$ in $G'$. This would complete the proof by induction. To see that $G'$ is a fullerene graph, observe that from the construction it is 3-regular. Each face of $G$ containing a path $z_iy_iz_{i+1}$ becomes a face of the same size on the path $x_{i}v'_{i+1}x_{i+1}$, and all other faces remain the same. Hence, we only have 5-faces and 6-faces in $G'$, so we are done.
\end{proof}

\begin{figure}[!ht]
\centering
	\begin{tikzpicture}[scale=0.4]
	\tikzstyle{vertex}=[circle, draw, inner sep=0pt, minimum size=13pt,font=\footnotesize]
	
	\node at (0,0) {$F$};
	
	\foreach \i in {1,...,6}{
	    \node[vertex] (v\i) at (60*\i-60:2) {$v_\i$};
	}
	\draw (v1) -- (v2) -- (v3) -- (v4) -- (v5) -- (v6) -- (v1);

	\foreach \i in {1,...,6}{
	    \node[vertex] (vv\i) at (60*\i-60:4) {$v'_\i$};
	    \draw (v\i) -- (vv\i);
	}
	
	\foreach \i in {1,...,6}{
	    \node[vertex] (x\i) at (-30+60*\i:5) {$x_\i$};
	    \draw (vv\i) -- (x\i);
	}
	\draw (x6) -- (vv1);
	\draw (x1) -- (vv2);
	\draw (x2) -- (vv3);
	\draw (x3) -- (vv4);
	\draw (x4) -- (vv5);
	\draw (x5) -- (vv6);

	\foreach \i in {1,...,6}{
	    \node[vertex] (y\i) at (-30+60*\i:7) {$y_\i$};
	    \draw (x\i) -- (y\i);
	}

	\foreach \i in {1,...,6}{
	    \node[vertex] (z\i) at (60*\i-60:9) {$z_\i$};
	}
	
	\draw (z1) -- (y1) -- (z2) -- (y2) -- (z3) -- (y3) -- (z4) -- (y4) --  (z5) -- (y5) -- (z6) -- (y6) -- (z1);
	
	\foreach \i in {1,...,6}{
	    \node[circle, draw, inner sep=0pt, minimum size=8pt] (u\i) at (60*\i-60:11) {};
	    \draw (u\i) -- (z\i);
	}
	
	\end{tikzpicture}
\caption{The neighborhood of face $F$ in a drum.}\label{fig:NeighbourhoodF}
\end{figure}
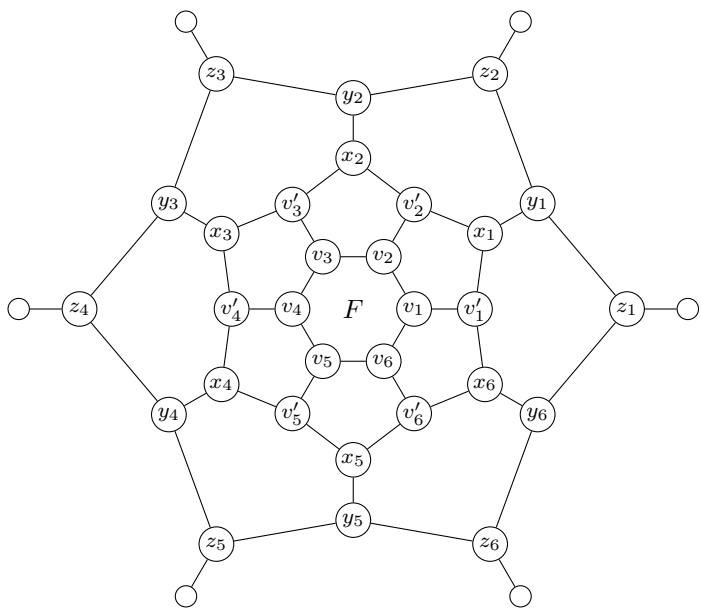

Our goal in this section is to characterize drums as the only fullerene graphs which are exact square 3-colorable. To this end, in the next lemma, we present two planar subcubic graphs that are not exact square 3-colorable, thus they cannot be induced subgraphs of an exact square 3-colorable graph.

\begin{lemma}\label{lem:no-C5With3C6}
Neither of the two graphs of Figure~\ref{fig:3c6-firstPrecoloring} admits an exact square 3-coloring.
\end{lemma}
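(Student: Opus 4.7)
The plan is to argue by contradiction. Suppose a graph $H$ of the figure admits an exact-square 3-colouring $c:V(H)\to\{1,2,3\}$. Since $H$ is an induced subgraph of a fullerene, it is subcubic with girth at least $5$ (Proposition~\ref{prop:fullerene_folklore}). For every degree-$3$ vertex $v$, any two of its neighbours are at distance exactly~$2$: they are non-adjacent by girth, and $v$ itself provides a length-$2$ path between them. Hence the three neighbours of every degree-$3$ vertex must receive all three colours. This local forcing rule is the only tool needed.

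Let $C=v_1v_2v_3v_4v_5$ be the central 5-face, and for each $i$ let $u_i$ be the third neighbour of $v_i$ outside $C$. The restriction of $c$ to $V(C)$ is a proper 3-colouring of $C^{[\sharp 2]}\cong C_5$, and up to rotation of the pentagon and permutation of colours such a colouring is essentially unique: one colour is used on a single vertex while each of the other two colours is used on a pair of adjacent vertices of $C$. Applying the forcing rule at each $v_i$ then determines $c(u_i)$ as the unique colour outside $\{c(v_{i-1}),c(v_{i+1})\}$. Since the two graphs of the figure differ in the way the three 6-faces are placed around $C$, I would split the analysis according to the relative position of the ``singleton'' colour and the 6-face edges; by the symmetries of each configuration only a handful of inequivalent sub-cases remain.

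The second layer of propagation is the passage into a 6-face. If $F=v_iv_{i+1}u_{i+1}xyu_i$ is a 6-face sharing an edge $v_iv_{i+1}$ with $C$, then girth~$5$ forces $d(x,u_i)=d(y,u_{i+1})=2$ (via $y$ and $x$ respectively), so $c(x)\neq c(u_i)$ and $c(y)\neq c(u_{i+1})$; combined with the forcing rule at $u_i$ and $u_{i+1}$, this pins down $c(x)$ and $c(y)$ uniquely. Whenever two of the three 6-faces share a common $u_i$, the colours forced in one 6-face back-propagate through $u_i$ into the neighbouring 6-face and produce a pair of vertices at distance~$2$ (typically an interior vertex of the second 6-face and one of its boundary $u$-vertices) that are forced to the same colour. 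This distance-$2$ monochromatic pair is the desired contradiction.

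The principal obstacle is organising the case analysis: each of the two graphs must be treated separately, and within each graph one has a few inequivalent positions for the singleton colour on $C$ relative to the arrangement of the three 6-faces. However, every individual sub-case is dispatched by a single round of propagation into a 6-face and the ensuing distance-$2$ conflict, so the arguments are uniform, short, and closely parallel one another.
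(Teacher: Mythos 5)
Your strategy is sound and, after checking, it does close every case; but it is genuinely different from the paper's argument, and the part you defer is where all the work lies. The paper avoids case analysis entirely: it observes that certain quadruples of vertices induce a $K_4^-$ in the exact square (a hexagonal face contributes a triangle on each of its two alternating vertex-triples, and a fifth vertex at distance~$2$ from two of the three completes a $K_4^-$), so that in any $3$-coloring the two non-adjacent vertices of the $K_4^-$ are \emph{forced equal}. Chaining these equalities gives $c(v_5)=c(t_2)=c(u_3)=c(w_4)$ with no branching, which already kills the first graph since $w_4$ and $v_5$ are at distance~$2$; a short further chain kills the second. Your route instead fixes the essentially unique proper $3$-coloring of $C^{[\sharp 2]}\cong C_5$ (a singleton color class plus two monochromatic adjacent pairs), propagates outward, and derives a conflict. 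This works — I verified that all five positions of the singleton lead to a distance-$2$ clash in the first graph, and likewise (three positions up to the reflection symmetry) in the second — but it costs roughly eight sub-cases of propagation where the paper needs none, and your proposal does not actually execute them; as written it is a plan rather than a proof.

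Two smaller points to fix if you carry this out. First, you invoke the ``forcing rule at $u_i$'' for every attachment vertex, but in both figures $u_1$ (and $w_5$ in the second figure) has degree~$2$, so its neighbors need not see all three colors; the color of the adjacent interior vertex $t_1$ is still pinned down, but by the two distance-$2$ constraints $c(t_1)\neq c(v_1)$ and $c(t_1)\neq c(u_2)$ together with $c(v_1)\neq c(u_2)$, not by a three-neighbor argument. Second, the two graphs do not ``differ in the way the three 6-faces are placed around $C$'': both have three 6-faces on the consecutive edges $v_1v_2$, $v_2v_3$, $v_3v_4$, and they differ in what is attached along $v_4v_5$ (a 5-face versus a fourth 6-face). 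That attachment is essential — the bare pentagon with three hexagons does admit consistent extensions in some sub-cases — so your propagation must reach into it to obtain the contradiction. Finally, the appeal to $H$ being an induced subgraph of a fullerene is unnecessary: the lemma concerns the two drawn graphs themselves, and the only fact you need is that each has girth~$5$, which is immediate by inspection.
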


\begin{proof}
We will repeatedly use the fact that in a proper 3-coloring of a $K_4^{-}$ (that is, the complete graph on four vertices minus one edge), the nonadjacent vertices must receive a same color. Applying this observation to the exact square of the graphs of Figure~\ref{fig:3c6-firstPrecoloring}, we conclude that in a hypothetical 3-coloring of each of them, vertices $v_5$, $t_2$, $u_3$ and $w_4$ must receive a same color (red or diamond shaped in the figure). This is already a contradiction in the graph of Figure~\ref{subfig:3surronding_C6}, as vertices $w_4$ and $v_5$ are at distance~2. 

To complete the proof for the graph of Figure~\ref{subfig:4surronding_C6}, observe that for the same reason, vertices $u_1$, $t_3$ and $v_4$ must also get a same color, and this color must be distinct from red because $u_3$ and $v_4$ are at distance~2. We suppose this color to be green (pentagon shaped in the figure). Furthermore, vertices $v_2$ and $t_6$ are colored by the third color (blue or hexagon shaped), because each of them sees both other colors at distance~2. Moreover, since $w_5$ must receive the same color as $t_6$, we conclude that $w_5$ must be colored blue.

Now, we know that $v_1$ sees both blue and green at distance~2, and thus it must be colored red. Repeating the $K_4^{-}$ argument, we conclude that $t_4$ and then $u_4$ must be colored red as well. This is a contradiction, as $u_4$ is at distance~2 from the vertex $v_5$, colored red.
\end{proof}

\begin{figure}[h!]
\centering
\subfloat[]{\label{subfig:3surronding_C6}
	\begin{tikzpicture}[scale=0.6]
	\tikzstyle{vertex}=[circle,draw,thick,fill=white,inner sep=1pt]
	\tikzstyle{vertexr}=[diamond,draw,thick,inner sep=0.4pt]

	\foreach \i in {1,...,4}{
	\node[vertex] (v\i) at (30+60*\i:2) {$v_{\i}$};
	}

	\node[vertexr] (v5) at (0:2) {$v_5$};
	
	\draw (v1) -- (v2) -- (v3) -- (v4) -- (v5) -- (v1);
	
	\foreach \i in {1,2,4}{
	\node[vertex] (u\i) at (30+60*\i:4) {$u_{\i}$};
	}

	\node[vertexr] (u3) at (210:4) {$u_3$};

	\foreach \i in {1,3}{
	\node[vertex] (t\i) at (80+30*\i:5.2) {$t_{\i}$};
	}
	\node[vertex] (t5) at (230:5.4) {$t_{5}$};
	
	\foreach \i in {4,6}{
	\node[vertex] (t\i) at (72+30*\i:5.2) {$t_{\i}$};
	}
    \node[vertexr] (t2) at (132:5.2) {$t_2$};
	
	\foreach \i in {1,...,4}{
	\draw (v\i) -- (u\i);
	}
	
	\draw (u1) -- (t1) -- (t2) -- (u2) -- (t3) -- (t4) -- (u3) -- (t5) -- (t6) -- (u4);

	\node[vertex] (t3) at (170:5.2) {$t_3$};
	\node[vertex] (u1) at (90:4) {$u_1$};
	\node[vertex] (v4) at (270:2) {$v_4$};
	
	\node[vertex] (t6) at (252:5.2) {$t_6$};
	\node[vertex] (v2) at (150:2) {$v_2$};

	\node[vertexr, right of=u4,node distance=2.5cm] (w4)  {$w_4$};
	
	\node[vertex, right of=v4,node distance=2.5cm] (w5) {$w_5$};
	
	\draw (u4) -- (w4) --(w5)-- (v5);
	\end{tikzpicture}
}\hspace*{0.8cm}
\subfloat[]{\label{subfig:4surronding_C6}
\begin{tikzpicture}[scale=0.55]
	\tikzstyle{vertex}=[circle,draw,thick,fill=white,inner sep=1pt]
	\tikzstyle{vertexb}=[regular polygon,regular polygon sides=5,draw,thick,inner sep=0pt]
	\tikzstyle{vertexr}=[diamond,draw,thick,inner sep=0.4pt]
	\tikzstyle{vertexg}=[regular polygon,regular polygon sides=6,draw,thick,inner sep=0pt]

	\foreach \i in {1,3}{
	\node[vertex] (v\i) at (30+60*\i:2) {$v_{\i}$};
	}
	\node[vertexb] (v2) at (150:2) {$v_2$};
	\node[vertexg] (v4) at (270:2) {$v_4$};
	\node[vertexr] (v5) at (0:2) {$v_5$};

	\draw (v1) -- (v2) -- (v3) -- (v4) -- (v5) -- (v1);
	
	\foreach \i in {2,4}{
	\node[vertex] (u\i) at (30+60*\i:4) {$u_{\i}$};
	}
	\node[vertexg] (u1) at (90:4) {$u_1$};
	\node[vertexr] (u3) at (210:4) {$u_3$};

	\foreach \i in {1}{
	\node[vertex] (t\i) at (80+30*\i:5.2) {$t_{\i}$};
	}
	\node[vertex] (t5) at (230:5.4) {$t_{5}$};
	\node[vertexg] (t3) at (170:5.2) {$t_3$};
	
	\foreach \i in {4}{
	\node[vertex] (t\i) at (72+30*\i:5.2) {$t_{\i}$};
	}
	\node[vertexr] (t2) at (132:5.2) {$t_2$};
	\node[vertexb] (t6) at (252:5.2) {$t_6$};
	
	\foreach \i in {1,...,4}{
	\draw (v\i) -- (u\i);
	}
	
	\draw (u1) -- (t1) -- (t2) -- (u2) -- (t3) -- (t4) -- (u3) -- (t5) -- (t6) -- (u4);
	
	\node[vertexr, below of=v5,node distance=3.4cm] (w4)  {$w_4$};
	
	\node[vertexb, right of=v4,node distance=2.2cm,inner sep=-0.9pt] (w5) {$w_5$};
	\node[vertex, right of=u4,node distance=2.2cm, minimum size=15pt] (s) {};
	
	\draw (u4) -- (w4) -- (s) --(w5)-- (v5);
	
	\end{tikzpicture}
}
\caption{The forced precolorings from the proof of Lemma~\ref{lem:no-C5With3C6}.}
\label{fig:3c6-firstPrecoloring}
\end{figure}
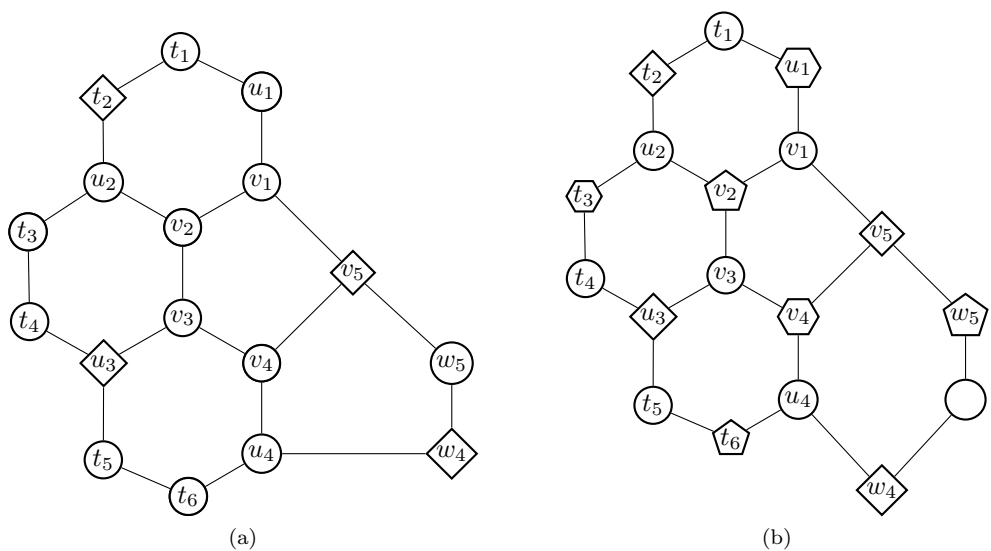

Similarly, we show in the next lemma, that the graph of Figure~\ref{fig:3C5-1}, while admitting an exact square 3-coloring, has limits on its possible 3-colorings.

\begin{lemma}\label{lem:C5SurrondedBy3C5}
In an exact square 3-coloring of the graph $G$ of Figure~\ref{fig:3C5-1}, vertices $x$ and $y$ must receive distinct colors.
\end{lemma}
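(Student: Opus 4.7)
The plan is to argue by contradiction, mimicking the strategy of Lemma~\ref{lem:no-C5With3C6}. Suppose there is an exact square $3$-coloring $\phi$ of $G$ with $\phi(x)=\phi(y)=c_1$. The central tool is again the \emph{$K_4^{-}$ forcing rule}: in any proper $3$-coloring of $K_4$ minus an edge, the two non-adjacent vertices must receive the same color. Concretely, I would apply this rule to quadruples of vertices of $G$ that induce $K_4^{-}$ in $G\sharptwo$, which arise in abundance from the pentagonal structure: every $5$-face of $G$ contributes a $C_5$ to $G\sharptwo$, and adjacent pentagons share vertices whose distance-$2$ neighbourhoods overlap, producing many $K_4^{-}$ configurations.

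First, I would catalogue the $K_4^{-}$'s of $G\sharptwo$ supported by the central $5$-face together with the three surrounding $5$-faces of Figure~\ref{fig:3C5-1}, labelling vertices in a way that exposes the rotational symmetry among the three outer pentagons. Because two pentagons sharing an edge $uv$ produce, through their four vertices at distance~$2$ from $\{u,v\}$, several overlapping $K_4^{-}$'s in $G\sharptwo$, each such pair of pentagons already yields a non-trivial pair of vertices that must share a color in any $3$-coloring.

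Next, starting from $\phi(x)=\phi(y)=c_1$, I would propagate color constraints through the three outer pentagons using the $K_4^{-}$ rule step by step, in the same style as the propagation at the end of the proof of Lemma~\ref{lem:no-C5With3C6}. Each application either forces a further vertex to be colored $c_1$, or (because a vertex sees both $c_2$ and $c_3$ at distance $2$) forces it to be $c_1$ indirectly. Exploiting the symmetry among the three outer pentagons, the propagation in one of them determines what happens in the others, so it suffices to follow the argument carefully in one sector and then mirror it.

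The propagation is designed to end by producing two vertices at distance exactly $2$ in $G$ that are both forced to carry color $c_1$, contradicting $\phi$ being an exact square coloring and completing the proof. The main obstacle I anticipate is organizational rather than conceptual: one must choose the order of $K_4^{-}$ applications so as to reach the monochromatic distance-$2$ pair with a minimum of case analysis, and avoid losing track of which of the three colors has been forced on each vertex along the way.
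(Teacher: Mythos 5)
Your proposal has the right overall shape (assume $\phi(x)=\phi(y)$ and propagate forced colors until some pair of vertices at distance exactly~$2$ receives the same color), but it never actually performs the propagation: every step is announced in the conditional (``I would catalogue\dots'', ``I would propagate\dots'', ``the propagation is designed to end\dots'') and no vertex is ever assigned a concrete color. Since the entire content of the lemma \emph{is} that chain of forced assignments, what you have written is a plan rather than a proof. For comparison, the paper's argument takes four lines: with $\phi(x)=\phi(y)=1$, the vertices $v_1$ and $v_3$ are each at distance~$2$ from $x$ and from one another, so without loss of generality $\phi(v_1)=2$ and $\phi(v_3)=3$; then $v_5$ (which sees $y$ and $v_3$ at distance~$2$) is forced to~$2$, while $u_2$ (which sees $y$ and $v_1$) and $u_4$ (which sees $x$ and $v_5$) are each forced to the one remaining color --- yet $u_2$ and $u_4$ are themselves at distance~$2$, a contradiction. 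Note in particular that the contradiction is a monochromatic distance-$2$ pair in a color \emph{different} from $c_1$, not in $c_1$ as your plan anticipates.

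There is also a concrete flaw in the tool you propose to rely on. The $K_4^{-}$ forcing rule that drives Lemma~\ref{lem:no-C5With3C6} comes from the $6$-faces there: the exact square of a $6$-cycle splits into two triangles, and adjacent hexagons glue such triangles into copies of $K_4^{-}$. A $5$-face contributes no triangle at all to $G\sharptwo$ (its exact square is again a $5$-cycle), and a direct check shows that two pentagons sharing an edge do not induce any $K_4^{-}$ in the exact square either. So your claim that each pair of adjacent pentagons ``yields a non-trivial pair of vertices that must share a color'' via $K_4^{-}$ is unsupported; for the all-pentagon graph of Figure~\ref{fig:3C5-1} the forcing must instead come from the weaker (but here sufficient) observation that a vertex seeing two distinct colors at distance~$2$ is forced to the third.
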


\begin{proof}
By contradiction, suppose $\phi$ is an exact square 3-coloring of $G$ with $\phi(x)=\phi(y)=1$. Then, without loss of generality, we can assume that $\phi(v_1)=2$ and $\phi(v_3)=3$. Therefore, we have $\phi(v_5)=2$. But then, $\phi(u_2)=2$ because $u_2$ sees $v_1$ and $y$ at distance~2. Similarly, $\phi(u_4)=2$. This is a contradiction since $u_2$ and $u_4$ see each other at distance~2.
\end{proof}

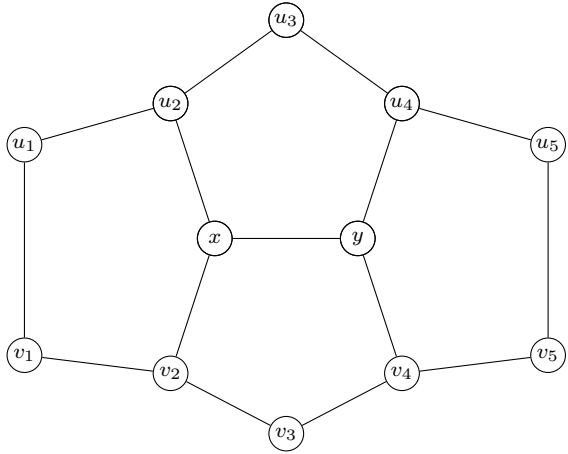
\begin{figure}[!ht]
\centering
	\begin{tikzpicture}[scale=0.8]
	\tikzstyle{vertex}=[circle, draw, inner sep=0pt, minimum size=13pt,font=\footnotesize]
	
	\foreach \i in {1,...,5}{
	    \node[vertex] (u\i) at (72*\i+18:2) {};
	}
	\draw (u1) -- (u2) -- (u3) -- (u4) -- (u5) -- (u1);
	
	\node[vertex] (u1) at (90:2) {$u_3$};
	\node[vertex] (u2) at (162:2) {$u_2$};
	\node[vertex] (u3) at (234:2) {$x$};
	\node[vertex] (u4) at (306:2) {$y$};
	\node[vertex] (u5) at (18:2) {$u_4$};
	
	\node[vertex] (v3) at (90:-4.8541) {$v_3$};
	
	\node[vertex] (v2) at (-1.9021,-3.854) {$v_2$};
	
	\node[vertex] (v4) at (1.9021,-3.854) {$v_4$};
	
	\draw (u3) -- (v2) -- (v3) -- (v4) -- (u4);
	
	\node[vertex] (u1) at (-4.3042,-0.064) {$u_1$};
	
	\node[vertex] (v1) at (-4.3042,-3.5539) {$v_1$};
	
	\draw (u2) -- (u1) -- (v1) -- (v2);
	
	\node[vertex] (u5b) at (4.3042,-0.064) {$u_5$};
	
	\node[vertex] (v5) at (4.3042,-3.5539) {$v_5$};
	
	\draw (u5) -- (u5b) -- (v5) -- (v4);
	
	\end{tikzpicture}
\caption{A 5-cycle surrounded by three consecutive 5-cycles.} \label{fig:3C5-1}
\end{figure}

\begin{lemma}
\label{lem:no-3C5-1}
If $G$ is a fullerene graph which is not the 1-drum and contains the graph of Figure~\ref{fig:3C5-1} as a subgraph, then $\Chisharp(G) \geq 4$.
\end{lemma}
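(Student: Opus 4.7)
The plan is to derive a contradiction from the assumption $\chi\sharptwo(G)\le 3$. Fix a proper coloring $\phi$ of $G\sharptwo$. Since $G$ is a fullerene of girth~$5$, Lemma~\ref{lem:small_cycle_fullerene} implies that every $5$-cycle of $G$ is a face, so the central $5$-cycle $F_c=u_3u_2xyu_4$ of the copy of $H$ in $G$ and the three $5$-cycles attached to its edges $u_2x$, $xy$ and $yu_4$ are all $5$-faces. By Lemma~\ref{lem:C5SurrondedBy3C5} we have $\phi(x)\neq\phi(y)$, and up to renaming colors and exploiting the top--bottom symmetry of $H$ I may set $\phi(x)=3$ and $\phi(y)=2$. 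Propagating the distance-$2$ rule inside $H$ then forces the whole coloring: each of $u_3$ and $v_3$ is at distance~$2$ from both $x$ and $y$, giving $\phi(u_3)=\phi(v_3)=1$; this in turn forces $\phi(u_1)=\phi(v_1)=2$, $\phi(u_5)=\phi(v_5)=3$, and finally $\phi(u_2)=3$, $\phi(u_4)=1$, $\phi(v_2)=1$, $\phi(v_4)=2$. Denoting by $w_3,w_1,w_5$ the third neighbors in $G$ of the degree-$2$ vertices $u_3,u_1,u_5$ of $H$, distance-$2$ computations through these three vertices yield $\phi(w_3)=\phi(w_5)=2$ and (crucially) $\phi(w_1)=1=\phi(u_3)$.

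Next I consider the two faces $F_1$ and $F_2$ of $G$ incident with the edges $u_3u_2$ and $u_3u_4$ on the side opposite to $F_c$. If $F_1$ were a $5$-face, then $F_1=u_3u_2u_1w_1w_3$, making $w_3$ a common neighbor of $u_3$ and $w_1$; hence $w_1$ would lie at exact distance~$2$ from $u_3$, contradicting $\phi(w_1)=\phi(u_3)=1$. Symmetrically, if $F_2$ were a $6$-face $u_3u_4u_5w_5q_2w_3$, then $q_2$ would be a common neighbor of $w_5$ and $w_3$, placing them at exact distance~$2$ although both have color~$2$. Therefore $F_1$ is forced to be a $6$-face $u_3u_2u_1w_1q_1w_3$ and $F_2$ is forced to be the $5$-face $u_3u_4u_5w_5w_3$, precisely the local configuration that appears around any pentagonal face of the $1$-drum.

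To conclude I iterate the same methodology: each time a new vertex is uncovered on the boundary of a newly-determined face, it has two already-colored distance-$2$ neighbors, which pin down its color uniquely; then the very same $5$-versus-$6$ case analysis at each new boundary edge rules out one of the two possibilities, forcing the next face size to coincide with that of the $1$-drum. Applying this at the remaining boundary edges of $F_1$, $F_2$ and of the three pentagons attached to $F_c$, and then propagating outward, every face of $G$ is forced to match the corresponding face of the $1$-drum; since the $1$-drum is a complete $24$-vertex fullerene, the propagation closes up consistently and $G$ is the $1$-drum, contradicting our hypothesis. The main technical obstacle is this iterative propagation: while each individual step mirrors the local argument of the preceding paragraph, its repeated application requires careful bookkeeping of the forced colors and of which face size is compatible with them at every boundary edge, until the structure closes into the full $1$-drum.
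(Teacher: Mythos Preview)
Your approach is the same as the paper's: use Lemma~\ref{lem:C5SurrondedBy3C5} to separate $\phi(x)$ and $\phi(y)$, propagate the colouring through $H$, then read off each successive face size from the resulting colour conflicts until the structure closes into the $1$-drum. The paper organises the final iteration slightly differently, by observing that once the first ring of faces is determined, new copies of the configuration of Figure~\ref{fig:3C5-1} appear (for instance, one centred at the edge $u_1u_2$), so one can simply reapply the whole argument rather than checking boundary edges one by one; this makes the ``closing up'' step cleaner than the edge-by-edge propagation you sketch.

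There is, however, one genuine slip. The values $\phi(u_2),\phi(u_4),\phi(v_2),\phi(v_4)$ are \emph{not} forced once $\phi(x)$ and $\phi(y)$ are fixed: one only gets $\{\phi(u_2),\phi(v_2)\}=\{1,3\}$ and $\{\phi(u_4),\phi(v_4)\}=\{1,2\}$, leaving two symmetric possibilities. In the case $\phi(u_2)=1$, $\phi(u_4)=2$ your computations give $\phi(w_1)=3$, $\phi(w_3)=3$, $\phi(w_5)=1$, and then neither your $F_1$ argument (it needs $\phi(w_1)=\phi(u_3)$) nor your $F_2$ argument (it needs $\phi(w_3)=\phi(w_5)$) produces a contradiction. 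The fix is exactly the symmetry you mention, but applied at the right place: the automorphism $u_i\leftrightarrow v_i$ of $H$ fixes $x$ and $y$, so it is useless for normalising $\phi(x),\phi(y)$ (a colour permutation alone does that), but it is precisely what lets you assume $\phi(u_2)=3$ afterwards. This is where the paper invokes it.
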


\begin{proof}
Let $G$ be an exact square 3-colorable fullerene graph which contains the graph of Figure~\ref{fig:3C5-1} as a subgraph. Moreover, let $\phi$ be its exact square 3-coloring. By Lemma~\ref{lem:C5SurrondedBy3C5} and without loss of generality, we may assume that $\phi(x)=1$ and $\phi(y)=2$. Hence, $\phi(u_3)=\phi(v_3)=3$ and $\{\phi(u_2), \phi(v_2)\}=\{1, 3\}$  and  $\{\phi(u_4), \phi(v_4)\}=\{2, 3\}$. By the symmetry along the edge $xy$, we may assume that $\phi(u_2)=3$, which then implies $\phi(v_2)=1$, $\phi(u_4)=2$ and $\phi(v_4)=3$. Therefore, we have $\phi(u_1)=\phi(v_1)=2$ and $\phi(u_5)=\phi(v_5)=1$. 

Next, noting that $G$ is a 3-regular graph, we consider the remaining neighbors of degree~2 vertices of this subgraph. Let $a,b,c,d,e,f$ be, respectively, the neighbors of $u_1,u_3, u_5, v_5,v_3,v_1$. The coloring extends uniquely to these six vertices as follows: $\phi(a)=\phi(b)=1$, $\phi(c)=\phi(f)=3$, and $\phi(d)=\phi(e)=2$.
Since $G$ is planar and cubic, vertices $a$ and $b$ are lying on the same face. Moreover, since $G$ can have only $5$-faces and $6$-faces, we conclude that $a$ and $b$ (which are colored with the same color) must be adjacent. Similarly, we conclude that $e$ and $d$ must be adjacent.
On the other hand, $b$ and $c$ cannot be adjacent, as otherwise $b$ and $u_5$ would be at distance~2 while both having the same color. Hence, $b$ and $c$ have a common neighbor, say $b'$, and thus we get a 6-face $u_5u_4u_3bb'cu_5$, which we name $F$. Similarly, vertices $e$ and $f$ have a common neighbor, say $e'$, thus we get a 6-face $v_1v_2v_3ee'fv_1$ and we name it $F'$.
Now, vertices $a$ and $f$ must lie on the same face, which can be either a 5-face or a 6-face. Observe also that the third neighbor of $a$, say $a'$, distinct from $u_1$ and $b$ must be colored~3. Therefore, since $f$ is colored~3, vertex $a'$ cannot be at distance~2 from $f$ and thus we conclude that $a$ and $f$ are lying on a 5-face $au_1v_1fa'a$. Symmetrically, we get that vertices $c$ and $d$ are lying on a 5-face as well: $dv_5u_5cc'd$ (where $c'$ is the common neighbor of $c$ and $d$).

In summary, starting from the graph of Figure~\ref{fig:3C5-1}, with $xy$ being the central edge, we concluded that the neighboring structure is forced. But we now have other isomorphic copies of the graph of Figure~\ref{fig:3C5-1} inside $G$, for example one centered around $u_1u_2$ and another one centered around $v_4v_5$. Thus, the same local neighborhood structures should exist around these edges. The 6-faces in these structures are already given ($F$ and $F'$). Thus, we conclude that $e',e,d,c'$ are lying on a 5-face. Similarly, $a',a,b,b'$ are lying on a 5-face as well. This forces a graph where all but two vertices have degree~3, the other two vertices being of degree~2. To complete this to a 3-connected cubic graph, we then join these two vertices and obtain the 1-drum.
\end{proof}

 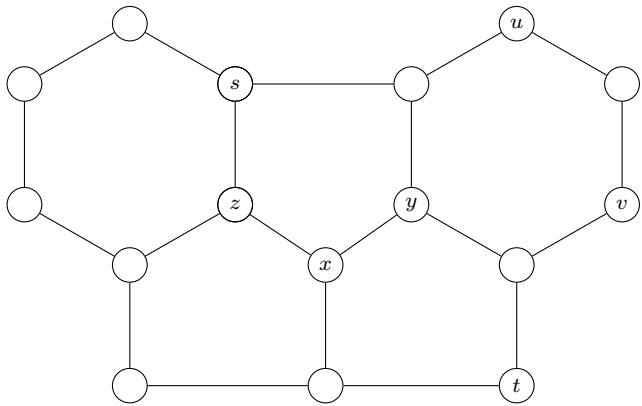
\begin{figure}[!ht]
\centering
	\begin{tikzpicture}[scale=0.8]
	\tikzstyle{vertex}=[circle, draw, inner sep=0pt, minimum size=13pt,font=\footnotesize]

	\foreach \i in {1,...,6}{
	    \node[vertex] (u\i) at (60*\i+30:2) {};
	}
	\draw (u1) -- (u2) -- (u3) -- (u4) -- (u5) -- (u6) -- (u1);
	
	\node[vertex] (u5) at (330:2) {$z$};
	\node[vertex] (u6) at (30:2) {$s$};
	\node[vertex] (x) at (3.218,-2) {$x$};
	\node[vertex] (y) at (4.6282,-1) {$y$};
	\node[vertex] (s') at (4.6282,1) {};
	
	\draw (u5) -- (x) -- (y) -- (s') -- (u6);
	
	\node[vertex] (x') at (6.3602,-2) {};
	\node[vertex] (u1') at (6.3602,2) {$u$};
	
	\node[vertex] (s'') at (8.0922,1) {};
	\node[vertex] (y') at (8.0922,-1) {$v$};
	
	\draw (s') -- (u1') -- (s'') -- (y') -- (x') -- (y);
	
	\node[vertex] (a) at (0,-4) {};
	\node[vertex] (b) at (3.218,-4) {};
	\node[vertex] (c) at (6.3602,-4) {$t$};
	
	\draw (u4) -- (a) -- (b) -- (x);
	\draw (b) -- (c) -- (x');
	
	\end{tikzpicture}
\caption{The graph of Lemma~\ref{lem:no-3C5-2}.}
 \label{fig:3C5-2}
\end{figure}

\begin{lemma}
\label{lem:no-3C5-2}
The graph of Figure~\ref{fig:3C5-2} does not admit an exact square 3-coloring.
\end{lemma}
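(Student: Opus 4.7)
The plan is to assume for contradiction that $G$ (the graph of Figure~\ref{fig:3C5-2}) admits an exact square 3-coloring $\phi$, and to propagate forced colors until a conflict arises. Two elementary observations drive the entire argument: (i) the three neighbors of any degree-3 vertex of $G$ are pairwise at distance~2 (through that vertex), so they form a triangle in $G\sharptwo$ and must receive all three colors; and (ii) the alternating vertices of any 6-face of $G$ split into two triangles of $G\sharptwo$, while the five vertices of any 5-face induce a $C_5$ in $G\sharptwo$. The instances I will exploit are the triangles at the degree-3 vertices $z, s, s', x, y, u_4, b, x'$, the hexagonal alternating triangle $\{u_1, u_3, z\}$, the alternating triangles $\{s', s'', x'\}$ and $\{u, v, y\}$ of the 6-face $F_3 = s'us''vx'y$, and the $C_5$ induced by the 5-face $F_2 = zxys's$.

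Using color symmetry I first fix $\phi(z) = 1$; the hexagonal triangle $\{u_1, u_3, z\}$ then forces $\{\phi(u_1), \phi(u_3)\} = \{2, 3\}$, and by relabeling colors~2 and~3 if necessary I may assume $\phi(u_3) = 2$ and $\phi(u_1) = 3$. Applying (i) at $u_4$ (with neighbors $u_3, z, a$) gives $\phi(a) = 3$, and at $s$ (with neighbors $z, u_1, s'$) gives $\phi(s') = 2$. Next, the triangle at $b$ (neighbors $a, x, c$) combined with $\phi(a) = 3$ yields $\{\phi(x), \phi(c)\} = \{1, 2\}$, while the triangle at $y$ (neighbors $x, s', x'$) combined with $\phi(s') = 2$ yields $\{\phi(x), \phi(x')\} = \{1, 3\}$. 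Intersecting, $\phi(x) = 1$, so $\phi(c) = 2$ and $\phi(x') = 3$. The triangle at $x$ (neighbors $z, y, b$) then gives $\{\phi(y), \phi(b)\} = \{2, 3\}$.

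Finally, I split on $\phi(y) \in \{2, 3\}$. In either sub-case, the $C_5$ of $F_2$ (in cyclic order $z - y - s - x - s' - z$) forces $\phi(s)$ uniquely, since it must differ from both $\phi(y)$ and $\phi(x) = 1$; then the triangle at $s'$ (neighbors $y, s, u$) determines $\phi(u)$; then the alternating triangle $\{u, v, y\}$ of $F_3$ determines $\phi(v)$; and finally the triangle at $x'$ (neighbors $v, y, c$) forces a value for $\phi(c)$. A brief direct check shows that in both sub-cases this propagation gives $\phi(c) = 1$, contradicting the earlier conclusion $\phi(c) = 2$. The main obstacle is purely organizational: structurally, the proof exploits the fact that $\phi(c)$ is pinned down from two disjoint sides of $G$ (via $b$ on one side, and via $x'$ together with $F_3$ on the other) to incompatible values, so as soon as the initial normalization is done the contradiction follows from a short, mechanical case split.
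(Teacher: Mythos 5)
Your proof is correct: I checked each forced step against the adjacencies of Figure~\ref{fig:3C5-2}, and the propagation closes in both sub-cases, with the vertex you call $c$ (labelled $t$ in the figure) forced to colour $2$ by the triangle at $b$ and to colour $1$ by the chain through $s$, $u$, $v$ and the vertex you call $x'$. The organization is genuinely different from the paper's, which is much shorter. The paper proves the single claim that $\phi(x)=\phi(y)$ in any exact square $3$-coloring: it normalizes the colours on the $G\sharptwo$-triangle $\{y,u,v\}$ (the alternating vertices of the right-hand $6$-face), deduces $\phi(s)$ and $\phi(t)$, and hence $\phi(x)=\phi(y)$; it then invokes the automorphism of the configuration fixing $x$ and $b$ and exchanging the two hexagons (so $y\leftrightarrow z$, $s\leftrightarrow$ your $s'$, etc.) to conclude $\phi(x)=\phi(z)$ as well, contradicting the fact that $y$ and $z$ are at distance~$2$ through $x$. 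You instead normalize on the left-hand hexagon and push colours around both sides until $\phi(t)$ is pinned down twice, incompatibly. Both arguments rest on the same two observations about cliques in $G\sharptwo$ (neighbourhoods of $3$-vertices, alternating vertices of $6$-faces); yours trades the need to identify and justify the $y\leftrightarrow z$ symmetry for a longer deduction chain and a final two-way split on $\phi(y)$, so it is more self-contained but less economical. The only cosmetic issue is that your $c$ is the figure's $t$; align the names with the figure if you write this up.
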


\begin{proof}
To prove the lemma, we claim that in any possible exact square 3-coloring of this graph, vertices $x$ and $y$ must receive a same color. Considering the symmetry of edges $xy$ and $xz$, the same argument then would apply to $x$ and $z$. This would lead to a contradiction, since $y$ and $z$ cannot be colored the same.

To prove the claim, assume $\phi$ is an exact square 3-coloring of the graph of Figure~\ref{fig:3C5-2}, and without loss of generality, assume that $\phi(y)=1$, $\phi(u)=2$ and $\phi(v)=3$. Then, $\phi(s)=3$ and $\phi(t)=2$. This in turn implies $\phi(x)=1$, which is the color of $y$.
\end{proof}

\begin{figure}[!ht]
\centering
	\begin{tikzpicture}[scale=0.8]
	\tikzstyle{vertex}=[circle, draw, inner sep=0pt, minimum size=13pt,font=\footnotesize]
	\foreach \i in {1,...,6}{
	    \node[vertex] (u\i) at (60*\i+30:2) {};
	}
	\draw (u1) -- (u2) -- (u3) -- (u4) -- (u5) -- (u6) -- (u1);
	
	\node[vertex] (u1) at (90:2) {$s_1$};
	\node[vertex] (u2) at (150:2) {$t_1$};
	\node[vertex] (u3) at (210:2) {$z_1$};
	\node[vertex] (u4) at (270:2) {$y_2$};
	\node[vertex] (u5) at (330:2) {$z_2$};
	\node[vertex] (u6) at (30:2) {$t_2$};
	
	\node[vertex] (s2) at (3.464,2) {$s_2$};
	\node[vertex] (t3) at (5.196,1) {$t_3$};
	\node[vertex] (z3) at (5.196,-1) {$z_3$};
	\node[vertex] (y3) at (3.464,-2) {$y_3$};
	
	\draw (u6) -- (s2) -- (t3) -- (z3) -- (y3) -- (u5);
	
	\node[vertex] (x2) at (0,-4.5) {$x_2$};
	\node[vertex] (x3) at (3.464,-4.5) {$x_3$};
	
	\node[vertex] (y1) at (-3.464,-2) {$y_1$};
	\node[vertex] (x1) at (-3.464,-4.5) {$x_1$};
	
	\node[vertex] (y4) at (6.928,-2) {$y_4$};
	\node[vertex] (x4) at (6.928,-4.5) {$x_4$};
	
	\draw (u3) -- (y1) -- (x1) -- (x2) -- (u4);
	\draw (x2) -- (x3) -- (x4) -- (y4) -- (z3);
	\draw (x3) -- (y3);
	
	\node[vertex] (u) at (6.928,2) {$u$};
	\node[vertex] (v) at (8.66,0) {$v$};
	
	\draw (t3) -- (u) -- (v) -- (y4);
	\end{tikzpicture}
\caption{The graph of Lemma~\ref{lem:C5seq}.}
 \label{fig:C5seq}
\end{figure}
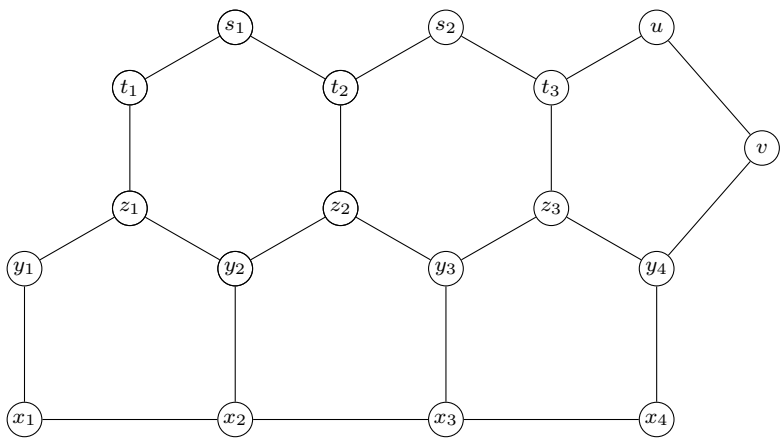

\begin{lemma}
\label{lem:C5seq}
The graph of Figure~\ref{fig:C5seq} does not admit an exact square 3-coloring.
\end{lemma}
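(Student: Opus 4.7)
The plan is to assume by contradiction that the graph $G$ of Figure~\ref{fig:C5seq} admits an exact square $3$-coloring $\phi$, and to derive a contradiction by forcing colors outward from the two $6$-faces $F = s_1 t_1 z_1 y_2 z_2 t_2$ and $F' = t_2 s_2 t_3 z_3 y_3 z_2$. In each of these faces, the three odd-positioned and the three even-positioned vertices are pairwise at distance~$2$ in $G$ (each such pair shares a common neighbor on the face, and the drawn graph has girth $5$ so no shortcut exists), hence each triple must be rainbow-colored. Combining this with the two further constraints $\phi(s_1)\neq\phi(s_2)$ and $\phi(y_2)\neq\phi(y_3)$ (from the common neighbors $t_2$ and $z_2$), the two hexagonal colorings become linked through the forced identifications $\phi(z_3)=\phi(s_1)$, $\phi(s_2)=\phi(z_1)$, $\phi(y_3)=\phi(t_1)$ and $\phi(t_3)=\phi(y_2)$.

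The main reduction then comes from looking at $x_2$: its distance-$2$ neighborhood contains $z_1, z_2$ (through $y_2$) and $y_3$ (through $x_3$), so $\phi(x_2)\notin\{\phi(z_1),\phi(z_2),\phi(y_3)\}$. If $\phi(t_1)=\phi(s_1)$, then $\phi(y_3)=\phi(s_1)$ and the three forbidden colors exhaust $\{1,2,3\}$, an immediate contradiction. So $\phi(t_1)\neq\phi(s_1)$, and I would set $\phi(s_1)=1$ and $\phi(t_1)=2$ (WLOG, after permuting colors), which forces $\phi(y_3)=2$ and $\phi(x_2)=1$. The analogous observation applied to $x_3$, namely $\phi(x_3)\notin\{\phi(y_2),\phi(z_2),\phi(z_3)\}$, will eliminate one more sub-case outright.

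Four sub-cases then remain, indexed by $\phi(z_1)\in\{2,3\}$ and $\phi(y_2)\in\{1,3\}$ (with all other hexagonal colors determined). In each, I would propagate the coloring through the pentagonal faces $z_1 y_1 x_1 x_2 y_2$ and $x_3 y_3 z_3 y_4 x_4$, and in the last surviving case through $t_3 z_3 y_4 v u$, using both the internal structure of a $5$-face (whose exact square is itself a $5$-cycle) and the external distance-$2$ constraints imposed by the degree-$2$ vertices $y_1, x_1, x_4, u, v$. I expect each branch to collapse within two or three forced-color steps, either because some inner vertex (typically $y_1$ or $y_4$) is simultaneously forced to two different colors, or because the terminal vertex $u$ ends up forbidden from all three colors. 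The only real obstacle is bookkeeping across the four branches; the rigidity inherited from the hexagon — where almost every color is uniquely forced once $\phi$ is fixed on $F$ — is what keeps the analysis short.
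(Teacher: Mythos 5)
Your argument is correct, but it takes a genuinely different route from the paper's. You work top-down from the two hexagons: the rainbow structure of the two distance-$2$ triples on each $6$-face, together with the constraints through $t_2$ and $z_2$, rigidly links the colorings of $F$ and $F'$ (the identifications $\phi(z_3)=\phi(s_1)$, $\phi(s_2)=\phi(z_1)$, $\phi(y_3)=\phi(t_1)$, $\phi(t_3)=\phi(y_2)$ all check out), and the constraint at $x_2$ correctly forces $\phi(t_1)\neq\phi(s_1)$ and $\phi(x_2)=\phi(s_1)$. I verified that the branches you leave as ``expected to collapse'' do collapse: with $\phi(s_1)=1$, $\phi(t_1)=2$, the case $(\phi(z_1),\phi(y_2))=(3,3)$ dies at $x_3$; $(2,3)$ dies at $y_1$, which sees $t_1$, $y_2$, $x_2$ in three colors; $(3,1)$ dies at $y_4$, which sees $t_3$, $y_3$, $x_3$ in three colors; and $(2,1)$ dies at $u$, which sees $s_2$, $z_3$, $y_4$ in three colors. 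So your estimate of two or three forced steps per branch is accurate (though after the $x_3$ elimination three sub-cases remain, not four). The paper instead isolates a reusable local fact --- a vertex incident to two $5$-faces and one $6$-face must receive the same color as the neighbor it shares with both $5$-faces --- proves it once by exactly the kind of propagation you use ($y_1$ and $y_3$ both forced to the third color, then $t_1$ and $t_2$ clash), applies it three times to get $\phi(x_2)=\phi(y_2)$, $\phi(x_3)=\phi(y_3)$, $\phi(z_3)=\phi(y_4)$, and finishes in one short chain ending at $t_2$ with no case split. The paper's route is shorter and its local lemma does the conceptual work; yours needs no insight beyond the rigidity of the hexagon but pays for it with a case analysis whose closure you asserted rather than exhibited.
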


\begin{proof}
By contradiction, suppose $\phi$ is a 3-coloring of the exact-square of this graph.
We first claim that $\phi(x_2)=\phi(y_2)$. If not, then we may assume $\phi(x_2)=1$ and $\phi(y_2)=2$, then $\phi(y_1)=\phi(y_3)=3$, which in turn implies that $\phi(t_1)=\phi(t_2)=1$ but $t_1$ and $t_2$ are at distance 2. 
Note that this proof is based solely on the three faces around $y_2$, so similarly, if a vertex has two 5-faces and one 6-face around it, then its color must be the same as the color of its neighbor on the 5-faces. Applying this to our graph we have: $\phi(x_3)=\phi(y_3)$, $\phi(z_3)=\phi(y_4)$.

Suppose $\phi(x_2)=\phi(y_2)=2$, then as $y_3$ is at distance 2 from $y_2$, and by the symmetry of other colors, we have $\phi(x_3)=\phi(y_3)=3$, then $\phi(z_2)=1$ and since $z_3$ is at distance 2 from both $z_2$ and $x_3$ we have $\phi(z_3)=\phi(y_4)=2$. This in turn implies that $\phi(t_3)=1$, but then $t_2$ sees all three colors at distance 2, that is $y_2$ for color 2, $y_3$ for color 3 and $t_3$ for color 1.
\end{proof}

We can now state the main theorem of this section.

\begin{theorem}\label{thm:fullerene-3-col}
A fullerene graph is exact square $3$-colorable if and only if it is a $k$-drum, for a positive integer $k$.
\end{theorem}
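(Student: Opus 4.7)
The proof has two directions. For the ``if'' direction (every $k$-drum is exact-square 3-colorable), I would give an explicit 3-coloring that exploits the 6-fold rotational symmetry of the $k$-drum. Label each vertex by a pair $(h, i)$, where $h \in \{0, 1, \ldots, 2k+1\}$ is its vertex-layer (graph distance from the face $F$) and $i \in \mathbb{Z}/6$ is its angular index, and set $\phi(a_i^{(h)}) = 1 + ((i + c_h) \bmod 3)$ for shifts $c_h$ to be determined. The $2k+1$ transitions between consecutive layers in a drum come in two types: \emph{spokes}, where each $L_h$-vertex has one $L_{h+1}$-neighbor with matching index; and \emph{zigzags}, where each $L_h$-vertex has two $L_{h+1}$-neighbors at consecutive indices. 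Distance-2 pairs arising across these transitions reduce to the local congruences $c_{h+1} \equiv c_h \pmod 3$ for spokes and $c_{h+1} \equiv c_h \pm 1 \pmod 3$ for zigzags. Starting from $c_0 = 0$ and propagating yields a consistent shift sequence, and one then directly verifies the within-layer constraints on the 6-cycles of $F$ and $F'$ together with the layer-skipping constraints between $L_h$ and $L_{h+2}$, all of which hold since increments of $\pm 1, \pm 2$ are nonzero modulo $3$.

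For the ``only if'' direction, assume $G$ is an exact-square 3-colorable fullerene that is not the 1-drum; the goal is to show $G$ must be a $k$-drum for some $k \geq 2$. The first step is to restrict the local face pattern at each pentagon: the cyclic length-$5$ sequence of face types ($P$ for pentagon, $H$ for hexagon) among its five face-neighbors. By Lemma~\ref{lem:no-3C5-1}, no pentagon of $G$ has three consecutive $P$-neighbors. Next, Lemma~\ref{lem:no-C5With3C6} forbids three consecutive $H$-neighbors: in a fullerene, the structure around a pentagon with three consecutive hexagonal neighbors is forced, and depending on whether the fourth consecutive neighbor is a pentagon or a hexagon it matches the graph of Figure~\ref{subfig:3surronding_C6} or that of Figure~\ref{subfig:4surronding_C6}. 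Consequently each pentagon's cyclic pattern avoids both $PPP$ and $HHH$ factors, leaving only $PHHPH$ (two $P$-neighbors) or $PPHPH$ (three $P$-neighbors) up to rotation.

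The second step of the ``only if'' direction uses Lemmas~\ref{lem:no-3C5-2} and \ref{lem:C5seq} to eliminate the $PPHPH$ pattern. A pentagon with $PPHPH$ pattern has two consecutive pentagonal neighbors sharing a vertex; tracing the forced neighborhood in the planar embedding produces an induced copy of the graph of Figure~\ref{fig:3C5-2} (or, after propagating the constraint one face further, of Figure~\ref{fig:C5seq}), contradicting exact-square 3-colorability. Thus every pentagon of $G$ has the pattern $PHHPH$, so the \emph{pentagon graph} of $G$ (whose 12 vertices are the pentagons of $G$ and whose edges record face-adjacencies between pentagons) is 2-regular, hence a disjoint union of cycles of total length $12$. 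Combining the same lemmas with Theorem~\ref{thm:Cyclic5-Connected} and Lemma~\ref{lem:small_cycle_fullerene} rules out all cycle decompositions except $C_6 \sqcup C_6$ and forces each $C_6$ to enclose a single hexagonal face; these two hexagons serve as $F$ and $F'$, and Proposition~\ref{prop:drumlayers} then pins down the rest of the $k$-drum structure. The hardest step is this last one: translating local forbidden-subgraph information into the global structural conclusion that the pentagon graph splits as $C_6 \sqcup C_6$ with each component enclosing a hexagon. I expect this to require a careful case analysis excluding the alternative cycle decompositions $C_{12}$, $C_3 \sqcup C_9$, $C_4 \sqcup C_8$, $C_5 \sqcup C_7$, $C_3 \sqcup C_3 \sqcup C_6$, and so on.
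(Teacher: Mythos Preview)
Your outline is largely correct and parallels the paper's proof, but you misjudge where Lemma~\ref{lem:C5seq} does its work, and as a result your final step is far more laborious than necessary.

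For the ``if'' direction, your layer/shift coloring is the same idea as the paper's: color $F$ by $1,2,3,1,2,3$ and let it propagate uniquely outward; the paper phrases this as extending to the $12$-cycle boundary of each $G_\ell$ and then matching the two halves from $F$ and $F'$.

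For the ``only if'' direction, your derivation that every pentagon has face-pattern $PHHPH$ is correct and essentially equivalent to the paper's, although the paper organises it slightly differently: it uses Lemma~\ref{lem:no-C5With3C6} for ``no $HHH$'' and then Lemmas~\ref{lem:no-3C5-1} and~\ref{lem:no-3C5-2} together to show \emph{no vertex is incident to three $5$-faces}, i.e.\ no $PP$ at all, which immediately forces $PHHPH$. In particular, Lemma~\ref{lem:no-3C5-2} alone already kills $PPHPH$ (the two hexagons it needs are exactly the two $H$'s flanking the $PP$ block), so invoking Lemma~\ref{lem:C5seq} at that stage is redundant.

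The real divergence is your last step. You propose to analyse the $2$-regular pentagon graph and rule out every cycle decomposition of $12$ except $C_6\sqcup C_6$; you yourself flag this as the hardest part. The paper bypasses this entirely, and Lemma~\ref{lem:C5seq} is precisely the tool that does it. Once every pentagon has pattern $PHHPH$, its two pentagon-neighbours lie on non-adjacent edges, so at each pentagon in the chain there are two possible ``turns''. The configuration of Figure~\ref{fig:C5seq} is exactly one of those turns, and Lemma~\ref{lem:C5seq} forbids it. Hence the chain is forced to go straight: following it, the vertices $x_i$ on one side all have full degree and therefore bound a face of $G$, which must be a $6$-face; this is $F$, surrounded by six pentagons, and the remaining six pentagons give $F'$ the same way. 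No case analysis on $C_{12}$, $C_3\sqcup C_9$, etc.\ is needed. So your plan is salvageable, but you should redeploy Lemma~\ref{lem:C5seq} from the $PPHPH$ step (where it is unnecessary) to the chain-tracing step (where it does all the global work in one stroke).
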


\begin{proof}
First, we show that every drum is exact square 3-colorable. Given a $k$-drum, consider its 6-face $F$ surrounded by six 5-faces. Color the vertices of $F$ with $1,2,3,1,2,3$ in a clockwise orientation of $F$. Observe that, by Proposition~\ref{prop:drumlayers}, for $2 \leq l \leq k$, there are six faces of the same length at distance $l$ from $F$. If we consider the subgraph $G_l$ induced by faces at distance at most $l$ from $F$, its outer face is a 12-cycle. The 3-coloring of $F$ is then uniquely extended to a 3-coloring of $G_l$, where the 12 vertices of the outer face are colored consecutively $1,2,3$ in the counterclockwise orientation with respect to $F$. 

Similarly, if we start from the face $F'$ of the $k$-drum, and color the vertices of $F'$ with $1,2,3$ in the counterclockwise orientation (of $F'$), then the coloring extends uniquely to any subgraph obtained by faces at distance at most $k-l$ from $F'$. In such a coloring, vertices of the outer 12-face are colored consecutively $1,2,3$ with clockwise orientation with respect to $F'$. As this orientation of the outer 12-face matches its counterclockwise orientation with respect to $F$, we can choose a proper rotation of colors on $F'$, so that we can merge the colorings of both parts of the drum. Notice that, except the vertices of the 12-cycle, all the vertices of one part of the drum are at distance at least~3 from vertices of the other part.

It remains to show that if a fullerene graph admits an exact square 3-coloring, then it must be a drum. Let $G$ be an exact square 3-colorable fullerene graph. By Lemma~\ref{lem:no-C5With3C6}, a 5-face cannot have three consecutive 6-faces. By Lemmas~\ref{lem:no-3C5-1} and~\ref{lem:no-3C5-2}, a vertex cannot be incident to three 5-faces. This leaves us with one possibility: each 5-face $C$ is neighbor with two other 5-faces through two non-adjacent edges of $C$.
Labelling the vertices of one of the 5-faces, in the cyclic order,  $x_2,y_2,z_2,y_3,x_3$, we may assume that each of the edges $x_2y_2$ and $x_3y_3$ is incident to another 5-face, thus so far we have the subgraph of Figure~\ref{fig:C5seq} induced by vertices $x_i$'s, $y_i$'s and $z_i$'s, for $1\leq i \leq 3$.
Using the labeling of this figure, the claim of Lemma~\ref{lem:C5seq} implies that the second 5-face neighbor of the face $x_3y_3z_3y_4x_4$ cannot be on the edge $z_3y_4$, thus it must be on $y_4x_4$. Completing this sequence of 5-faces, we conclude that vertices $x_i$ form a face of $G$ as they are already of full degree. This face then can only be a 6-face which is the face $F$ of the drum. The face $F'$ of drum is found the same way by considering the remaining 5-faces. 
\end{proof}

\subsection{Fullerene graphs are exact square 5-colorable}\label{sec:fullerenes-5col}

To prove the main result of this section, we first give two lemmas on proper coloring of some graphs. Specifically, we show that the precoloring extension of two graphs (where vertices take colors from lists of given sizes) can always be done when the lists are subsets of $\{1,2,3,4,5\}$.

\begin{lemma}\label{lemma:6-wheel}
Let $abcdef$ be a 6-cycle and $x$ be a vertex such that $N(x)=\{a,b,c,d,e,f\}$. Suppose these vertices have lists of available colors from the set $\{1,2,3,4,5\}$ satisfying the following: $|L(a)|\geq 2$, $|L(b)|\geq 2$, $|L(c)|\geq 2$, $|L(d)|\geq 2$, $|L(e)|\geq 2$, $|L(f)|\geq 3$ and $|L(x)|=5$ (see Figure~\ref{fig:6-wheel}). Then there exists a proper $L$-coloring of this graph. 
\end{lemma}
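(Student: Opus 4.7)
The plan is to first pick a color $c^*\in L(x)=\{1,\dots,5\}$ for $x$, then list-color the $6$-cycle $C=abcdef$ using the reduced lists $L'(v)=L(v)\setminus\{c^*\}$. The central tool is the Erdős--Rubin--Taylor theorem: every even cycle (in particular $C_6$) is $2$-choosable, so any list assignment on $C$ with $|L'(v)|\geq 2$ for every $v\in V(C)$ extends to a proper coloring. Since $|L(f)|\geq 3$, we always have $|L'(f)|\geq 2$ irrespective of $c^*$, so the only vertices that can become over-restricted are those in $S=\{v\in\{a,b,c,d,e\}:|L(v)|=2\}$ whose list contains $c^*$.

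Set $B=\bigcup_{v\in S}L(v)$. If $|B|\leq 4$, pick $c^*\in\{1,\dots,5\}\setminus B$: every reduced list has size at least~$2$, and the $2$-choosability of $C_6$ completes the coloring. Otherwise $B=\{1,\dots,5\}$, which forces $|S|\geq 3$ since each list contributes at most two colors. An averaging argument on $\sum_{c=1}^5 |\{v\in S:c\in L(v)\}|=2|S|$ yields a color $c^*$ lying in at most $\lfloor 2|S|/5\rfloor$ of the lists in $S$, i.e.\ at most~$1$ when $|S|\in\{3,4\}$ and at most~$2$ when $|S|=5$.

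Fix this $c^*$ for $x$. The vertices of $S$ with $c^*\in L(v)$ become ``forced'' (their reduced list is a singleton), while every other vertex of $C$ keeps a reduced list of size at least~$2$. Removing the (at most two) forced vertices splits $C$ into one or two paths along which we would color greedily from the ends; the size-$3$ list of $f$ then provides the slack required to close the chain, since $c(f)$ only needs to avoid $c^*$ and its two cycle-neighbors, and $|L(f)\setminus\{c^*,c(a),c(e)\}|\geq 1$.

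The main obstacle is the sub-case $|S|=5$, in which two (non-adjacent) forced vertices on $C$ can appear for every choice of $c^*$. There one must verify that for some $c^*$ the greedy propagations from the two forced vertices do not collide before reaching $f$; this is handled by a brief case analysis on the intersection pattern of the five size-$2$ lists covering $\{1,\dots,5\}$, choosing $c^*$ either so that the two forced vertices are adjacent on $C$ (reducing to a single path whose closure occurs at $f$), or so that the conflict point of the greedy chain lands precisely at $f$, where the size-$3$ list absorbs the extra constraint.
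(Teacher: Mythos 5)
Your opening reduction is appealing and genuinely different from the paper's: fixing the colour $c^*$ of $x$ first and invoking the $2$-choosability of even cycles cleanly disposes of the case where some colour of $\{1,\dots,5\}$ avoids all the tight lists. The paper instead never fixes $x$ first; it forces a repeated colour on the rim (colouring $a$ and $c$, or $a$ and $e$, alike) and then uses ad hoc recolouring arguments to free a colour for $x$. Unfortunately, the case $B=\{1,\dots,5\}$ is exactly where all the difficulty of the lemma lives, and your treatment of it has a genuine gap. The pivotal inequality $|L(f)\setminus\{c^*,c(a),c(e)\}|\geq 1$ is false in general: you only have $|L(f)|\geq 3$, and nothing prevents $c^*$, $c(a)$, $c(e)$ from being three distinct colours all lying in a list $L(f)$ of size exactly $3$, leaving $f$ with no colour. (The same problem appears if $f$ is instead an internal vertex of the path that receives two already-coloured neighbours: it must then avoid $c^*$ plus two rim colours, again three constraints against a list of size $3$.) So the ``slack'' at $f$ does not by itself close the chain; one would have to show that among the several admissible choices of $c^*$ and of greedy orderings, at least one avoids this collision, and that argument is not present. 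Note also that a path whose two endpoints have effective lists of size $1$ and whose internal vertices have lists of size $2$ need not be colourable (e.g.\ lists $\{1\},\{1,2\},\{2,3\},\{1,3\},\{1\}$), so even the single-forced-vertex case with $|S|\in\{3,4\}$ is not finished by the greedy sweep you describe.

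The second gap is the case $|S|=5$, where every colour can occur in exactly two of the five tight lists and every choice of $c^*$ produces two forced rim vertices. You state that ``a brief case analysis'' on the intersection pattern of the five $2$-lists covering $\{1,\dots,5\}$ resolves this, but the analysis is not carried out, and it is not routine: for instance, if two adjacent rim vertices share the same $2$-list $\{1,2\}$, then both $c^*=1$ and $c^*=2$ force those two adjacent vertices to the same colour and are unusable, so one is pushed onto the remaining colours where the double-forcing persists. Since this subcase (all rim lists as small as permitted) is precisely the extremal situation the lemma is about, asserting rather than performing the analysis leaves the proof incomplete. To repair the argument you would need to either carry out these case analyses in full, or abandon the ``fix $x$ first'' framing in the hard cases and argue, as the paper does, that the rim can always be coloured with at most four distinct colours.
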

\begin{proof}

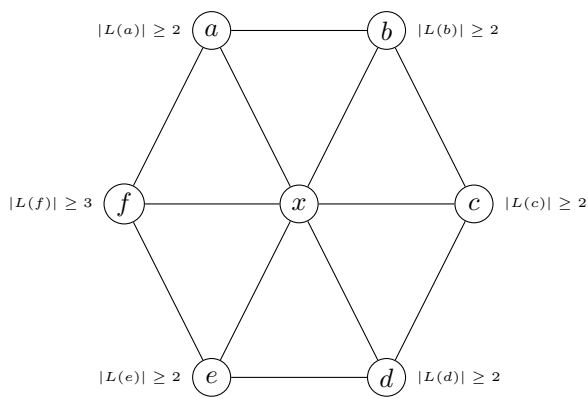
\begin{figure}[!ht]
\centering
\begin{tikzpicture}[auto,scale=2.3]
\tikzstyle{vertex}=[draw, circle, inner sep=0.55mm]
\node[minimum size=0.5cm] (a) at (0,0) [vertex] {$a$};
\node [left=0.5bp of a] {\tiny $|L(a)|\geq 2$};
\node[minimum size=0.5cm] (b) at (1,0) [vertex] {$b$};
\node [right=0.5bp of b] {\tiny $|L(b)|\geq 2$};
\node[minimum size=0.5cm] (c) at  (1.5,-1) [vertex] {$c$};
\node [right=0.5bp of c] {\tiny $|L(c)|\geq 2$};
\node[minimum size=0.5cm] (d) at (1,-2) [vertex] {$d$};
\node [right=0.5bp of d] {\tiny $|L(d)|\geq 2$};
\node[minimum size=0.5cm] (e) at (0,-2) [vertex] {$e$};
\node [left=0.5bp of e] {\tiny $|L(e)|\geq 2$};
\node[minimum size=0.5cm] (f) at (-.5,-1) [vertex] {$f$};
\node [left=0.5bp of f] {\tiny $|L(f)|\geq 3$};
\node[minimum size=0.5cm] (x) at  (.5,-1) [vertex] {$x$};
\node [below right=0.5bp and 0.35bp of x] {}; 

\draw (a)--(b)--(c)--(d)--(e)--(f)--(a);
\draw (a) to (x);
\draw (b) to (x);
\draw (c) to (x);
\draw (d) to (x);
\draw (e) to (x);
\draw (f) to (x);
\end{tikzpicture}
\caption{Graph of Lemma~\ref{lemma:6-wheel}, where $L(x)=\{1,2,3,4,5\}$.}
\label{fig:6-wheel}
\end{figure}

First we consider the case that $L(a)\cap L(c)\neq\emptyset$ and assume, without loss of generality, that $1\in L(a)\cap L(c)$. We color $\phi(a)=\phi(c)=1$. We then color vertices $b,d,e,f$ in this order by observing that at each step there is an available color for the current vertex. Now, if there is an available color from $L(x)$ left for $x$, we are done. If not, then $b,d,e,f$ have each been colored with a distinct color from $\{2,3,4,5\}$, say $\phi(b)=2, \phi(d)=3, \phi(e)=4, \phi(f)=5$. We conclude that $L(b)=\{1,2\}$, $L(f)=\{1,4,5\}$, $L(e)\subset \{3,4,5\}$ and $L(d) \subset\{1,3,4\}$ as otherwise one of these vertices could be recolored in order to gain a free color for vertex $x$. If $1\in L(d)$, then we consider another coloring $\psi(b)=\psi(d)=\psi(f)=1$. This coloring then extends to a coloring of $a,c$ and $e$, after which only four colors are used and thus we have a color left for $x$.  Hence we may assume $L(d)=\{3,4\}$. Let $\alpha\neq 4$ be a color in $L(e)$, then the coloring $\phi(a)=\phi(c)=1$, $\phi(b)=2$, $\phi(d)=\phi(f)=4$, $\phi(e)=\alpha$ uses at most four colors on neighbors of $x$ and thus there is a color available at $x$. 

As the pair $e,c$ is symmetric to the pair $a,c$, for the remaining cases we may assume that $L(a)\cap L(c)=L(c)\cap L(e)=\emptyset$. Since the colors are taken from the set $\{1,2,3,4,5\}$, we conclude that $L(a)\cap L(e)\neq \emptyset$, and without loss of generality, we assume the coloring $\phi(a)=\phi(e)=1$. If one of the two lists $L(b)$ or $L(d)$ does not contain color 1, then one could color the path $bcd$, and afterwards color vertex $x$ and then finish by coloring $f$. Hence $1\in L(b)\cap L(d)$ and then we consider another coloring $\psi(b)=\psi(d)=1$, which extends to the path $afe$ (by first coloring $a$ and $e$). We then color vertex $x$. Finally, since we have already established that $1\not\in L(c)$, we have a color left for vertex $c$ and we are done.
\end{proof}

\begin{lemma}\label{lemma:list_col_triangulation}
The graph of Figure~\ref{fig:list_col_triangulation} with the given lower bounds on the sizes of lists of colors from the set $\{1,2,3,4,5\}$ is $L$-colorable.
\end{lemma}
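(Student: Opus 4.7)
The plan is to follow the same philosophy as the proof of Lemma~\ref{lemma:6-wheel}: since every list is a subset of the ground set $\{1,2,3,4,5\}$ of size five, two lists of size two either share a color or together cover four of the five colors, in which case any third size-two list must meet one of them. I would exploit this systematically by coloring the vertices with the smallest lists first, while saving the vertex (or vertices) with the largest list for the end, so that the latter has enough palette room to absorb whatever colors are imposed by the rest.

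More concretely, I would first locate a ``pivot'' vertex of large list-size in the graph of Figure~\ref{fig:list_col_triangulation} and consider the non-adjacent pairs of small-list vertices that lie in its neighborhood. For each such pair, I would case-split on whether their lists intersect. When an intersection exists, I would assign them a common color, which both reduces the palette appearing on the pivot's neighborhood and frees a color for the pivot itself. When the intersection is empty, the union of the two lists already covers four colors, and I would use this to pin down a color common to them and a third small-list vertex, or alternatively finish by a greedy coloring in an order where each remaining vertex has at least one available color at its turn (matching its list-size lower bound against its number of already-colored neighbors).

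The hard part, as in Lemma~\ref{lemma:6-wheel}, will be the tight case in which the greedy procedure succeeds on every vertex except the pivot, whose size-$5$ list is consumed by five distinct neighbor colors. To handle this, I would examine the constraints that such a failure forces on the individual lists (each one must contain or avoid very specific colors to rule out every obvious recoloring), and then exhibit an explicit alternative coloring: typically one re-colors two or three carefully chosen non-adjacent vertices with a common color, thereby eliminating one color from the pivot's neighborhood and leaving at least one color free for the pivot. Once this tight configuration is dispatched, every remaining case reduces to a short pigeonhole argument on the five available colors, so the overall proof should be a finite — if somewhat tedious — case analysis modelled on that of Lemma~\ref{lemma:6-wheel}.
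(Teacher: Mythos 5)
Your proposal is a strategy outline rather than a proof, and the outline as stated does not engage with what actually makes the graph of Figure~\ref{fig:list_col_triangulation} hard: it contains \emph{three} mutually adjacent vertices $x,y,z$ with full lists of size~$5$, each of which is the hub of its own $6$-wheel, and these three wheels overlap. ``Save the pivot with the largest list for last'' can rescue at most one of these hubs; the other two must be colored earlier while each still has six neighbors and needs a color to survive. Nothing in your sketch explains how the three tight hubs are handled simultaneously, which pairs of non-adjacent vertices you would merge, or why a greedy order exists for the remaining vertices. The paper resolves this by a concrete reduction: either $L(f)\cap L(d)\neq\emptyset$ (color $f,d$ alike, then $e,c,b$ greedily) or $L(f)\cap L(e)\neq\emptyset$ (color $e$ from the intersection, then $f,g,h,i$ greedily); in both cases one checks that the seven uncolored vertices are \emph{exactly} the configuration of Lemma~\ref{lemma:6-wheel}, with $y$ and $z$ demoted to rim vertices whose residual lists have the required sizes because all of their already-colored neighbors lie among the removed vertices. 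That reduction, and the verification of the residual list sizes, is the substance of the proof and is absent from your plan.

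Moreover, your anticipated endgame is wrong in kind. You propose that the tight case be dispatched by exhibiting an explicit recoloring, but in the paper's proof the leftover case ($L(f)$ disjoint from both $L(d)$ and $L(e)$, and the analogous disjointness at the other symmetric positions) is shown to be \emph{vacuous}: the disjointness conditions force $L(f)=\{1,2\}$, $L(d)=\{3,4,5\}$, $L(g)=\{1,2,5\}$ up to relabelling, and then the symmetric constraints on $a$ require $|L(a)\cap L(d)|=1$ and $|L(a)\cap L(g)|=1$ with $|L(a)|\geq 3$, which is impossible over a five-element palette. So the last case is a contradiction on the list structure, not a coloring argument. Without either the reduction to Lemma~\ref{lemma:6-wheel} or this impossibility argument (or worked-out substitutes for them), the proposal cannot be accepted as a proof.
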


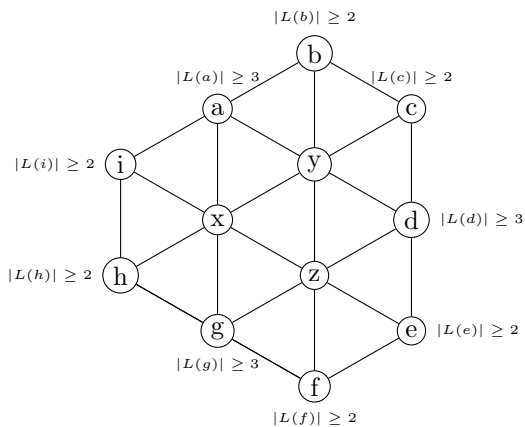
\begin{figure}[!ht]
\centering

\begin{tikzpicture}[scale=0.85]
\tikzstyle{vertex}=[draw, circle, minimum size = 4pt, inner sep=0.55mm]

  \node (b) at (3*cos{60}+1,5*sin{60}) [vertex] {b};
  \node [above=0.5bp of b] {\tiny $|L(b)|\geq 2$};
  \node (a) at (1,4*sin{60}) [vertex] {a};
  \node [above=0.2bp of a] {\tiny $|L(a)|\geq 3$};
  \node (c) at (4,4*sin{60}) [vertex] {c};
  \node [above=0.2bp of c] {\tiny $|L(c)|\geq 2$};
  \node (i) at (-1*cos{60},3*sin{60}) [vertex] {i};
  \node [left=0.2bp of i] {\tiny $|L(i)|\geq 2$};
  \node (y) at (3*cos{60}+1,3*sin{60}) [vertex] {y};
  \node (x) at (1,2*sin{60}) [vertex] {x};
  \node (d) at (4,2*sin{60}) [vertex] {d};
  \node [right=0.2bp of d] {\tiny $|L(d)|\geq 3$};
  \node (h) at (-1*cos{60},sin{60}) [vertex] {h};
  \node [left=0.2bp of h] {\tiny $|L(h)|\geq 2$};
  \node (z) at (3*cos{60}+1,sin{60}) [vertex] {z};
  \node (g) at (1,0) [vertex] {g};
  \node [below=0.2bp of g] {\tiny $|L(g)|\geq 3$};
  \node (e) at (4,0) [vertex] {e};
  \node [right=0.2bp of e] {\tiny $|L(e)|\geq 2$};
  \node (f) at (3*cos{60}+1,-1*sin{60}) [vertex] {f};
  \node [below=0.2bp of f] {\tiny $|L(f)|\geq 2$};
  
  \draw (b)--(a)--(i)--(h)--(g)--(f)--(e)--(d)--(c)--(b);
  \draw (b)--(y)--(a) (x)--(y)--(z) (x)--(z)--(g)
  		(i)--(x)--(h) (x)--(g)--(h) (g)--(f)--(z)
  		(z)--(y)--(d) (e)--(z)--(d) (x)--(a) (y)--(c);
\end{tikzpicture}
\caption{Graph of Lemma~\ref{lemma:list_col_triangulation}, where $L(x)=L(y)=L(z)=\{1,2,3,4,5\}$.}
\label{fig:list_col_triangulation}
\end{figure}

\begin{proof}
Let $G$ be the graph from the statement of the lemma. We distinguish three cases:
\begin{enumerate}
\item Suppose $L(f)\cap L(d)\neq \emptyset$ and let $\alpha\in L(f)\cap L(d)$. Then we give the following partial coloring of $G$: we color both $f$ and $d$ with color $\alpha$ and then greedily color $e,c,b$, in this order (at each step, there is at least one available color with respect to $L$). Now the remaining uncolored vertices $a,y,z,g,h,i,x$ form the configuration of Lemma~\ref{lemma:6-wheel}, so we are done.

\item We have $L(f)\cap L(d)=\emptyset$. Now, suppose $L(f)\cap L(e)\neq\emptyset$. Then we color $e$ with some color $\alpha\in L(f)\cap L(e)$ and thus we still have $|L(d)|\geq 3$. We then color greedily $f,g,h,i$, in this order. The remaining uncolored vertices $z,x,a,b,c,d,y$ form the configuration of Lemma~\ref{lemma:6-wheel}, so we are done.

\item By the previous items we have that $L(f)\cap L(d)=\emptyset$ and $L(f)\cap L(e)=\emptyset$. Moreover, the same holds for each pair of vertices isomorphic to $f,d$ or to $f,e$. As the lists are taken from the set $\{1,2,3,4,5\}$, without loss of generality, we can assume that $L(f)=\{1,2\}$, $\{3,4\}\subseteq L(e)$ and $L(d)=\{3,4,5\}$. Moreover, by symmetry we have $L(g)\cap L(e)=\emptyset$ and thus $L(g)=\{1,2,5\}$. Therefore $|L(g)\cap L(d)|=1$.
Now note that vertex $a$ is symmetric to vertex $d$ and to vertex $g$ and thus must satisfy $|L(a)\cap L(d)|=1$ and $|L(a)\cap L(g)|=1$, which is impossible since the colors are taken from a set of five elements.\qedhere
\end{enumerate}
\end{proof}

Let $\mathcal C$ be the class of induced subgraphs of fullerene graphs. In order to show that every fullerene graph is exact square 5-colorable, we will prove a stronger statement: that every graph in $\mathcal C$ is exact square 5-colorable.

The main result of this section is the following theorem.

\begin{theorem}
Every graph $G\in \mathcal{C}$ is exact square $5$-colorable.
\end{theorem}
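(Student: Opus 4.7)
The overall strategy mirrors the one the authors outline: reduce to a minimum counterexample, use the two list-coloring lemmas above to certify reducible configurations, carry out a discharging argument on the plane structure of the host fullerene to bound the size of a counterexample, and finally verify the remaining finite cases by computer.

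Suppose for contradiction that $G \in \mathcal{C}$ is a counterexample with the minimum number of vertices. By monotonicity of $\chi\sharptwo$ under induced subgraphs, every proper induced subgraph of $G$ admits an exact square $5$-coloring, and $G$ sits as an induced subgraph of some fullerene $F$; hence the structural properties of $F$ are available to me (girth $5$, cyclic $5$-edge-connectivity via Theorem~\ref{thm:Cyclic5-Connected}, and control of short non-facial cycles via Lemma~\ref{lem:small_cycle_fullerene}).

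The reducibility step would run as follows. For a carefully chosen small set $S \subseteq V(G)$, minimality yields an exact square $5$-coloring $\phi$ of $G-S$, and for each $v \in S$ I form the list $L(v) = \{1,\ldots,5\} \setminus \{\phi(u) : u \in N^{[\sharp 2]}_G(v) \setminus S\}$. Since $G$ is subcubic, $|N^{[\sharp 2]}_G(v)| \leq 6$, so $|L(v)| \geq 5 - |N^{[\sharp 2]}_G(v) \setminus S|$; choosing $S$ so that sufficiently many distance-$2$ neighbors of each $v \in S$ already lie inside $S$ forces the list sizes to match the hypotheses of Lemma~\ref{lemma:6-wheel} or Lemma~\ref{lemma:list_col_triangulation}. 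The exact square adjacencies on $S$ then fit the corresponding figures exactly: the wheel of Figure~\ref{fig:6-wheel} naturally arises from three consecutive hexagons around a vertex, while the richer graph of Figure~\ref{fig:list_col_triangulation} arises from a slightly larger hexagonal patch. Applying the relevant lemma extends $\phi$ to an exact square $5$-coloring of $G$, contradicting minimality.

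The main obstacle is then the structural step: proving that any fullerene on more than $80$ vertices must contain at least one of the reducible configurations. Since every fullerene has exactly twelve $5$-faces regardless of order, a large fullerene must have long stretches of hexagons, and far from all pentagons the local exact square resembles a piece of the exact square of the hexagonal lattice, in which the reducible configurations should be forced to appear. The natural tool is a discharging argument with initial charge $|f|-6$ on each face $f$ (total charge $-12$ by Euler's formula, supported on the pentagons) together with rules redistributing the pentagonal deficit to nearby hexagons; calibrating those rules so that the absence of every reducible configuration forces $|V(G)| \leq 80$ is the delicate point, since exact square adjacency is very sensitive to the local cycle structure and two a priori ``reducible'' configurations may interact in unexpected ways. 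Once the bound is in hand, the proof is completed by checking exact square $5$-colorability for every fullerene of order at most $80$ using the certified enumerations of~\cite{BD97,BGM12,GM15}; monotonicity under induced subgraphs then extends the conclusion to all of $\mathcal{C}$.
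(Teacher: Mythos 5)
Your high-level strategy is the same as the paper's (minimum counterexample, list-coloring reducibility via Lemmas~\ref{lemma:6-wheel} and~\ref{lemma:list_col_triangulation}, a bound of $80$ vertices, computer verification), but two essential pieces are missing, and the second is left explicitly as an unproved ``delicate point.''

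First, your argument never establishes that the minimum counterexample $G$ is itself a fullerene graph, and without this the logic does not close. A minimum counterexample in $\mathcal{C}$ is a priori only an induced subgraph of some fullerene $H$, and $H$ may be much larger than $G$; so finding a reducible configuration in every large \emph{fullerene} says nothing about $G$ (the configuration in $H$ need not survive in $G$), and a bound of $80$ on $|V(G)|$ does not reduce the problem to checking fullerenes of order at most $80$. The paper spends most of its effort here: it shows $G$ is $2$-connected, that any two $2$-vertices of $G$ are at distance at least $4$ (by deleting short paths between them and extending), that $G$ has no $9$-face, and then runs a discharging argument with charges $2d(v)-6$ and $\ell(f)-6$ to conclude that $G$ has only $5$- and $6$-faces with exactly twelve $5$-faces, i.e.\ $G$ is a fullerene. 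You would need all of this (or a substitute) before the rest of your plan can proceed.

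Second, the step you flag as the main obstacle --- calibrating a discharging argument so that the absence of reducible configurations forces $|V(G)|\leq 80$ --- is not actually how the bound is obtained, and as stated you have not proved it. The reducible configurations are concretely: a $6$-face surrounded by six $6$-faces (handled by Lemma~\ref{lemma:list_col_triangulation}) and a $6$-face adjacent to five $6$-faces and one $5$-face (handled by Lemma~\ref{lemma:6-wheel}). Once these are excluded, every $6$-face is adjacent to at least two of the twelve $5$-faces, and each $5$-face is adjacent to at most five $6$-faces, so there are at most $\frac{12\cdot 5}{2}=30$ hexagons and hence at most $\frac{30\cdot 6+12\cdot 5}{3}=80$ vertices by counting face--vertex incidences. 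This is an elementary double count, not a second discharging argument; your proposal identifies neither the precise configurations nor this count, so the bound of $80$ remains unjustified in your write-up.
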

\begin{proof}

Let $G$ be a graph of $\mathcal C$ that is a minimum counterexample to our claim, that is, it is of smallest order among those that are not exact square $5$-colorable. Let $H$ be a fullerene graph that contains $G$ as an induced subgraph. In the remainder of the proof, $G$ will be a regarded as a plane graph whose embedding is induced by the unique embedding of $H$. By Lemma~\ref{lem:small_cycle_fullerene}, we know that every 5-cycle or 6-cycle of $G$ is a face of both $G$ and $H$. Furthermore, $G$ has no other cycle of length less than~9, and all cycles of length~9 are obtained from the symmetric differences of three 5-faces sharing a common vertex. Further properties of $G$ are as follows.

\begin{claim}
\label{claim:2-connected-fullerene}
$G$ is $2$-connected.
\end{claim}
\claimproof
Note that the proof could be done in the same lines as for Claim~\ref{clm:2-connected-K4minorfree} for exact square 4-colorability of $K_4$-minor-free graphs. However, having five colors, we give a simpler proof here.

Suppose that $G$ has a cut-vertex $v$. Since $G$ is subcubic, it has a bridge $uv$. If one of $u$ or $v$ is of degree at most 2 (say, it is $u$), then an exact square 5-coloring of $G'=G-u$ extends to $G$ by using a proper permutation of colors in one of the connected components of $G'$.

Therefore, both $u$ and $v$ are 3-vertices and the graph $G'=G-uv$ has exactly two connected components $G_u$ and $G_v$, containing $u$ and $v$ respectively. Let $u_1$ and $u_2$ (resp. $v_1$ and $v_2$) be the neighbors of $u$ (resp. $v$) in $G_u$ (resp. $G_v$).
By minimality of $G$, graphs $G_u$ and $G_v$ are exact square 5-colorable independently. Take such a coloring $\phi^1$ of $G_u$ and let $\phi^1(u_1)=1$, $\phi^1(u_2)=2$ and $\phi^1(u)=\alpha$. We show that an exact square 5-coloring $\phi^2$ of $G_v$ can be chosen to be compatible with $\phi^1$ in $G$. Without loss of generality we can fix $\phi^2(v)=3\neq\alpha$. Then by applying a proper permutation of colors of $\phi^2$ on $G_v$, one can choose $\phi^2(v_1)$ and $\phi^2(v_2)$ such that $\alpha\notin\{\phi^2(v_1),\phi^2(v_2)\}$ and we are done.\hfill\smallqed\medskip

\begin{claim}\label{claim:distance4_2v}
Any two $2$-vertices of $G$ are at distance at least~$4$.
\end{claim}
\claimproof
The configurations of Figure~\ref{fig:2-verticesAtDistance4} are reducible. Indeed, if one of these configurations occurs, then we remove from $G$ vertices $y,v$ (resp. $y,z,v$ and $y,z,t,v$) in the case of Configuration~\ref{subfig:2-verticesAtDistance1} (resp. \ref{subfig:2-verticesAtDistance2} and \ref{subfig:2-verticesAtDistance3}), in order to obtain a graph $G'$. The graph $G'$ is in $\mathcal C$ and, therefore, has an exact square 5-coloring which can be easily extended to $G$.
\hfill\smallqed\medskip

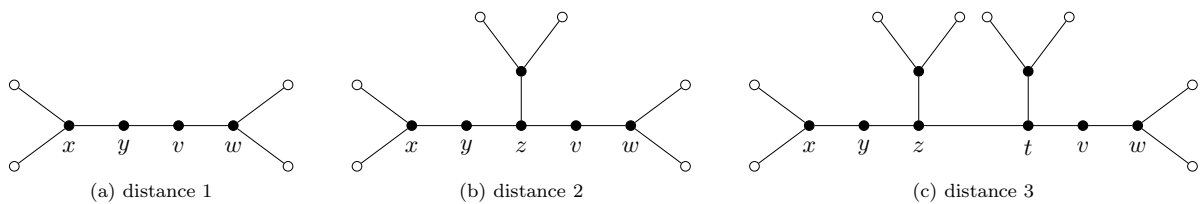
\begin{figure}[!ht]
\centering\scalebox{.9}{
\subfloat[distance 1]{\label{subfig:2-verticesAtDistance1}
\begin{tikzpicture}[join=bevel,inner sep=0.5mm,scale=0.8]
  \pgfmathsetmacro\decname{-0.4};
  \node (x1) at (0,0.75) [draw, circle, fill=white] {};
  \node (x2) at (0,-0.75) [draw, circle, fill=white] {};
  \node (x) at (1,0) [draw, circle, fill=black] {};
  \node  at (1,\decname) {$x$};
  \node (y) at (2,0) [draw, circle, fill=black] {};
  \node  at (2,\decname) {$y$};
  \node (v) at (3,0) [draw, circle, fill=black] {};
  \node  at (3,\decname) {$v$};
  \node (w) at (4,0) [draw, circle, fill=black] {};
  \node  at (4,\decname) {$w$};
  \node (w1) at (5,0.75) [draw, circle, fill=white] {};
  \node (w2) at (5,-0.75) [draw, circle, fill=white] {};
  \draw [-] (w) -- (w1);
  \draw [-] (w) -- (w2);
  \draw [-] (x1) -- (x) -- (x2);
  \draw [-] (x) -- (w);
\end{tikzpicture}
}\hspace*{0.5cm}
\subfloat[distance 2]{\label{subfig:2-verticesAtDistance2}
\begin{tikzpicture}[join=bevel,inner sep=0.5mm,scale=0.8]
  \pgfmathsetmacro\decname{-0.4};
  \node (x1) at (0,0.75) [draw, circle, fill=white] {};
  \node (x2) at (0,-0.75) [draw, circle, fill=white] {};
  \node (x) at (1,0) [draw, circle, fill=black] {};
  \node  at (1,\decname) {$x$};
  \node (y) at (2,0) [draw, circle, fill=black] {};
  \node  at (2,\decname) {$y$};
  \node (z) at (3,0) [draw, circle, fill=black] {};
  \node  at (3,\decname) {$z$};
  \node (v) at (4,0) [draw, circle, fill=black] {};
  \node  at (4,\decname) {$v$};
  \node (w) at (5,0) [draw, circle, fill=black] {};
  \node  at (5,\decname) {$w$};
  \node (z1) at (3,1) [draw, circle, fill=black] {};
  \node (z2) at (2.25,2) [draw, circle, fill=white] {};
  \node (z3) at (3.75,2) [draw, circle, fill=white] {};
  \node (w1) at (6,0.75) [draw, circle, fill=white] {};
  \node (w2) at (6,-0.75) [draw, circle, fill=white] {};
  \draw [-] (w) -- (w1);
  \draw [-] (w) -- (w2);
  \draw [-] (x1) -- (x) -- (x2);
  \draw [-] (z2) -- (z1) -- (z3);
  \draw [-] (x) -- (w);
  \draw [-] (z) -- (z1);
\end{tikzpicture}
}\hspace*{0.5cm}
\subfloat[distance 3]{\label{subfig:2-verticesAtDistance3}
\begin{tikzpicture}[join=bevel,inner sep=0.5mm,scale=0.8]
  \pgfmathsetmacro\decname{-0.4};
  \node (x1) at (0,0.75) [draw, circle, fill=white] {};
  \node (x2) at (0,-0.75) [draw, circle, fill=white] {};
  \node (x) at (1,0) [draw, circle, fill=black] {};
  \node  at (1,\decname) {$x$};
  \node (y) at (2,0) [draw, circle, fill=black] {};
  \node  at (2,\decname) {$y$};
  \node (z) at (3,0) [draw, circle, fill=black] {};
  \node  at (3,\decname) {$z$};
  \node (t) at (5,0) [draw, circle, fill=black] {};
  \node  at (5,\decname) {$t$};
  \node (v) at (6,0) [draw, circle, fill=black] {};
  \node  at (6,\decname) {$v$};
  \node (w) at (7,0) [draw, circle, fill=black] {};
  \node  at (7,\decname) {$w$};
  \node (z1) at (3,1) [draw, circle, fill=black] {};
  \node (z2) at (2.25,2) [draw, circle, fill=white] {};
  \node (z3) at (3.75,2) [draw, circle, fill=white] {};
  \node (t1) at (5,1) [draw, circle, fill=black] {};
  \node (t2) at (4.25,2) [draw, circle, fill=white] {};
  \node (t3) at (5.75,2) [draw, circle, fill=white] {};
  \node (w1) at (8,0.75) [draw, circle, fill=white] {};
  \node (w2) at (8,-0.75) [draw, circle, fill=white] {};
  \draw [-] (w) -- (w1);
  \draw [-] (w) -- (w2);
  \draw [-] (x1) -- (x) -- (x2);
  \draw [-] (t2) -- (t1) -- (t3);
  \draw [-] (z2) -- (z1) -- (z3);
  \draw [-] (x) -- (w);
  \draw [-] (z) -- (z1);
  \draw [-] (t) -- (t1);
\end{tikzpicture}
}
}
\caption{Configurations where two 2-vertices are at distance at most 3. The neighborhood of the black vertices is exactly the one depicted in the figure.}
\label{fig:2-verticesAtDistance4}
\end{figure}

\begin{claim}\label{clm:9faces}
$G$ has no $9$-face.
\end{claim}
\claimproof
By Lemma~\ref{lem:small_cycle_fullerene}, every $9$-face of $G$ must contain three $2$-vertices pairwise at distance at most~$3$, contradicting Claim~\ref{claim:distance4_2v}.
\hfill\smallqed\medskip

For a face $f$ of $G$, let $\ell(f)$ denote the length of $f$, and $n_2(f)$ the number of $2$-vertices on the boundary of $f$.

\begin{claim}\label{obs:n_2v}
For every face $f$ of $G$, we have then $n_2(f)\le\left\lfloor\frac{\ell(f)}{4}\right\rfloor$.
\end{claim}
\claimproof
This follows directly from Claim~\ref{claim:distance4_2v}.
\hfill\smallqed\medskip

\begin{claim}
$G$ is a fullerene graph.
\end{claim}
\claimproof
Let $F(G)$ denote the set of faces of $G$ in its planar embedding. By Euler's Formula we have the following:
\begin{equation}\label{eq01}
\sum_{v \in V(G)}\,(2d(v)-6)\,+\,\sum_{f \in F(G)}\,(\ell(f)-6)\,=\,-12
\end{equation}

We assign to each vertex $v$ the charge $\omega(v)=2d(v)-6$ and to each face $f$ the charge $\omega(f)=\ell(f)-6$. We redistribute the charges by applying the following rule: every face $f$ gives 1 to each 2-vertex lying on its boundary.

Note that after this redistribution of charges, the initial sum of charges is preserved. We analyse the new amount of charges of vertices and faces of $G$:

\begin{itemize}
\item every $3$-vertex has charge $0$,
\item every $2$-vertex has charge $0$ since by Claim~\ref{claim:2-connected-fullerene} it lies on two faces,
\item every face $f$ with $\ell(f)\geq 10$ has strictly positive charge by Claim~\ref{obs:n_2v},
\item every face $f$ with $\ell(f)=6$ has charge $0$ by Claim~\ref{obs:n_2v},
\item every face $f$ with $\ell(f)=5$ has charge $-1$ by Claim~\ref{obs:n_2v}.
\end{itemize}

Therefore, in order to obtain a total charge of $-12$ after the redistribution of charges, we conclude that $G$ contains exactly twelve 5-faces and no other face of length at least 10. On the other hand, by definition of $\mathcal{C}$, we know that $G$ has no $7$-faces, nor $8$-faces. Also by Claim~\ref{clm:9faces}, $G$ has no $9$-faces. Thus we conclude that $G$ is a fullerene graph.\hfill\smallqed\medskip

\begin{claim}\label{claim:no-surrounded-6-faces}
  Every $6$-face of $G$ is adjacent to at least one $5$-face.
\end{claim}
\claimproof Suppose to the contrary that $G$ contains a $6$-face adjacent to six $6$-faces, as in Figure~\ref{subfig:6_6-face}. The graph $G_1=G-\{a,b,c,\ldots,x\}$ belongs to $\mathcal{C}$ and is exact square  5-colorable. By applying such a coloring to $G$, we get a valid partial exact square coloring of $G$. Observe that the remaining uncolored vertices induce two isomorphic connected components in $G\sharptwo$ (see Figure~\ref{subfig:triangulations}). Thus, each can be (properly) colored independently. By counting the number of remaining colors for each of the vertices of each of these connected components we obtain the configuration of Lemma~\ref{lemma:list_col_triangulation}, so we are done.
\hfill\smallqed\medskip

\begin{figure}[!ht]
\centering
\subfloat[A 6-face surrounded by 6-faces]{
\label{subfig:6_6-face}
\scalebox{0.9}{
	\begin{tikzpicture}
	\tikzstyle{vertex}=[draw, circle, fill=white, inner sep=0.55mm]
	  \foreach \i in {0,...,1} {
		\foreach \a in {0,120,-120} {
		  \draw (3*\i,2*sin{60}*\i) -- +(\a:1);
		  \draw (3*\i+3*cos{60},2*sin{60}*\i+sin{60}) -- +(\a:1);
		  \draw (3*\i,4*sin{60}) -- +(\a:1);
		  \draw (0,2*sin{60}) -- +(\a:1);
		  \draw (3,0) -- +(\a:1);
		}
	  }
	  
	  \foreach \a in {0,120,-120} {
		\draw (3*cos{60},3*sin{60}) -- +(\a:1);
		\draw (3*cos{60},5*sin{60}) -- +(\a:1);
		\draw (7*cos{60}+1,sin{60}) -- +(\a:1);
		\draw (3*cos{60},-1*sin{60}) -- +(\a:1);
	  }
	  
	  \draw (-1*cos{60},3*sin{60}) -- +(0:-1);
	  \draw (-1*cos{60},sin{60}) -- +(0:-1);
	  \draw (3*cos{60}+1,-1*sin{60}) -- +(-60:1);
	  \draw (4,0) -- +(-60:1);
	  \draw (4,4*sin{60}) -- +(60:1);
	  \draw (3*cos{60}+1,5*sin{60}) -- +(60:1);
	  
	  \node at (3*cos{60},5*sin{60}) [vertex] {a};
	  \node at (3*cos{60}+1,5*sin{60}) [vertex] {b};
	  \node at (0,4*sin{60}) [vertex] {c};
	  \node at (1,4*sin{60}) [vertex] {d};
	  \node at (3,4*sin{60}) [vertex] {e};
	  \node at (4,4*sin{60}) [vertex] {f};
	  \node at (-1*cos{60},3*sin{60}) [vertex] {g};
	  \node at (3*cos{60},3*sin{60}) [vertex] {h};
	  \node at (3*cos{60}+1,3*sin{60}) [vertex] {i};
	  \node at (7*cos{60}+1,3*sin{60}) [vertex] {j};
	  \node at (0,2*sin{60}) [vertex] {k};
	  \node at (1,2*sin{60}) [vertex] {l};
	  \node at (3,2*sin{60}) [vertex] {m};
	  \node at (4,2*sin{60}) [vertex] {n};
	  \node at (-1*cos{60},sin{60}) [vertex] {o};
	  \node at (3*cos{60},sin{60}) [vertex] {p};
	  \node at (3*cos{60}+1,sin{60}) [vertex] {q};
	  \node at (7*cos{60}+1,sin{60}) [vertex] {r};
	  \node at (0,0) [vertex] {s};
	  \node at (1,0) [vertex] {t};
	  \node at (3,0) [vertex] {u};
	  \node at (4,0) [vertex] {v};
	  \node at (3*cos{60},-1*sin{60}) [vertex] {w};
	  \node at (3*cos{60}+1,-1*sin{60}) [vertex] {x};
	\end{tikzpicture}
	}
}\hspace*{0.4cm}
\subfloat[Exact square]{
\label{subfig:triangulations}
\scalebox{0.9}{
	\begin{tikzpicture}[scale=0.85]
	\tikzstyle{vertex}=[draw, circle, inner sep=0.55mm]
	
	  \node (a) at (3*cos{60},5*sin{60}) [vertex] {a};
	  \node (c) at (0,4*sin{60}) [vertex] {c};
	  \node (e) at (3,4*sin{60}) [vertex] {e};
	  \node (h) at (3*cos{60},3*sin{60}) [vertex] {h};
	  \node (j) at (7*cos{60}+1,3*sin{60}) [vertex] {j};
	  \node (k) at (0,2*sin{60}) [vertex] {k};
	  \node (m) at (3,2*sin{60}) [vertex] {m};
	  \node (p) at (3*cos{60},sin{60}) [vertex] {p};
	  \node (r) at (7*cos{60}+1,sin{60}) [vertex] {r};
	  \node (s) at (0,0) [vertex] {s};
	  \node (u) at (3,0) [vertex] {u};
	  \node (w) at (3*cos{60},-1*sin{60}) [vertex] {w};
	  
	  \draw (a)--(c)--(k)--(s)--(w)--(u)--(r)--(j)--(e)--(a);
	  \draw (c)--(h)--(a) (k)--(p)--(h) (s)--(p)--(w)
	  		(m)--(p)--(u) (h)--(m)--(e) (j)--(m)--(r)
	  		(m)--(u) (k)--(h)--(e);
	\end{tikzpicture}\hspace*{0.4cm}
	\begin{tikzpicture}[scale=0.85]
	\tikzstyle{vertex}=[draw, circle, inner sep=0.55mm]
	
	  \node (b) at (3*cos{60}+1,5*sin{60}) [vertex] {b};
	  \node (d) at (1,4*sin{60}) [vertex] {d};
	  \node (f) at (4,4*sin{60}) [vertex] {f};
	  \node (g) at (-1*cos{60},3*sin{60}) [vertex] {g};
	  \node (i) at (3*cos{60}+1,3*sin{60}) [vertex] {i};
	  \node (l) at (1,2*sin{60}) [vertex] {l};
	  \node (n) at (4,2*sin{60}) [vertex] {n};
	  \node (o) at (-1*cos{60},sin{60}) [vertex] {o};
	  \node (q) at (3*cos{60}+1,sin{60}) [vertex] {q};
	  \node (t) at (1,0) [vertex] {t};
	  \node (v) at (4,0) [vertex] {v};
	  \node (x) at (3*cos{60}+1,-1*sin{60}) [vertex] {x};
	  
	  \draw (b)--(d)--(g)--(o)--(t)--(x)--(v)--(n)--(f)--(b);
	  \draw (b)--(i)--(d) (l)--(i)--(q) (l)--(q)--(t)
	  		(g)--(l)--(o) (l)--(t)--(o) (t)--(x)--(q)
	  		(q)--(i)--(n) (v)--(q)--(n) (l)--(d) (i)--(f);
	\end{tikzpicture}
	}
}
\caption{A 6-face surrounded only by 6-faces and its exact square.}
\label{fig:6_6-face}
\end{figure}
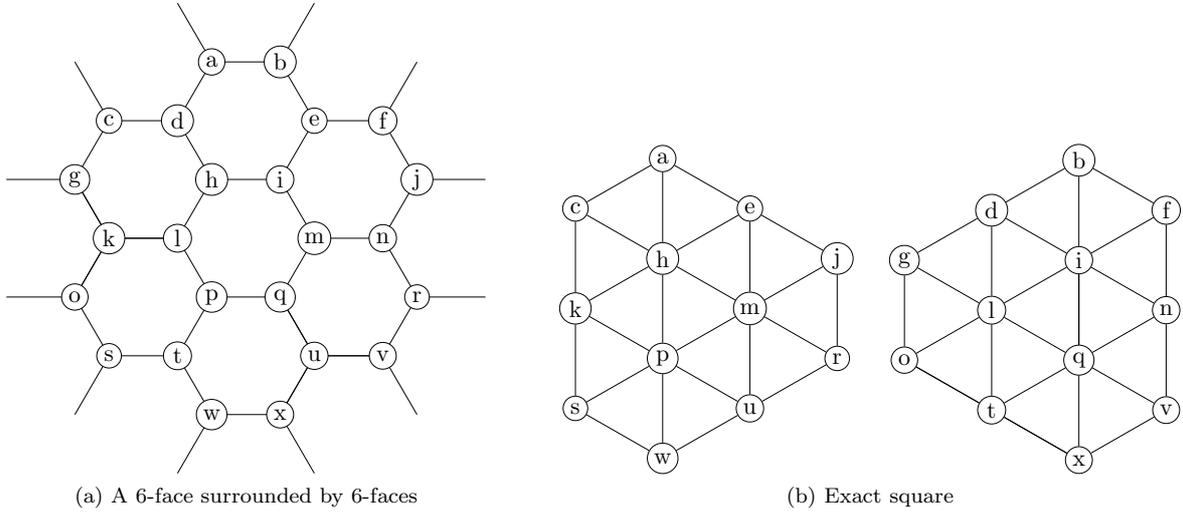

\begin{claim}\label{claim:no-6_5-face}
  Every $6$-face of $G$ is adjacent to at least two $5$-faces.
\end{claim}
\claimproof  Suppose to the contrary that $G$ contains a $6$-face adjacent to at least five $6$-faces. By Claim~\ref{claim:no-surrounded-6-faces}, this 6-face has exactly five adjacent 6-faces and one 5-face, as depicted in Figure~\ref{subfig:6_5-face}. The graph $G_1=G-\{a,b,c,\ldots,q,s,\ldots,x\}$ belongs to $\mathcal{C}$ and is exact square  5-colorable. By applying such a coloring of $G_1$ to $G$, we get a valid partial exact square coloring. Observe that the remaining uncolored vertices induce in $G\sharptwo$ the graph depicted in Figure~\ref{subfig:square_of_6_5-face}. After counting the number of available colors for each uncolored vertex, we properly color this graph using Lemma~\ref{lemma:6-wheel}:
\begin{enumerate}
\item since $|L(v)|\geq 3$ and $|L(t)|\geq 3$, by the pigeonhole principle assign to $v$ and $t$ the same color,
\item color vertices $j,e,a,c$ in this order,
\item color vertices $m,h,k,s,w,u,p$ by Lemma~\ref{lemma:6-wheel},
\item color vertices $x,o,g$ in this order,
\item color vertices $n,q,l,d,b,f,i$ by  Lemma~\ref{lemma:6-wheel}.
\end{enumerate}
Thus, the claim is proved.\hfill\smallqed\medskip

\begin{figure}[!ht]
\centering
\subfloat[A 6-face adjacent to exactly one 5-face]{
\label{subfig:6_5-face}
	\scalebox{0.9}{
		\begin{tikzpicture}
		\tikzstyle{vertex}=[draw, circle, fill=white, inner sep=0.55mm]
		 
		  \foreach \i in {0,...,1} {
			\foreach \a in {0,120,-120} {
			  \draw (3*\i,2*sin{60}*\i) -- +(\a:1);
			  \draw (3*\i+3*cos{60},2*sin{60}*\i+sin{60}) -- +(\a:1);
			  \draw (3*\i,4*sin{60}) -- +(\a:1);
			  \draw (0,2*sin{60}) -- +(\a:1);
			  \draw (3,0) -- +(\a:1);
			}
		  }
		  
		  \foreach \a in {0,120,-120} {
			\draw (3*cos{60},3*sin{60}) -- +(\a:1);
			\draw (3*cos{60},5*sin{60}) -- +(\a:1);
			\draw (3*cos{60},-1*sin{60}) -- +(\a:1);
		  }
		  
		  \draw (4,2*sin{60}) -- (4,0);
		  \draw (-1*cos{60},3*sin{60}) -- +(0:-1);
		  \draw (-1*cos{60},sin{60}) -- +(0:-1);
		  \draw (3*cos{60}+1,-1*sin{60}) -- +(-60:1);
		  \draw (4,0) -- +(-60:1);
		  \draw (4,4*sin{60}) -- +(60:1);
		  \draw (3*cos{60}+1,5*sin{60}) -- +(60:1);
		  
		  \node at (3*cos{60},5*sin{60}) [vertex] {a};
		  \node at (3*cos{60}+1,5*sin{60}) [vertex] {b};
		  \node at (0,4*sin{60}) [vertex] {c};
		  \node at (1,4*sin{60}) [vertex] {d};
		  \node at (3,4*sin{60}) [vertex] {e};
		  \node at (4,4*sin{60}) [vertex] {f};
		  \node at (-1*cos{60},3*sin{60}) [vertex] {g};
		  \node at (3*cos{60},3*sin{60}) [vertex] {h};
		  \node at (3*cos{60}+1,3*sin{60}) [vertex] {i};
		  \node at (7*cos{60}+1,3*sin{60}) [vertex] {j};
		  \node at (0,2*sin{60}) [vertex] {k};
		  \node at (1,2*sin{60}) [vertex] {l};
		  \node at (3,2*sin{60}) [vertex] {m};
		  \node at (4,2*sin{60}) [vertex] {n};
		  \node at (-1*cos{60},sin{60}) [vertex] {o};
		  \node at (3*cos{60},sin{60}) [vertex] {p};
		  \node at (3*cos{60}+1,sin{60}) [vertex] {q};
		  \node at (0,0) [vertex] {s};
		  \node at (1,0) [vertex] {t};
		  \node at (3,0) [vertex] {u};
		  \node at (4,0) [vertex] {v};
		  \node at (3*cos{60},-1*sin{60}) [vertex] {w};
		  \node at (3*cos{60}+1,-1*sin{60}) [vertex] {x};
		\end{tikzpicture}
	}
}\hspace*{0.4cm}
\subfloat[Exact square]{
\label{subfig:square_of_6_5-face}
	\scalebox{0.9}{
		\begin{tikzpicture}[scale=0.85]
		\tikzstyle{vertex}=[draw, circle, inner sep=0.55mm]
		
		  \node (a) at (3*cos{60},5*sin{60}) [vertex] {a};
		  \node (c) at (0,4*sin{60}) [vertex] {c};
		  \node (e) at (3,4*sin{60}) [vertex] {e};
		  \node (h) at (3*cos{60},3*sin{60}) [vertex] {h};
		  \node (j) at (7*cos{60}+1,3*sin{60}) [vertex] {j};
		  \node (k) at (0,2*sin{60}) [vertex] {k};
		  \node (m) at (3,2*sin{60}) [vertex] {m};
		  \node (p) at (3*cos{60},sin{60}) [vertex] {p};
		  \node (s) at (0,0) [vertex] {s};
		  \node (u) at (3,0) [vertex] {u};
		  \node (w) at (3*cos{60},-1*sin{60}) [vertex] {w};
		  
		  \draw (a)--(c)--(k)--(s)--(w)--(u) (j)--(e)--(a);
		  \draw (c)--(h)--(a) (k)--(p)--(h) (s)--(p)--(w)
		  		(m)--(p)--(u) (h)--(m)--(e) (j)--(m)
		  		(m)--(u) (k)--(h)--(e);
		
		  \begin{scope}[shift={(6,0)}]
			  \node (b) at (3*cos{60},5*sin{60}) [vertex] {b};
			  \node (f) at (0,4*sin{60}) [vertex] {f};
			  \node (d) at (3,4*sin{60}) [vertex] {d};
			  \node (i) at (3*cos{60},3*sin{60}) [vertex] {i};
			  \node (g) at (7*cos{60}+1,3*sin{60}) [vertex] {g};
			  \node (n) at (0,2*sin{60}) [vertex] {n};
			  \node (l) at (3,2*sin{60}) [vertex] {l};
			  \node (q) at (3*cos{60},sin{60}) [vertex] {q};
			  \node (o) at (7*cos{60}+1,sin{60}) [vertex] {o};
			  \node (v) at (0,0) [vertex] {v};
			  \node (t) at (3,0) [vertex] {t};
			  \node (x) at (3*cos{60},-1*sin{60}) [vertex] {x};
			  
			  \draw (b)--(f)--(n) (v)--(x)--(t) (b)--(d)--(g)--(o)--(t);
			  \draw (f)--(i)--(b) (n)--(q)--(i) (v)--(q)--(x) (g)--(l)--(o)
			  		(l)--(q)--(t) (d)--(l)--(i) (j)--(v)
			  		(l)--(t) (n)--(i)--(d)
			  		(u)--(n) (m)--(v);
		  \end{scope}
		\end{tikzpicture}
	}
}
\caption{A 6-face surrounded only by one 5-face and its exact square.}
\label{fig:6_5-face}
\end{figure}
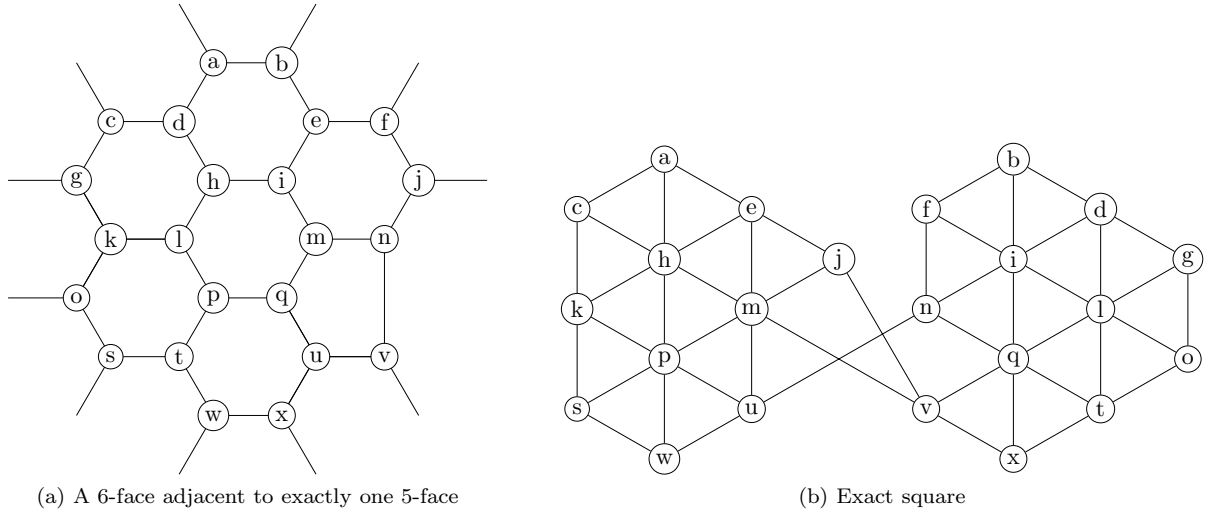

We can now deduce our last claim.

\begin{claim}

$G$ has at most $80$ vertices.
\end{claim}
\claimproof
Since there are exactly twelve $5$-faces in $G$ and each can be adjacent to at most five $6$-faces, there are at most sixty $6$-faces in $G$.
By Claims~\ref{claim:no-surrounded-6-faces} and~\ref{claim:no-6_5-face}, every $6$-face is adjacent to at least two $5$-faces. Thus there can be at most thirty $6$-faces in $G$. Since each vertex belongs to three faces, we conclude that $G$ has at most $\frac{30\cdot 6+12\cdot 5}{3}=80$ vertices.
\hfill\smallqed\medskip

To finish the proof, we have verified by computer using \textit{SageMath}~\cite{Sage} that all fullerene graphs with up to $80$ vertices are exact square  5-colorable (in fact they are exact square 4-colorable). Thus $G$ does not exist, and our theorem is proved. The list of these fullerene graphs is available online\footnote{\url{https://hog.grinvin.org/Fullerenes}} and was generated independently using the two computer programs \textit{fullgen}~\cite{BD97} and \textit{buckygen}~\cite{BGM12,GM15} (these two different programs use two different methods, and thus this list is trusted to be complete).
\end{proof}

\section{Conclusion}\label{sec:conclu}

We have studied the notion of coloring of exact square on some families of subcubic graphs. This fits into a larger frame of studying the chromatic number of exact distance $d$-power of graphs, which has recently got attention. However, this special case is also closely related to the well studied notion of injective coloring. While in exact square coloring the pair of vertices of an edge are allowed to have a same color, in injective coloring this is only allowed for those edges that are not in a triangle. 

A main question studied here is the maximum possible exact square chromatic number of the class of subcubic planar graphs.
Conjecture~\ref{conj} would imply that this is at most 5. There are examples of subcubic planar graphs which need five colors in any injective coloring, but all known such examples contain triangles and are exact square 4-colorable. These examples are built using $K_4^-$ whose vertices must receive four different colors in an injective coloring, however the exact square of this graph has only one edge and can be colored by two colors only. Thus 4 is also a possible answer for our question.

It can be easily checked that a minimum counterexample to the conjecture has no triangle. Thus, as a natural class, we considered cubic planar graphs each of whose faces is either a 5-cycle or 6-cycle. These are are the duals of planar triangulations having only vertices with degree 5 or 6. Known as fullerene graphs they are well studied. For this class of graphs we characterized the ones admitting an exact square 3-coloring, and we proved that they all admit an exact square 5-coloring.

Further evidence that the upper bound of 5 might be replaced by 4 is the fact that the exact square of any bipartite subcubic planar graph is 4-colorable. Our proof of this fact uses the Four Color Theorem. It would be interesting to give an independent proof. On the other hand, the example of an outerplanar subcubic graph whose exact square is 4-chromatic has a fair number of degree 2 vertices. This suggests that one might be able to build examples of subcubic planar graphs whose exact square is not 4-colorable.  

We would like to ask if Theorem~\ref{thm:SP-4-col} can be strengthened by proving the same upper bound of 4 for the injective coloring of $K_4$-minor-free graphs. The upper bound of 5 is proved in \cite{CHRW2012} and the upper bound of 4 on the class of triangle-free $K_4$-minor-free graphs follows from Theorem~\ref{thm:SP-4-col}.

\subsubsection*{Acknowledgements} We thank Franti\v{s}ek Kardo\v{s} for pointing us out the class of $(6,0)$-nanotubes and for insights on fullerene graphs.

This work was supported by the IFCAM project ``\textit{Applications of graph homomorphisms}'' (MA/IFCAM/18/39) and by the ANR project HOSIGRA (ANR-17-CE40-0022). R.N and P.V. were addionally supported by the ANR project DISTANCIA (ANR-17-CE40-0015). S.M. was partially supported by the Overseas Visiting Doctoral Fellowships (OVDF) program, File number: ODF\_2018\_001449, by Science \& Engineering Research Board (SERB).

\end{document}